\definecolor{ColorOne}{named}{MidnightBlue}
\definecolor{ColorTwo}{named}{Dandelion}
\definecolor{ColorThree}{named}{Plum}
\DeclareSymbolFontAlphabet{\amsmathbb}{AMSb}
\newcommand{\ub}[1]{\mkern 1.5mu\underline{\mkern-1.5mu#1\mkern-1.5mu}\mkern 1.5mu}
\newtheorem{axiom}{Axiom}
\newtheorem{definition}{Definition}[section]
\newtheorem{corollary}{Corollary}[section]
\newtheorem{theorem}{Theorem}[section]
\newtheorem{proposition}{Proposition}[section]
\newtheorem{lemma}{Lemma}[section]
\newtheorem{assumption}{Assumption}
\theoremstyle{remark}
\newtheorem{experiment}{Thought Experiment}
\begin{document}

\begin{frontmatter}

\title{(Non-)Commutative Aggregation}

\begin{aug}
%
%
%
\author[id=au1,addressref={add1}]{\fnms{Yuzhao}~\snm{Yang}\ead[label=e1]{allenyyz@bu.edu}}

\address[id=add1]{%
\orgdiv{Department of Economics},
\orgname{Boston University}}
\end{aug}
\support{I thank Barton Lipman, Jawwad Noor, and Larry Epstein for their guidance and support. I thank Andrew Ellis, Norio Takeoka, Peter Klibanoff, Marciano Siniscalchi, and the audiences in Science of Decision Making (the University of Hong Kong) for the valuable discussions.}
\begin{abstract}
\textit{Commutativity} is a normative criterion of aggregation and updating stating that the aggregation of expert posteriors should be identical to the update of the aggregated priors. I propose a thought experiment that raises questions about the normative appeal of Commutativity. I propose a weakened version of Commutativity and show how that assumption plays central roles in the characterization of linear belief aggregation, multiple-weight aggregation, and an aggregation rule which can be viewed as the outcome of a game played by ``dual-selves,'' Pessimism and Optimism. Under suitable conditions, I establish equivalences between various relaxations of Commutativity and classic axioms for decision-making under uncertainty, including \text{Independence}, \text{C-Independence}, and \text{Ambiguity Aversion}. 

\end{abstract}

\begin{keyword}
\kwd{Belief aggregation}
\kwd{Commutativity}
\kwd{Decision under uncertainty}
\kwd{Ambiguity}
\kwd{Dual self models}
\end{keyword}

\begin{keyword}[class=JEL] 
\kwd{D71}
\kwd{D81}
\end{keyword}

\end{frontmatter}

%




\section{Introduction}
\label{sec1}

A central question in the belief aggregation literature is to identify normatively appealing criteria for aggregating individual judgments. One candidate for such criteria states that the aggregation and updating of beliefs should \textit{commute}. That is, if all individuals observe a certain event, then aggregating individual posteriors (\textit{update-then-aggregate}) should produce the same outcome as updating the aggregation of the priors (\textit{aggregate-then-update}). While this condition goes by different names in the literature,\footnote{In the economics literature, Commutativity was named \textit{Independence of Irrelevant Alternatives} by \cite{molavi2018theory} and \textit{Dynamic Rationality} by \cite{dietrich2021fully}. For the problem of selecting beliefs from a set of possible priors, a variant of Commutativity was named \textit{Bayesian Consistency} by \cite{chambers2010bayesian}. In the statistic literature, the names for Commutativity include \textit{External Bayesianity} \citep{madansky1964externally}, \textit{Data Independence Preservation} \citep{mcconway1978combination}, \textit{Prior-to-Posterior Coherence} \citep{weerahandi1978pooling}, and \textit{Weak Likelihood Ratio Axiom} \citep{bordley1982multiplicative}.} I refer to it as \textit{Commutativity}. Commutativity is considered a principle for rational belief aggregation, with its various formulations attracting growing attention in the economics literature \citep{chambers2010bayesian, molavi2018theory, dietrich2021fully}.


The current paper argues that Commutativity is \textit{not} a normatively appealing criterion for the aggregation of heterogeneous priors. To illustrate this point, consider the timeline in Figure \ref{fig1}. A decision-maker (DM) aggregates the beliefs suggested by $n$ experts to formulate a preference relation. For example, the preference may be represented by subjective expected utility (SEU), in which case the DM essentially aggregates the experts' beliefs to form a subjective belief. The current paper considers a more general class of preferences, which allows the DM to perceive ambiguity. The experts start with their prior beliefs at $t=0$, an event $E$ is realized at $t=1$, and the experts update in response to $E$ at $t=2$. 

\begin{figure}[h]
\centering
\tikzset{every picture/.style={line width=0.75pt}} 

\begin{tikzpicture}[x=0.75pt,y=0.75pt,yscale=-1.05,xscale=1.05]

\shade[left color={rgb, 255:red, 230; green, 230; blue, 230}, right color=white] (132,100)  rectangle  (581.2,172);

\shade[left color={rgb, 255:red, 189; green, 189; blue, 189}, right color=white]  (351.1,115)  rectangle  (581.2,172);

\draw    (77,172) -- (578.2,172) ;
\draw [shift={(580.2,171.6)}, rotate = 179.95] [color={rgb, 255:red, 0; green, 0; blue, 0 }  ][line width=0.75]    (10.93,-3.29) .. controls (6.95,-1.4) and (3.31,-0.3) .. (0,0) .. controls (3.31,0.3) and (6.95,1.4) .. (10.93,3.29)   ;
\draw    (132,161.53) -- (132,172) ;
\draw    (281.1,161.33) -- (281.1,171.8) ;
\draw    (422.1,161.33) -- (422.1,171.8) ;

\draw (566,178.2) node [anchor=north west][inner sep=0.75pt]  [font=\small]  {$Time$};
\draw (83,145) node [anchor=north west][inner sep=0.75pt]  [font=\small] [align=left] {Expert Priors};
\draw (232,145) node [anchor=north west][inner sep=0.75pt]  [font=\small] [align=left] {Event $\displaystyle E$ realizes};

\draw (360,145) node [anchor=north west][inner sep=0.75pt]  [font=\small] [align=center] {Expert Posteriors};

\draw (114,177.2) node [anchor=north west][inner sep=0.75pt]  [font=\small] [align=left] {$\displaystyle t=0$};
\draw (262,178.2) node [anchor=north west][inner sep=0.75pt]  [font=\small] [align=left] {$\displaystyle t=1$};
\draw (403,179.2) node [anchor=north west][inner sep=0.75pt]  [font=\small] [align=center] {$\displaystyle t=2$};

\draw (152,120) node [anchor=north west][inner sep=0.75pt]   [align=center] {\textit{{\fontfamily{ptm}\selectfont Aggregate-then-Update}}};
\draw (360,120) node [anchor=north west][inner sep=0.75pt]   [align=center] {\textit{{\fontfamily{ptm}\selectfont Update-then-Aggregate}}};

\end{tikzpicture}
\caption{Timeline for Aggregation and Updating}
\label{fig1}
\end{figure}
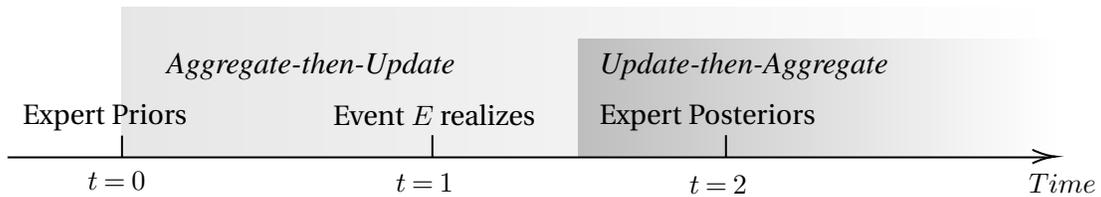

In the \textit{aggregate-then-update} approach, the DM observes the expert priors at $t=0$ and the event realizes at $t=1$, which are sufficient statistics for the expert posteriors at $t=2$. However, in the \textit{update-then-aggregate} approach, the aggregated belief is only dependent on the expert posteriors at $t=2$, which corresponds to the situation where the expert priors at $t=0$ are not observable. Such missing information about expert priors influences belief aggregation. I introduce a thought experiment to illustrate this idea concretely.

\begin{experiment} \label{exp1} (Missing Information about Credibility) A DM receives opinions about the severity of a disease outbreak ($\Omega=\{\text{No, Mild, Severe}\}$) from two experts, Alice and Bob. The prior beliefs of Alice and Bob are respectively $\mu_A=(.9,.1,0)$ and $\mu_B=(0,0,1)$. That is, Alice thinks most likely there is no outbreak, while Bob thinks there is surely a significant outbreak. Next, event $E=\{\text{Mild, Severe}\}$ is observed by both experts and the corresponding posteriors are $\mu_A(\cdot\mid E)=(0,1,0)$ and $\mu_{B}(\cdot\mid E)=(0,0,1)$.

In the \textit{update-then-aggregate} approach, the DM observes and aggregates $\mu_A(\cdot\mid E)$ and $\mu_{B}(\cdot\mid E)$, which are certain about ``Mild'' and ``Severe'' respectively. If the DM deems Alice and Bob as similarly credible, she should be equally confident about the two states. However, in \textit{aggregate-then-update}, when aggregating expert priors, the DM understands that Alice only attaches probability $.1$ to event $E$, so the realization of $E$ will suggest that Alice is a poor predictor. Therefore, conditional on $E$, the result of aggregating the priors should be less confident in the state ``Mild'' believed by Alice. This violates Commutativity. This behavior is reminiscent of the use of maximum likelihood updating \citep{gilboa1993updating} of a set of priors, where the DM discards the priors that assign a low probability to the realized event. 
\end{experiment}

I propose \textit{Weak Commutativity}, which tackles the concern raised by Thought Experiment \ref{exp1}. According to this axiom, the updating and aggregation operations commute \textit{if} the expert priors agree on the probability of every state outside the realized event $E$. In this case, the experts must have a consensus on the probability of $E$ \textit{ex-ante}, so the realization of $E$ no longer reveals the credibility of the experts. This eliminates the concern raised by Thought Experiment \ref{exp1}. 

In addition to tackling the thought experiment, the constraint imposed by Weak Commutativity also facilitates the definition of the update of the DM's preference formulated by aggregating expert suggestions. The current paper assumes the DM aggregates the beliefs suggested by $n$ experts to form a monotone and continuous preference, which allows her to perceive ambiguity due to disagreement among the experts.\footnote{
Ambiguity arises when the DM finds it difficult to aggregate conflicting information from various sources \citep{einhorn1985ambiguity,viscusi1992bayesian,cabantous2007ambiguity}. This is well-known in the literature of modeling preference aggregation \citep{cres2011aggregation, hill2011unanimity, nascimento2012ex,gajdos2013decisions, alon2016utilitarian,stanca2021smooth}.} However, the definition of updating ambiguity-sensitive preferences remains a topic of ongoing debate within the literature \citep{al2009ambiguity,siniscalchi2009two}, which poses difficulties for defining Commutativity. In Section \ref{sec31}, I show that in the situation where Weak Commutativity applies, the set of expert suggestions $\{\mu_1,...,\mu_n\}$ must be \textit{rectangular} with respect to the partition $\{E,E^c\}$ in the sense of \cite{epstein2003recursive}. Under \text{rectangularity}, the canonical dynamic consistency condition \citep{ghirardato2002revisiting} guarantees well-defined conditional preferences for MEU and more general classes of preferences \citep{chandrasekher2022dual}, and the prior-by-prior and maximum likelihood updating of a set of beliefs coincide. As a result, under the constraint imposed by Weak Commutativity, the definition of updating the DM's preference should be uncontroversial.

The main contribution of this paper is twofold. First, I characterize the aggregation rules that satisfy Weak Commutativity. Second, I present a series of propositions that establish connections between (non-)commutativity axioms of belief aggregation and the axioms governing decision-making under uncertainty, including \textit{Independence}, \textit{C-Independence} and \textit{Uncertainty Aversion}. Throughout the paper, I make two unanimity assumptions about the aggregation rule. \textit{Pareto} is analogous to the classic unanimity condition proposed by \cite{harsanyi1955cardinal}. \textit{Monotonicity} states the DM's utility evaluation of an Anscombe-Aumann act $f$ increases if we change the experts' beliefs so that the expected utility of $f$ rises for all experts.

I first introduce two important special cases of the aggregation rules. First, suppose the DM adheres to Savage's P2 (sure-thing principle), which rules out behaviors akin to \cite{ellsberg1961risk}. In this case, Weak Commutativity characterizes linear opinion pooling \citep{stone1961opinion, degroot1974reaching}. The DM has a \textit{credibility weight} (weight), which is a distribution $\lambda=(\lambda_1,\lambda_2,...,\lambda_n)$ over the set of experts. For $i=1,2,...,n$, the number $\lambda_i\in[0,1]$ represents the confidence assigned to expert $i$. If the vector of beliefs of the experts is $\bm{\mu}=(\mu_1,...,\mu_n)$, the DM calculates the (linear) aggregated belief $p_{\lambda}(\bm{\mu})=\sum_{i=1}^n \lambda_i \mu_i$, and the corresponding utility evaluation of an act $f$ is equal to $\int_\Omega u\circ f d p_{\lambda}(\bm{\mu})$. This result contrasts with the existing literature that emphasizes conflicts between Commutativity and linear opinion pooling \citep{genest1984conflict} and that shows Commutativity justifies geometric opinion pooling \citep{molavi2018theory, dietrich2021fully}. 

Second, suppose the DM perceives ambiguity about the credibility of the experts, and her aggregated preference follows the Uncertainty Aversion axiom in \cite{gilboa1989maxmin}. In this case, Weak Commutativity characterizes a generalization of linear opinion pooling, where the DM has a set $\Lambda$ of weights for the experts, and the utility evaluation of an act $f$ is equal to $\min_{\lambda\in\Lambda}\int_\Omega u\circ f dp_{\lambda}(\bm{\mu}).$

In general, Weak Commutativity characterizes a novel belief aggregation model, \textit{dual-self aggregation}, whose interpretation is reminiscent of \cite{chandrasekher2022dual}. The DM has a collection $\bm{\Lambda}$ of sets of weights assigned to the experts. The utility evaluation of act $f$ is given by
\begin{align}\label{intromain}
\max_{\Lambda\in \bm{\Lambda}}\min_{\lambda\in \Lambda} \int_{\Omega} u\circ f dp_{\lambda}(\bm{\mu}).
\end{align}
The representation (\ref{intromain}) can be understood as a model where the DM optimally determines how much to learn from each expert $1,2,...,n$. The optimal weight $\lambda^*$ that solves (\ref{intromain}) is determined in an intrapersonal game played by Optimism and Pessimism. Optimism desires to learn more from the optimistic experts, which generates belief-based felicity; Pessimism desires to learn more from the pessimistic experts, which leads to practical and prudent decision-making. Formally, for every act $f$, Optimism chooses a set of weights $\Lambda\in\bm{\Lambda}$ with the goal of maximizing the expected utility to $f$, and then Pessimism chooses a weight $\lambda$ from $\Lambda$ with the goal of minimizing the expected utility to $f$. Section \ref{sec324} discusses how (\ref{intromain}) accommodates behavioral patterns that are unexplained by linear or multiple weight aggregation.

My second contribution lies in a series of propositions that link various axioms on the (non-)commutativity of belief aggregation to classic axioms on decision-making under uncertainty. The statements below all presume Pareto and Monotonicity. First, notice that (\ref{intromain}) can be written as dual-self expected utility \citep{chandrasekher2022dual}, which satisfies the C-Independence axiom \citep{gilboa1989maxmin}. It can be proved that Weak Commutativity is satisfied if and only if the DM's preference formed by aggregating expert preferences satisfies C-Independence. This result characterizes the family of invariant biseparable preferences \citep{ghirardato2004differentiating} in the setting of belief aggregation.

Second, recall from the discussion above that, in update-then-aggregate, the DM only learns expert posteriors and direct information about expert priors is missing. In Section \ref{sec51}, I introduce Thought Experiment \ref{exp2} that illustrates how this ``missing information'' problem leads the DM to perceive uncertainty about the reason for the \textit{ex-post} disagreement among the experts. If the DM is averse to this form of uncertainty, then she should be more pessimistic in the update-then-aggregate approach. To capture this form of uncertainty aversion, I introduce \textit{Pessimism to Update-then-Aggregate}, a non-commutativity condition stating that update-then-aggregate yields a more pessimistic outcome -- in terms of assigning (weakly) lower certainty equivalents to each act -- compared to aggregate-then-update. I show that {Pessimism to Update-then-Aggregate} implies the Uncertainty Aversion axiom in \cite{gilboa1989maxmin} if the size of the state space is weakly greater than 4.

Third, I introduce \textit{Moderate Commutativity}, which is a joint strengthening of Weak Commutativity and Pessimism to Update-then-Aggregate. The axiom captures a DM who has a neutral attitude to the uncertainty brought by the ``missing information'' problem in the update-then-aggregate. I establish the equivalence between Moderate Commutativity and the Independence axiom in \cite{anscombe1963definition}. Finally, I show that the strongest form of Commutativity leads to an impossibility result. Any aggregation rule that satisfies Pareto, Monotonicity, and Commutativity must be a \textit{dictatorship}, where the DM's subjective belief always coincides with the suggestion provided by one of the experts.


\subsection{Literature Review}
\label{litreview}

\noindent \textsc{Commutativity and Opinion Pooling} The Commutativity assumption was first introduced by \cite{madansky1964externally} (under the name \textit{External Bayesianity}) and was widely discussed by the literature on opinion pooling in statistics \citep{mcconway1978combination, weerahandi1978pooling, bordley1982multiplicative, genest1984characterization, genest1984conflict}. In the economics literature, in \cite{molavi2018theory}, Commutativity was named \textit{IIA} and is the central assumption characterizing the geometric pooling rule, where for beliefs $\mu_1,...,\mu_n$ provided by the experts, the result of belief aggregation is $\mu(\omega)=\Pi_{i=1}^n{\mu_i(\omega)^{\alpha_i}}/C$ for every $\omega\in \Omega$, where $\{\alpha_i\}_{i=1}^n$ are weights that sum up to one and $C$ is a normalization constant. Named \textit{Dynamic Rationality} by \cite{dietrich2021fully}, Commutativity was again key to the characterization of a collective choice model with geometric opinion pooling. Two variants of Commutativity deliver impossibility results for the widely-used linear opinion pooling model \citep{genest1984conflict} and the problem of selecting beliefs from a set of possible priors \citep{chambers2010bayesian}. This paper emphasizes that Commutativity is \textit{not} an appealing assumption for the aggregation of heterogeneous priors and proposes Weak Commutativity as a reasonable relaxation, which turns out to be consistent with linear opinion pooling.

\noindent \textsc{Aggregation and Ambiguity} This paper contributes to the literature on belief aggregation under uncertainty. \cite{cres2011aggregation} considered the problem of aggregating MEU experts with different sets of beliefs. Relatedly, \cite{gajdos2013decisions} and \cite{hill2011unanimity} studied the aggregation of sets of beliefs in a multiple-profile setting. \cite{nascimento2012ex} allows the experts to have general perceptions and attitudes to ambiguity. \cite{alon2016utilitarian} considered a society of SEU members with heterogeneous states and characterized a model where the society's preference has an MEU representation. \cite{stanca2021smooth} characterized an additively separable aggregation rule that resembles the smooth ambiguity model \citep{klibanoff2005smooth}. All these models assume the DM exhibits a certain form of aversion to the uncertainty brought by expert disagreements,\footnote{Specifically, these assumptions for the aversion to the uncertainty brought by expert disagreements include Expert Uncertainty Aversion (EUA) in \cite{cres2011aggregation}, Disagreement Aversion in \cite{gajdos2008attitude}, Convexity in \cite{nascimento2012ex}, and Ambiguity Aversion in \cite{stanca2021smooth}. These assumptions guarantee that the shape of the aggregator of expert evaluations must be concave. \cite{hill2011unanimity} did not impose a similar assumption on the aggregator but assumed the aggregated preference to be ambiguity averse.} which implies a negative attitude towards ambiguity (i.e. a preference for hedging). In this paper, the key assumption (\text{Weak Commutativity}) characterizes a dual-self aggregation model that accommodates a range of intermediate ambiguity attitudes. Section \ref{sec324} shows how this model accommodates patterns of ambiguity attitudes that cannot be explained by existing models. 

Another related paper is \cite{amarante2021aggregation}, which also allows for intermediate ambiguity attitudes in belief aggregation. In \cite{amarante2021aggregation}, the DM calculates the average of expert opinions using a capacity instead of a probability measure (or, weight) over the set of experts. Besides the difference in the representation, the current paper focuses on a different set of axioms on Commutativity and establishes a series of propositions that link them to classic axioms on decision-making under uncertainty.

The rest of the paper is organized as follows. Section \ref{sec2} presents the setup of my model. Section \ref{sec3} introduces Weak Commutativity and its role in the characterization of various aggregation models. Section \ref{sec5} links various aggregation axioms to classic axioms on decision-making under uncertainty. Section \ref{sec6} discusses the unanimity assumptions employed by this paper. 
\section{Setup}
\label{sec2}

I adopt the \cite{anscombe1963definition} setup. Let $X$ be a subset of a linear space (for example, a set of simple lotteries). Let $\Omega$ with $|\Omega|\geq 3$ be a finite state space and let $\Delta\Omega$ denote the distributions over $\Omega$. Let $\mathcal{E}=2^\Omega-\{\emptyset\}$ be the collection of all non-empty events, with generic element $E$. The set of acts $\mathcal{F}$ is the set of all functions $f:\Omega\rightarrow X$. In the usual abuse of notation, $x\in X$ denotes the constant act giving payoff $x$ in every state. Fix a binary relation $\succsim_0$ over $X$ that is represented by a linear and \textit{unbounded} $u(\cdot): X\rightarrow \mathbb{R}$.\footnote{The unboundedness of $u(\cdot)$ is equivalent to the following assumption on $\succsim_0$: for each $x,y\in X$ such that $x\succ y$, there exists $z,z'\in X$ such that
$
\frac{1}{2}z+\frac{1}{2}y\succsim_0 x\succ y\succsim_0 \frac{1}{2}x+\frac{1}{2}z'$.} The binary relation $\succsim_0$ is interpreted as the taste of the DM. A binary relation $\succsim$ over $\mathcal{F}$ is \textit{consistent with} $\succsim_0$, if for every $x,y\in X$, $x\succsim y$ if and only if $x\succsim_0 y$. Denote $\mathcal{R}_0$ as the set of all the binary relations over $\mathcal{F}$ that are consistent with $\succsim_0$.

\begin{definition}
A binary relation $\succsim$ over $\mathcal{F}$ is a continuous and monotone weak order if for every $f,g\in\mathcal{F}$ , the following conditions hold:
\begin{enumerate}
    \item \textit{Weak Order}: $\succsim$ is complete and transitive;
    \item \textit{Dominance}: if
    $
    f(\omega)\succsim g(\omega) \text{ for every } \omega\in\Omega
    $, then $f\succsim g$;
    \item \textit{Continuity}: the sets  $\{f'\in\mathcal{F}\mid f'\succsim f\}$ and $\{f'\in \mathcal{F}\mid f\succsim f'\}$ are closed.
\end{enumerate}
Let $\mathcal{R}$ be the collection of all continuous and monotone weak orders in $\mathcal{R}_0$.
\end{definition}

The DM forms her preference $\succsim \in\mathcal{R}$ by aggregating the information from the beliefs suggested by a set of experts. Formally, there is a finite set $N=\{1,..., i, ..., n\}$ experts. Each expert $i$ has a probability distribution $\mu_i\in \Delta\Omega$, which can be interpreted as a \textit{suggestion} that the DM should make the choices that maximize the SEU function $\int_{\Omega} u\circ f d\mu_i$, where $u:X\rightarrow \mathbb{R}$ is the DM's utility index. I call $\mu_i$ the \textit{suggestion} from expert $i$. A \textit{suggestion profile} is an element $\bm{\mu}=(\mu_1,...,\mu_n)\in (\Delta\Omega)^n$. At the center of my analysis is an \textit{aggregation rule} $\sigma$ that assigns a continuous and monotone preference to each suggestion profile. 

\begin{definition}\label{ruledef}
An aggregation rule is a function $\sigma:(\Delta\Omega)^n\rightarrow \mathcal{R}$. For every $\bm{\mu}\in (\Delta\Omega)^n$,
\begin{enumerate}
    \item $\succsim_{\bm{\mu}}$ denotes the binary relation $\sigma(\bm{\mu})$;
    \item $U_{\bm{\mu}}:\mathcal{F}\rightarrow \mathbb{R}$ is the representation for $\succsim_{\bm{\mu}}$ such that $U_{\bm{\mu}}(x)=u(x)$ for all $x\in X$. 
\end{enumerate}
\end{definition}

The aggregation rule $\sigma$ maps each suggestion profile $\bm{\mu}$ to a preference in $\mathcal{R}$. This multi-profile approach is necessary for the study of Commutativity, in which the suggestion profile $\bm{\mu}$ is changed when the experts update on the realized event. To understand the utility representation $U_{\bm{\mu}}:\mathcal{F}\rightarrow \mathbb{R}$, notice that for every act $f$, there exists a payoff $x^*\in X$ such that $f\sim_{\bm{\mu}}x^*$ by continuity. The representation $U_{\bm{\mu}}$ assigns utility $u(x^*)$ to the act $f$. It is easy to verify that $U_{\bm{\mu}}$ is uniquely defined and indeed represents $\succsim_{\bm{\mu}}$.

Next, I introduce a regularity condition that I impose on the aggregation rule $\sigma$ throughout this paper, which can be decomposed into two unanimity conditions stating the DM respects the consensus of the experts. For every $\mu\in\Delta\Omega$, I define the corresponding expected utility function by $EU_{\mu}(f)=\int_{\Omega} u\circ f d\mu$.

\begin{assumption} (Regularity) \label{assumption1}
For all suggestion profiles $\bm{\mu},\bm{\mu}'\in (\Delta\Omega)^n$, 
\begin{enumerate}
    \item (Pareto) if $EU_{\mu_i}(f)\geq EU_{\mu_i}(g)$ for every $i\in N$, then $U_{\bm{\mu}}(f)\geq U_{\bm{\mu}}(g)$;
    \item (Monotonicity) if $EU_{\mu'_i}(f)\geq EU_{\mu_i}(f)$ for every $i\in N$, then $U_{\bm{\mu'}}(f)\geq U_{\bm{\mu}}(f)$. 
\end{enumerate}
\end{assumption}

As mentioned above, for each expert $i$, the distribution $\mu_i$ can be interpreted as the suggestion for the DM to make choices that maximize the SEU function $EU_{\mu_i}(f)=\int_{\Omega}u\circ f d\mu_i$, where $u: X\rightarrow \mathbb{R}$ is the DM's utility index. Then, there is a one-to-one correspondence between the beliefs $\mu\in\Delta\Omega$ and the preferences represented by $EU_{\mu}(\cdot)$. Essentially, the aggregation rule can be viewed as a mapping from the profiles of expert expected utilities to the preference of the DM. In \text{Pareto}, $EU_{\mu_i}(f)\geq EU_{\mu_i}(g)$ for every $i\in N$ means all the experts suggest act $f$ should be chosen over $g$. In this case, \text{Pareto} states that the DM respects the consensus that $f$ is a better choice than $g$. While \text{Pareto} is a widely applied normative assumption for belief aggregation\footnote{More generally, the aggregation of preferences with identical tastes and heterogeneous beliefs.} \citep{cres2011aggregation, gajdos2008attitude, gajdos2013decisions, hill2011unanimity, nascimento2012ex, stanca2021smooth}, it also puts implicit restrictions on the aggregation model \citep{bommier2021disagreement}. Section \ref{sec6} discusses \text{Pareto} in more detail.

In the second part of Assumption \ref{assumption1} (\text{Monotonicity}), I compare the suggestion profile $\bm{\mu}$ to $\bm{\mu}'$. Compared to $\bm{\mu}$, all the experts attach a higher utility evaluation to act $f$ according to $\bm{\mu}'$, that is, all the experts are more optimistic about the value of act $f$.\footnote{\text{Monotonicity} is not a ``cardinal'' property, that is, it can be written in terms of the preferences represented by the SEU functions $EU_{\mu_i}$ and $EU_{\mu_i'}$. Denote the preferences as $\succsim$ and $\succsim'$ respectively and notice that $EU_{\mu'}(f)\geq EU_{\mu}(f)$ if and only if there exists an $x\in X$, such that $f\succsim' x$ and $x\succsim f$.} In this case, the DM also becomes more optimistic about the value of $f$. The key difference between \text{Pareto} and \text{Monotonicity} is that the latter is a comparative statics assumption on the DM's behavior in response to exogenous changes in the suggestions. This condition is analogous to the Weak Setwise Function assumption in \cite{mcconway1981marginalization}, the Weak Independence assumption in \cite{hill2011unanimity}, the Dominance assumption in \cite{gajdos2013decisions}, and a relaxation of the Aggregation of Expert Opinion (AEO) assumption in \cite{amarante2021aggregation}.\footnote{The interpretation of \text{Monotonicity} is related to the axiom of "positive association of social and individual values" in \cite{arrow1950difficulty}, which asserts ``the social welfare function is such that the social ordering responds positively to alterations in individual values or at least not negatively''.} I discuss the potential restrictions imposed by \text{Monotonicity} in Section \ref{sec6}.

For every event $E,F\in\mathcal{E}$ and distribution $\mu\in\Delta\Omega$ such that $\mu(E)>0$, define $\mu(F\mid E)=\mu(E\cap F)/\mu(E)$. The conditional distribution $\mu^E$ is defined such that $\mu^E(G)=\mu(G\mid E)$ for every $G\in\mathcal{E}$. For $\bm{\mu}\in(\Delta\Omega)^{n}$, I define $\mathcal{E}(\bm{\mu})=\{E\in\mathcal{E}\mid \mu_i(E)>0 \text{ for every }i=1,2,...,n\}$, which is the set of events on which the conditional distributions are well-defined. For $E\in\mathcal{E}(\bm{\mu})$, I define $\bm{\mu}^E=(\mu_1^E,...,\mu_n^E)$ as the profile of conditional distributions. 

For acts $f,g\in\mathcal{F}$, the act $fEg\in\mathcal{F}$ is defined by $[fEg]({\omega})=f(\omega)$ if $\omega\in E$ and $[fEg](\omega)=g(\omega)$ if $\omega\notin E$. Next, I define the update of the DM's preference $\succsim_{\bm{\mu}}\in\mathcal{R}$ following the standard notion in \cite{ghirardato2002revisiting}. 

\begin{definition} 
\label{condpref}
For $\bm{\mu}\in(\Delta\Omega)^n$ and $\succsim_{\bm{\mu}}\in\mathcal{R}$, an act $f$ is \textit{preferred over} $g$ \textit{conditional on} $E\in \mathcal{E}(\bm{\mu})$, denoted by $f\succsim_{\bm{\mu}}^E g$, if $
fEh\succsim_{\bm{\mu}} gEh \text{ for every } h\in \mathcal{F}.
$
\end{definition}

Definition \ref{condpref} is a dynamic consistency criterion for updating preferences.\footnote{To see it more concretely, notice that $f\succsim^E_{\bm{\mu}} g \text{ and } f\succsim^{E^c}_{\bm{\mu}} g\implies  f\succsim_{\bm{\mu}} g$, the dynamic consistency requirement over the information partition $\{E,E^c\}$ in \cite{sarin1998dynamic}, \cite{epstein2003recursive} and \cite{ellis2018dynamic}.} The definition provides foundations for Bayesian updating if the unconditional preference admits an SEU representation \citep{ghirardato2002revisiting}. However, it is well-known in the updating-under-ambiguity literature that Definition \ref{condpref} can be ``too strong'' for updating generic preference relations, as the conditional preference can be \textit{incomplete} due to the violation of the sure-thing principle. As will be clear in the next section, this problem is avoided in our main assumption (Axiom \ref{weak}). This is because Axiom \ref{weak} only puts a restriction on conditional preferences in situations where the set of expert suggestions $\{\mu_1,...,\mu_n\}$ satisfies \textit{rectangularity} with respect to the partition $\{E,E^c\}$ \citep{epstein2003recursive}. For MEU \citep{epstein2003recursive} and more general classes of preferences \citep{chandrasekher2022dual}, rectangularity is known to guarantee well-defined conditional preferences generated by Definition \ref{condpref}.

Sections \ref{sec3} and \ref{sec5} provide formal definitions of the relevant axioms and discuss their implications in belief aggregation.
\section{Weakly Commutative Aggregation Rules}
\label{sec3}

This Section introduces \text{Weak Commutativity} and discusses its implications. Section \ref{sec31} formally defines the assumption and discusses how it circumvents the ``missing information'' criticism raised by Thought Experiment \ref{exp1}. Section \ref{sec32} shows that Weak Commutativity characterizes various rules for aggregating heterogeneous priors, including linear aggregation, multiple-weight aggregation, and dual-self aggregation.

\subsection{Defining Weak Commutativity} 
\label{sec31}

This section defines the key assumption, Weak Commutativity, which requires that the updating and aggregation operations commute \textit{only} when the experts agree on the probability of every state outside of the realized event $E$. Definition \ref{def41} below formally states this restriction on the suggestion profile $\bm{\mu}\in (\Delta\Omega)^n$.

\begin{definition}\label{def41}
    For $\bm{\mu}\in(\Delta\Omega)^n$ and  and $E\in\mathcal{E}(\bm{\mu})$, expert disagreement is restricted within $E$ if 
    \begin{align} \label{restrict}
    \mu_1(\omega)=...=\mu_n(\omega) \text{ for every } \omega\notin E.
    \end{align}
\end{definition}

There are two reasons why I introduce this concept. First, if expert disagreement is restricted within $E$, then the commutativity between the updating and aggregation operations is free from the critiques raised in Thought Experiment \ref{exp1}. Recall that in Thought Experiment \ref{exp1}, Alice and Bob attach different probabilities to the realized event $E$ \textit{ex-ante}. In aggregate-then-update, the DM is aware of this and adjusts the credibility of the experts accordingly, while in update-then-aggregate, this information is missing. However, if expert disagreement is restricted within $E$, then all the experts attach identical priors to this event, that is, 
$
\mu_1(E)=\mu_2(E)=...=\mu_n(E).
$ Therefore, in aggregate-then-update, even though the DM knows $\mu_1(E),..., \mu_n(E)$, no further inference about expert credibility can be made from it. 

Second, if expert disagreement is restricted within $E$, then if the DM takes an arbitrary subset $P^*$ of the convex hull of the expert suggestions $Conv(\{\mu_1,...,\mu_n\})$ as her set of beliefs, $P^*$ must satisfy \textit{rectangularity} \citep{epstein2003recursive}. Rectangularity is a type of stochastic independence condition that helps to avoid the negative result associated with the dynamic consistency requirement in Definition \ref{condpref}. For example, when $P^*$ is rectangular, if the DM has MEU preference $\succsim$ with $P^*$ as the set of beliefs, then the conditional preference $\succsim^E$ is the MEU preference with the prior-by-prior updating of $P^*$ as the set of beliefs.

To show that $P^*$ must satisfy rectangularity, I first formally define this property. A set of distributions $P$ is \textit{rectangular} with respect to $\{E,E^c\}$ if for every $p^1,p^2,p^3\in P$ there exists a $p^4\in P$ such that for every $F\in\mathcal{E}$,
\begin{align}\label{rectangularity}
p^4(F)=p^3(E)p^1(F\mid E)+p^3(E^c)p^2(F\mid E^c).
\end{align}
Take arbitrary distributions $p^1,p^2,p^3$ from $P^*$, and take $F\in\mathcal{E}$. Since disagreement is restricted within $E$, the experts must agree that $Prob(E)=\alpha$ for some $\alpha\in (0,1)$. Moreover, the experts agree on the probability of every state in $E^c$, so they must agree that $Prob(F\mid E^c)=\beta$ for some $\beta\in [0,1]$. Since $P^*\subset Conv(\{\mu_1,...,\mu_n\})$,  $p^k(E)=\alpha$ and $p^k(F\mid E^c)=\beta$ for every $k=1,2,3,4$. Therefore, (\ref{rectangularity}) states that $p^4(F)=\alpha p^1(F\mid E)+(1-\alpha)\beta$, which is satisfied if we set $p^4=p^1$. As a result, rectangularity is satisfied.

Next, I formally state the axiom of Weak Commutativity.

\begin{axiom} (Weak Commutativity)    \label{weak}
    For $\bm{\mu}\in(\Delta\Omega)^n$ and $E\in\mathcal{E}(\bm{\mu})$, if expert disagreement is restricted within $E$, then 
\begin{align}\label{a1res}
f\succsim_{\bm{\mu}}^E g\iff f\succsim_{\bm{\mu}^E} g.
\end{align}
\end{axiom}

Weak Commutativity only requires that updating and aggregation \textit{commute}  when expert disagreement is restricted within the realized event $E$. The reason for introducing Weak Commutativity is twofold. First, as discussed above, this axiom relaxes Commutativity in a way that circumvents the concerns raised in Thought Experiment \ref{exp1}. Second, Weak Commutativity is a normatively appealing criterion for aggregating heterogeneous priors. Suppose expert disagreement is restricted within $E$. In that case, if the DM aggregates the expert priors first, then she should only pay attention to event $E$, because all the experts completely agree outside of $E$. On the other hand, if the experts update on $E$ first, the DM should still only focus on $E$ when aggregating expert posteriors, because it is impossible for $E^c$ to realize. As a result, aggregate-then-update should produce the same outcome as update-then-aggregate.




The next section shows that Weak Commutativity characterizes various rules for aggregating heterogeneous priors.

\subsection{Weak Commutativity and Aggregation Rules}
\label{sec32}

This section characterizes the aggregation rules that satisfy Weak Commutativity. For expositional purposes, I consider a series of special cases that build up to the general case, while I employ the opposite order in the proofs in Appendix \ref{seca} and \ref{secb}. Sections \ref{sec321} and \ref{sec322} discuss the implications of Weak Commutativity in ambiguity-free and ambiguity-averse settings respectively. Section \ref{sec323} shows that, in general, Weak Commutativity characterizes a dual-self belief aggregation model, where the DM optimally chooses the weight $\lambda^*\in \Delta N$ that determines how much to learn from each expert.
\subsubsection{Linear Aggregation Rules} \label{sec321}

This section presents a representation theorem that relates Weak Commutativity to the classic model of linear belief aggregation \citep{stone1961opinion, degroot1974reaching}. I first define linear aggregation rules. Let $\Delta N$ denote the distributions over $N$. A \text{credibility weight} (\textit{weight}) is a distribution $\lambda=(\lambda_i)_{i=1}^n\in\Delta N$, which reflects the confidence assigned to each expert. For each suggestion profile $\bm{\mu}$, the DM calculates the \text{linearly aggregated belief} with respect to weight $\lambda$, which is defined by $p_{\lambda}(\bm{\mu})=\sum_{i=1}^n \lambda_i \mu_i$. Formally, $\sigma: (\Delta\Omega)^n\rightarrow \mathcal{R}$ is a \textit{linear aggregation rule}, if there exists a weight $\lambda\in\Delta N$, such that for every $\bm{\mu}\in(\Delta\Omega)^n$, 
\begin{align}\label{linearagg}
U_{\bm{\mu}}(f)=\int_\Omega u\circ f d p_{\lambda}(\bm{\mu}).
\end{align}
Since \cite{degroot1974reaching}, the linear aggregation rule has been widely applied in the literature on non-Bayesian social learning. Theorem \ref{ambfree} identifies the role played by Weak Commutativity in the characterization of this family of aggregation rules. First, a preference relation $\succsim$ satisfies assumption P2 (also known as \textit{sure-thing principle}), if for every acts  $f,g,h\in\mathcal{F}$, for every $E'\subset \Omega$, it holds that $fE'g\succsim g \iff fE'h\succsim gE'h$. 

\begin{theorem}\label{ambfree}
For a regular aggregation rule $\sigma$ such that $\succsim_{\bm{\mu}}$ satisfies P2 for every $\bm{\mu}\in(\Delta\Omega)^n$, Weak Commutativity holds if and only $\sigma$ is a linear aggregation rule. 
\end{theorem}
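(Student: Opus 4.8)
The plan is to prove both directions, placing the bulk of the work in the "only if" part; the "if" part is a direct computation.

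For the \emph{if} direction, suppose $\sigma$ is a linear aggregation rule with weight $\lambda$. Then each $\succsim_{\bm{\mu}}$ is the SEU preference with belief $p_\lambda(\bm{\mu})=\sum_i\lambda_i\mu_i$, so Regularity and P2 are immediate, since SEU obeys the sure-thing principle and $EU_{p_\lambda(\bm{\mu})}(f)=\sum_i\lambda_i EU_{\mu_i}(f)$ is nondecreasing in each expert's evaluation. For Weak Commutativity, suppose expert disagreement is restricted within $E$, so that every $\mu_i(E)=\alpha$ for a common $\alpha\in(0,1)$. Then for any $F$ the Bayesian update of the pooled belief satisfies $p_\lambda(\bm{\mu})(F\mid E)=\frac{\sum_i\lambda_i\mu_i(F\cap E)}{\alpha}=\sum_i\lambda_i\mu_i(F\mid E)=p_\lambda(\bm{\mu}^E)(F)$. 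The common $\alpha$ cancels, so the conditional belief of $\succsim_{\bm{\mu}}$ equals the pooled belief of $\bm{\mu}^E$, giving $f\succsim_{\bm{\mu}}^E g\iff f\succsim_{\bm{\mu}^E}g$.

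For the \emph{only if} direction, I would first fix a representation. For each $\bm{\mu}$, P2 (the sure-thing principle for \emph{every} event) together with Weak Order, Continuity, Dominance, and consistency with the linear unbounded $u$ yields an additively separable representation $U_{\bm{\mu}}(f)=\sum_\omega\phi^{\bm{\mu}}_\omega(u(f(\omega)))$ (Debreu/Wakker additive conjoint measurement, where $|\Omega|\geq 3$ is used), normalized via constant acts to $\sum_\omega\phi^{\bm{\mu}}_\omega(s)=s$. A key point is that P2 alone does \emph{not} deliver SEU: the $\phi^{\bm{\mu}}_\omega$ may be genuinely nonlinear, so upgrading additive separability to linearity is exactly what the remaining axioms must supply. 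By Pareto, $U_{\bm{\mu}}(f)$ depends on $f$ only through the expert-evaluation vector, $U_{\bm{\mu}}(f)=W_{\bm{\mu}}\bigl((EU_{\mu_i}(f))_i\bigr)$ with $W_{\bm{\mu}}$ nondecreasing. The heart of the argument is to show that $\succsim_{\bm{\mu}}$ is SEU \emph{whenever} disagreement is restricted within some proper event $E$. For such a profile the experts share the mass $\mu_i(E)=\alpha$ and agree on $E^c$, which forces the map $t\mapsto(\mu_i\cdot t)_i$ (with $t=u\circ f$) to have a nontrivial kernel: the uniform direction on $E$, compensated on $E^c$, furnishes for each state a kernel direction with a nonzero component there. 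Equating the additive form with the Pareto form and moving $t$ along these kernel directions forces each $\phi^{\bm{\mu}}_\omega$ to be affine (a Pexider-type separation-of-variables argument), so $\succsim_{\bm{\mu}}$ is SEU with some belief $q(\bm{\mu})$, and a separating-hyperplane argument from Pareto places $q(\bm{\mu})\in\mathrm{Conv}\{\mu_1,\dots,\mu_n\}$, i.e. $q(\bm{\mu})=\sum_i\lambda_i(\bm{\mu})\mu_i$. Weak Commutativity now enters: since $\bm{\mu}^E$ is again a restricted-disagreement profile it is SEU, and the conditioning identity above gives $\sum_i\lambda_i(\bm{\mu})\mu^E_i=q(\bm{\mu}^E)=\sum_i\lambda_i(\bm{\mu}^E)\mu^E_i$; together with Monotonicity (which, on SEU profiles, constrains the weight to be invariant under changing a single expert) this pins the weights of all restricted-disagreement profiles to a single $\lambda^*\in\Delta N$.

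To reach an arbitrary profile $\bm{\mu}$, I would use Monotonicity to squeeze: choose restricted-disagreement profiles $\bm{\mu}^+,\bm{\mu}^-$ whose expert evaluations at the given act approximate $(EU_{\mu_i}(f))_i$ from above and below; being $\lambda^*$-SEU, their values $\sum_i\lambda^*_i EU_{\mu_i^\pm}(f)$ sandwich $U_{\bm{\mu}}(f)$ and converge to $\sum_i\lambda^*_i EU_{\mu_i}(f)$, yielding $U_{\bm{\mu}}(f)=\int_\Omega u\circ f\,dp_{\lambda^*}(\bm{\mu})$ for all $\bm{\mu}$. The main obstacle is the restricted-disagreement step: making the functional-equation linearization rigorous uniformly in $n$ and $|\Omega|$, correctly invoking uniqueness of the additive representation on conditioning events, and gluing to a single global weight $\lambda^*$. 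In particular I must handle affine dependence of the $\mu_i$ (when $n>|\Omega|$, where $\lambda(\bm{\mu})$ is not unique) by working with well-chosen representatives and passing to limits, verify the connectivity needed to transport the weight across different conditioning events, and confirm that the squeezing profiles in the extension step can be built to match the target expert evaluations arbitrarily closely.
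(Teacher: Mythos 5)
Your proposal is correct in outline, but it reaches the result by a genuinely different route than the paper. The paper's proof never touches additive conjoint measurement: it first builds a single \emph{global} aggregation functional $I:\mathbb{R}^N\rightarrow\mathbb{R}$ with $U_{\bm{\mu}}(f)=I(\bm{\mu}\cdot u_f)$ from Pareto and Monotonicity (Lemmas \ref{A1}--\ref{a2}, a nontrivial matrix-unanimity argument), uses Weak Commutativity to make $I$ constant linear (Lemma \ref{a3}), invokes the DSEU representation of \cite{chandrasekher2022dual}, and then uses P2 via the recursive decomposition $J_{\bm{\mu}}(v_1,v_2,v_3)=J_0[J_F(v_1,v_2),v_3]$ plus a state-relabeling uniqueness trick to collapse the representation to SEU on interior three-state profiles (Lemma \ref{b1}), finishing with a contradiction argument. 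You instead get additivity per profile from P2 via Debreu, and your kernel/Pexider step is sound and rather elegant: for an interior restricted-disagreement profile, the direction ``uniform on $E$, compensated at a common-mass state in $E^c$'' does lie in the kernel of $t\mapsto(\mu_i\cdot t)_i$, and invariance of $\sum_\omega\phi_\omega$ along it forces location-independent increments of each $\phi_\omega$ (vary one coordinate at a time), hence affinity and SEU --- notably \emph{without} Weak Commutativity, which in your architecture enters only to tie the weights across conditioning. What each approach buys: yours is more elementary and self-contained (no reliance on the DSEU theorem or its supplementary recursive result) and isolates the role of Weak Commutativity; the paper's handles all $n$ uniformly through the global $I$ and avoids your gluing and squeezing steps entirely. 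Your squeeze itself is constructible: with $E$ two states where $u_f$ attains $\bar{t}$ and $\underline{t}$, common mass $1-\alpha$ outside, and $\alpha$ near $1$, restricted-disagreement profiles match any evaluation vector arbitrarily closely from above and below.

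Two repairable slips. First, your justification that $\bm{\mu}^E$ ``is again a restricted-disagreement profile [hence] is SEU'' does not go through your own lemma: under the conditionals $\mu_i^E$ every state outside $E$ is null, so there is no positive-common-mass state to compensate the uniform-on-$E$ direction, and the kernel argument degenerates (this is exactly the $\mu_i(E)=1$ case). The conclusion survives by a different argument: Weak Commutativity gives $\succsim_{\bm{\mu}^E}=\succsim^E_{\bm{\mu}}$, and the Bayesian conditional of an SEU preference is SEU, which yields the consistency identity you want directly. Second, your gluing step as written works with per-profile functionals $W_{\bm{\mu}}$, but pinning a single $\lambda^*$ across profiles requires the cross-profile object the paper constructs in Lemma \ref{a2}: Monotonicity (through Lemma \ref{A1}) shows one functional $I$ of the evaluation vector serves all profiles, and this is the cleanest way to discharge your ``transport the weight across conditioning events'' and $n>|\Omega|$ worries --- indeed, when $n<|\Omega|$ a single full-rank restricted-disagreement profile realizes all of $\mathbb{R}^N$ as evaluation vectors, so its SEU weight linearizes $I$ outright and the rest of the gluing is unnecessary. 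With these two patches your plan compiles into a complete proof.
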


Theorem \ref{ambfree} links the weakening of Commutativity proposed in Section \ref{sec31} to the widely known linear aggregation rule. 
The relationship between Commutativity and linear aggregation has been discussed by the literature. \cite{genest1984conflict} proved the inconsistency between Commutativity and linear aggregation: a combination of both conditions leads to dictatorial belief aggregation, where the DM ignores the opinions of all but one expert. Since linear aggregation violates Commutativity, the Bayesian update of the aggregation of expert priors [\textit{i.e.} the update of $p_{\lambda}(\bm{\mu})$] is different from the aggregation of expert posteriors [$p_{\lambda}(\bm{\mu}^E)$]. This creates dynamic inconsistency if the DM takes $p_{\lambda}(\bm{\mu})$ as her \textit{ex-ante} belief and $p_{\lambda}(\bm{\mu}^E)$ as her \textit{ex-post} belief, which has been viewed as raising questions about the
rationality of linear opinion aggregation. For example, \cite{sep-epistemology-social} stated that a DM who employs linear opinion aggregation can ``be Dutch booked, meaning they will accept a series of bets guaranteed to lose them money.'' In the economics literature, \cite{dietrich2021fully} complained that linear belief aggregation creates a DM that is ``statically
rational, but dynamically irrational''. 

However, as Thought Experiment \ref{exp1} illustrates, Commutativity may not be a normatively appealing criterion for the aggregation of heterogeneous priors. Different from \cite{dietrich2021fully} and \cite{sep-epistemology-social}, this paper argues that Commutativity instead of linear opinion aggregation should be reformulated to avoid the impossibility result in \cite{genest1984conflict}. Theorem \ref{ambfree} shows that linear belief aggregation is consistent with Weak Commutativity, and is the only regular aggregation rule that satisfies both Weak Commutativity and Savage's P2.

\subsubsection{Multiple Weight Aggregation Rules}\label{sec322}

 In many situations, the DM faces ambiguity in determining which expert possesses greater expertise or should be more trusted. For a DM who is averse to such ambiguity, Weak Commutativity characterizes a multiple-weight aggregation rule that is reminiscent of \cite{cres2011aggregation} and \cite{gajdos2013decisions}. Formally, an aggregation rule is a \textit{multiple-weight aggregation rule} (or multiple-weight rule), if there exists a closed and convex set of weights $\Lambda\subset \Delta N$ such that for every $\bm{\mu}\in(\Delta\Omega)^n$, 
\begin{align}\label{multiagg}
    U_{\bm{\mu}}(f)=\min_{\lambda\in\Lambda} \int_{\Omega} u\circ f dp_{\lambda}(\bm{\mu}). 
\end{align}
Recall that $p_{\lambda}(\bm{\mu})=\sum_{i=1}^n \lambda_i \mu_i$ is the linear aggregation of expert beliefs with weight $\lambda$. In (\ref{multiagg}), the set of weights $\Lambda$ allows the DM to assign different weights to the opinions suggested by each expert and leaves room for ambiguity regarding the credibility of experts. Theorem \ref{ambaverse} below shows that Weak Commutativity characterizes this family of aggregation rules when the DM is ambiguity-averse in the sense of \cite{gilboa1989maxmin}. Formally, a preference $\succsim$ is {ambiguity-averse}, if for every $f,g\in\mathcal{F}$ and $\alpha\in (0,1)$, if $f\sim g$, then $\alpha f+(1-\alpha)g\succsim g$.

\begin{theorem}\label{ambaverse}
For a regular aggregation rule $\sigma$ such that $\succsim_{\bm{\mu}}$ is ambiguity-averse for every $\bm{\mu}\in(\Delta\Omega)^n$, Weak Commutativity holds if and only if $\sigma$ is a multiple-weight aggregation rule.
\end{theorem}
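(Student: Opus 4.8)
The plan is to prove both implications, with the forward direction resting on the rectangularity observation already recorded before the statement, and the converse built around reducing each $\succsim_{\bm{\mu}}$ to a single ``aggregator'' defined on profiles of expert expected utilities.

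For the forward direction, suppose $\sigma$ is the multiple-weight rule with set $\Lambda$. Ambiguity aversion is immediate, since $U_{\bm{\mu}}(f)=\min_{\lambda\in\Lambda}\int u\circ f\,dp_{\lambda}(\bm{\mu})$ is a minimum of functionals that are affine in $u\circ f$, hence concave. For Weak Commutativity, fix $\bm{\mu}$ and $E\in\mathcal{E}(\bm{\mu})$ with disagreement restricted within $E$. The belief set $\{p_{\lambda}(\bm{\mu}):\lambda\in\Lambda\}$ lies in $Conv(\{\mu_1,\dots,\mu_n\})$ and is therefore rectangular with respect to $\{E,E^c\}$, exactly as shown before the statement. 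For rectangular MEU preferences the conditional preference generated by Definition \ref{condpref} is the MEU preference whose belief set is the prior-by-prior update. A direct computation using $\mu_i(E)=\alpha$ for every $i$ gives $p_{\lambda}(\bm{\mu})^E=p_{\lambda}(\bm{\mu}^E)$ for every $\lambda$, so the updated set equals $\{p_{\lambda}(\bm{\mu}^E):\lambda\in\Lambda\}$, which represents $\succsim_{\bm{\mu}^E}$. Hence $\succsim_{\bm{\mu}}^{E}$ and $\succsim_{\bm{\mu}^E}$ coincide, which is (\ref{a1res}).

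For the converse, write $\bm{v}_{\bm{\mu}}(f)=(EU_{\mu_1}(f),\dots,EU_{\mu_n}(f))$. Pareto forces $U_{\bm{\mu}}(f)$ to depend on $f$ only through $\bm{v}_{\bm{\mu}}(f)$, so $U_{\bm{\mu}}(f)=W_{\bm{\mu}}(\bm{v}_{\bm{\mu}}(f))$ for a monotone $W_{\bm{\mu}}$ with $W_{\bm{\mu}}(c\bm{1})=c$. Applying Monotonicity in both directions to any act on which two profiles induce the same expert-utility vector shows the corresponding aggregators agree there; patching these agreements (using continuity and the freedom to choose every profile in $(\Delta\Omega)^n$) yields a single monotone $W:\mathbb{R}^n\to\mathbb{R}$ with $U_{\bm{\mu}}(f)=W(\bm{v}_{\bm{\mu}}(f))$ for all $\bm{\mu},f$. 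Since $\bm{v}_{\bm{\mu}}$ is affine under Anscombe--Aumann mixtures, ambiguity aversion makes $W$ quasiconcave. The heart of the argument is to extract translation invariance and positive homogeneity of $W$ from Weak Commutativity. Fix a profile with disagreement restricted within $E$, so $\mu_i(E)=\alpha\in(0,1)$ and the conditionals $\mu_i(\cdot\mid E^c)$ are common across $i$. Writing $\bm{w}(f)_i=\int_E u\circ f\,d\mu_i$, one computes $\bm{v}_{\bm{\mu}}(fEh)=\bm{w}(f)+c_h\bm{1}$ with $c_h=\int_{E^c}u\circ h\,d\mu_i$ independent of $i$ and ranging over all of $\mathbb{R}$, while $\bm{v}_{\bm{\mu}^E}(f)=\bm{w}(f)/\alpha$. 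Expanding both sides of (\ref{a1res}) through $W$ turns Weak Commutativity into
\begin{align*}
\big[\,\forall t\in\mathbb{R}:\ W(\bm{w}(f)+t\bm{1})\ge W(\bm{w}(g)+t\bm{1})\,\big]\ \Longleftrightarrow\ W(\bm{w}(f)/\alpha)\ge W(\bm{w}(g)/\alpha),
\end{align*}
valid for every admissible $\alpha$ and every realizable pair $\bm{w}(f),\bm{w}(g)$. The left-hand side is invariant to a common vertical shift of $\bm{w}(f),\bm{w}(g)$, so the right-hand side must be too; with $W(c\bm{1})=c$ this gives $W(\bm{a}+r\bm{1})=W(\bm{a})+r$, i.e.\ translation invariance. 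Translation invariance then collapses the ``$\forall t$'' quantifier, leaving $W(\bm{a})\ge W(\bm{a}')\Leftrightarrow W(\bm{a}/\alpha)\ge W(\bm{a}'/\alpha)$; since $\alpha$ ranges over $(0,1)$ this is scale invariance of the induced order, which with $W(c\bm{1})=c$ delivers positive homogeneity. A positively homogeneous quasiconcave function is concave, so $W$ is monotone, concave, positively homogeneous and translation invariant; by the support-function representation of such functionals, $W(\bm{a})=\min_{\lambda\in\Lambda}\langle\lambda,\bm{a}\rangle$ for a closed convex $\Lambda\subseteq\Delta N$ (monotonicity gives $\lambda\ge0$, translation invariance gives $\sum_i\lambda_i=1$). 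Substituting back, $U_{\bm{\mu}}(f)=\min_{\lambda\in\Lambda}\sum_i\lambda_i EU_{\mu_i}(f)=\min_{\lambda\in\Lambda}\int u\circ f\,dp_{\lambda}(\bm{\mu})$, which is (\ref{multiagg}).

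I expect two places to require the most care. The conceptual crux is the displayed equivalence and the passage from it to translation invariance and homogeneity; the delicate point is verifying that the vectors $\bm{w}(f),\bm{w}(g)$ realizable as conditional-value profiles — as the profile, the event $E$, and the acts vary — sweep out a rich enough set for these functional equations to determine $W$ on its whole domain, given only $|\Omega|\ge3$. The second is the patching step producing a single universal $W$: equalizing two profiles' value vectors is possible only on a subspace for a fixed pair, so the argument must chain across profiles (or fix a reference profile with beliefs in general position) and invoke continuity, with the non-uniqueness of $\Lambda$ when $n>|\Omega|$ handled explicitly. Both steps are, I believe, routine but bookkeeping-heavy, whereas the economic content is entirely in the displayed equivalence.
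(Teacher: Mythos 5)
Your overall architecture coincides with the paper's: both directions run through a single aggregation functional on profiles of expert expected utilities (your $W$, the paper's $I$), Weak Commutativity applied to constructed profiles with disagreement confined to a two-state event inside $E$ yields constant linearity, ambiguity aversion yields quasiconcavity, and the Gilboa--Schmeidler representation of monotone, constant-linear, quasiconcave functionals delivers the multiple-weight form. Your ordinal shift/scale-invariance derivation is a mild variant of the paper's Lemma \ref{a3}, which extracts constant linearity by a certainty-equivalent computation on exactly the kind of three-state profile you need; your realizability worry is resolved the same way the paper resolves it, since any $a\in\mathbb{R}^N$ is realizable with only two states in $E$ by setting $\mu_i(\omega_1)$ proportional to $a_i-\min_j a_j$, so $|\Omega|\geq 3$ suffices. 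Your forward direction (rectangularity plus prior-by-prior updating of MEU) differs from the paper's direct decomposition $U_{\bm{\mu}}(fEh)=\alpha U_{\bm{\mu}^E}(f)+(1-\alpha)\int u\circ h\,d\mu_0$, but the paper omits that side as standard, and both routes are sound.

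Two corrections. First, the claim ``a positively homogeneous quasiconcave function is concave'' is false as stated: on $\mathbb{R}$, the function $f(x)=2x$ for $x\geq 0$ and $f(x)=x$ for $x<0$ is monotone, positively homogeneous, and quasiconcave, yet not concave. What is true, and all you need, is the Gilboa--Schmeidler step: a monotone, quasiconcave functional with \emph{constant additivity} is superadditive, hence concave, because when $W(a)\neq W(b)$ one shifts $b$ by $\bigl(W(a)-W(b)\bigr)1_n$ to equalize values, applies quasiconcavity at the common level, and undoes the shift. Homogeneity alone cannot align levels, so the step fails as written; since you derived translation invariance immediately beforehand, the repair costs one line, but the cited lemma should be replaced. (Relatedly, quasiconcavity of $W$ is first available only on the box of realizable evaluation profiles; the paper's Lemma \ref{b2} uses constant linearity to normalize arbitrary $a,b$ into $(0,\tfrac12)^N$ before realizing them, so your ordering of steps needs the same adjustment.) Second, the patching step producing a universal $W$ is more than bookkeeping: Pareto gives agreement for a fixed profile and Monotonicity for a fixed act, and welding these into ``$\bm{\mu}\cdot v=\bm{\mu}'\cdot w$ implies equal values'' is precisely the paper's Lemma \ref{A1}, proved in Appendix \ref{appendixc} by a nontrivial chaining construction through canonical two-column stochastic matrices. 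Your sketch points in the right direction, but this is a lemma requiring proof, not a continuity argument.
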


\subsubsection{Dual-self Aggregation Rules}\label{sec323}

This section introduces dual-self aggregation rule, which is characterized by Weak Commutativity. In the previous section, it is assumed that the DM's preference always satisfies ambiguity aversion. However, individuals may exhibit different attitudes to ambiguity in different circumstances \citep{trautmann2015ambiguity}. In general, Weak Commutativity characterizes a family of aggregation rules that allows the DM to exhibit different attitudes to ambiguity. Similar to \cite{chandrasekher2022dual}, the DM is characterized by a sequential game played by ``dual selves,'' Optimism, and Pessimism. I first introduce the representation of the dual-self aggregation. 

Recall that a {weight} is a distribution $\lambda\in\Delta N$. Let $\mathcal{K}$ denote the set of all nonempty closed, convex sets of weights that belong to $\Delta N$, endowed with the Hausdorff topology. A \textit{weight-set collection} is a non-empty compact collection $\bm{\Lambda}\subset\mathcal{K}$, that is, each element $\Lambda\in\bm{\Lambda}$ is a nonempty closed, convex set of weights. 

\begin{definition} (Dual-self Aggregation)\label{defdual}
    An aggregation rule is a dual-self aggregation rule (or dual-self rule), if there exists a weight-set collection $\bm{\Lambda}$ such that, for every $\bm{\mu}\in(\Delta\Omega)^n$,
\begin{equation}\label{dual}
U_{\bm{\mu}}(f)=\max _{\Lambda\in\bm{\Lambda}} \min _{\lambda\in \Lambda} \int_{\Omega} u\circ fd p_{\lambda}(\bm{\mu}).
\end{equation}
\end{definition}

The representation (\ref{dual}) can be understood as a model where the DM optimally determines how much to learn from each expert $1,2,...,n$. Similar to the interpretation of \cite{chandrasekher2022dual}, the optimal weight $\lambda^*$ that solves (\ref{dual}) can be interpreted as being determined in an intrapersonal game played by \textit{Optimism} and \textit{Pessimism}. When determining the evaluation of the act $f$, Optimism desires the DM to learn more from the experts who are optimistic about $f$, which generates belief-based felicity; on the other hand, Pessimism desires the DM to learn from those being pessimistic about $f$, which leads to practical and prudent decision-making. Formally, Optimism first chooses a set of weights $\Lambda$ with the goal of maximizing the expected utility of act $f$; second, Pessimism chooses the weight $\lambda^*\in\Lambda$ with the goal of minimizing the expected utility of $f$. The aggregated belief is calculated according to $p_{\lambda^*}(\bm{\mu})=\sum_{i=1}^n \lambda^*_i \mu_i$.

As two special cases, (\ref{dual}) is a linear aggregation rule if $\bm{\Lambda}$ and $\Lambda$ are both singletons and a multiple-weight rule if the belief-set collection $\bm{\Lambda}$ is a singleton. The dual-self aggregation rule accommodates other behavioral patterns that cannot be explained by linear or multiple-weight aggregation, some of which are discussed in Section \ref{sec324}. Theorem \ref{thmdual} shows that Weak Commutativity characterizes dual-self aggregation rules.

\begin{theorem}
\label{thmdual}
For a regular aggregation rule $\sigma$, Weak Commutativity holds if and only if $\sigma$ is a dual-self aggregation rule. 
\end{theorem}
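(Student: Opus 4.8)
The plan is to prove the two implications separately. For sufficiency I verify Regularity and then check Weak Commutativity by a direct computation resting on the weight-by-weight version of the rectangularity identity of Section~\ref{sec31}. For necessity I first collapse the rule to a single aggregator $I$ on expected-utility profiles (using Pareto and Monotonicity), then show Weak Commutativity forces $I$ to be \emph{constant-linear} (equivalently, each $\succsim_{\bm{\mu}}$ satisfies C-Independence), and finally invoke the representation of constant-linear functionals as dual selves.

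\textbf{Sufficiency.} Suppose $\sigma$ is a dual-self rule with weight-set collection $\bm{\Lambda}$. Regularity is routine: for fixed $\bm{\mu}$ the map $f\mapsto\max_{\Lambda\in\bm{\Lambda}}\min_{\lambda\in\Lambda}EU_{p_{\lambda}(\bm{\mu})}(f)$ is continuous (both extrema are attained by compactness), agrees with $u$ on constants, and is increasing in each $EU_{\mu_i}(f)$ because $EU_{p_{\lambda}(\bm{\mu})}(f)=\sum_i\lambda_i EU_{\mu_i}(f)$; this yields $\succsim_{\bm{\mu}}\in\mathcal{R}$, Pareto, and Monotonicity. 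For Weak Commutativity, fix $\bm{\mu}$ and $E\in\mathcal{E}(\bm{\mu})$ with disagreement restricted within $E$, so every $\mu_i(E)=\alpha\in(0,1)$ and the $\mu_i$ share a common sub-measure $\nu$ on $E^c$. The key identity $p_{\lambda}(\bm{\mu})^E=p_{\lambda}(\bm{\mu}^E)$ then holds for every $\lambda$ (the rectangularity computation of Section~\ref{sec31} performed one weight at a time). Using it, for any $h$ I would compute $U_{\bm{\mu}}(fEh)=c_h+\alpha\,U_{\bm{\mu}^E}(f)$, where $c_h=\int_{E^c}u\circ h\,d\nu$ is independent of $\lambda$, of $\Lambda$, and of $f$. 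Since $\alpha>0$ and $c_h$ is common to $f$ and $g$, the condition ``$fEh\succsim_{\bm{\mu}}gEh$ for all $h$'' collapses to $U_{\bm{\mu}^E}(f)\ge U_{\bm{\mu}^E}(g)$, i.e.\ $f\succsim_{\bm{\mu}}^E g\iff f\succsim_{\bm{\mu}^E}g$.

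\textbf{Necessity: reduction to an aggregator.} Let $\sigma$ be regular and satisfy Weak Commutativity. Since $u$ is unbounded, $u(X)=\mathbb{R}$, so each act may be replaced by its utility profile. By Pareto, two acts with the same vector $(EU_{\mu_i}(f))_i$ receive equal values; by Monotonicity this value is also unchanged across profiles $\bm{\mu}$ inducing the same vector. Hence there is a single $I:\mathbb{R}^n\to\mathbb{R}$ with $U_{\bm{\mu}}(f)=I\big(EU_{\mu_1}(f),\dots,EU_{\mu_n}(f)\big)$. Choosing profiles with linearly independent beliefs shows $I$ is defined on all of $\mathbb{R}^n$, and it is monotone, continuous, and normalized, $I(c\mathbf{1})=c$.

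\textbf{Necessity: core step and representation.} The heart is to show Weak Commutativity forces $I$ to be constant-linear. I would fix an event with $|E^c|\ge1$, a common $E^c$-marginal, a common $\mathrm{Prob}(E)=\alpha\in(0,1)$, and conditional beliefs $\mu_i^E$ on $E$. For such a restricted-disagreement profile the utility profile of $fEh$ is $\alpha\bm{s}_f+C\mathbf{1}$, where $\bm{s}_f=(EU_{\mu_i^E}(f))_i$ and $C$ sweeps out $\mathbb{R}$ as $h$ varies on $E^c$. Weak Commutativity then reads: $I(\alpha\bm{s}_f+C\mathbf{1})\ge I(\alpha\bm{s}_g+C\mathbf{1})$ for all $C$ $\iff$ $I(\bm{s}_f)\ge I(\bm{s}_g)$. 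Taking $\bm{s}_g$ constant and using normalization yields the functional equation $I(\alpha\bm{s}+C\mathbf{1})=\alpha I(\bm{s})+C$, from which positive homogeneity and constant additivity extend to all scalars by continuity; this is exactly C-Independence of $\succsim_{\bm{\mu}}$. A constant-linear, monotone, normalized, continuous functional is representable as $I(\bm{t})=\max_{\Lambda\in\bm{\Lambda}}\min_{\lambda\in\Lambda}\langle\lambda,\bm{t}\rangle$ over a compact collection $\bm{\Lambda}$ of closed convex subsets of $\Delta N$ by the dual-self representation of \cite{chandrasekher2022dual}; substituting $\bm{t}=(EU_{\mu_i}(f))_i$ and $\langle\lambda,\bm{t}\rangle=EU_{p_{\lambda}(\bm{\mu})}(f)$ delivers the dual-self aggregation form.

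\textbf{Main obstacle.} I expect the decisive difficulty to be the richness bookkeeping in the core step: to obtain the functional equation on all of $\mathbb{R}^n$ the conditional vectors $\bm{s}_f$ must exhaust $\mathbb{R}^n$, which needs $\mu_1^E,\dots,\mu_n^E$ linearly independent on $E$ and hence $|E|\ge n$; when $|\Omega|$ is small relative to $n$ the argument must instead be run on the affine hull of achievable profiles, with the identified weights living on the corresponding subspace. A secondary point is checking that the hypotheses of the cited representation are met — in particular that monotonicity and the normalization $I(c\mathbf{1})=c$ force the supporting functionals into $\Delta N$ and that $\bm{\Lambda}$ can be taken compact.
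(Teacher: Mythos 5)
Your overall architecture coincides with the paper's: sufficiency by substituting the decomposition $\mu_i=\alpha\mu_i^E+(1-\alpha)\mu_0$ into the dual-self formula, and necessity by collapsing the rule to an aggregation functional $I$ on evaluation profiles, extracting constant linearity from restricted-disagreement profiles, and invoking \cite{chandrasekher2022dual}. However, there is one genuine gap and one conceptual error.

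The gap is the well-definedness of $I$. You write that Pareto handles two acts inducing the same evaluation vector under a fixed profile, that Monotonicity handles two profiles inducing the same vector for a fixed act, and ``hence there is a single $I$.'' That ``hence'' is precisely what must be proved. Pareto gives $\bm{\mu}\cdot u_f=\bm{\mu}\cdot u_g\Rightarrow U_{\bm{\mu}}(f)=U_{\bm{\mu}}(g)$ and Monotonicity gives $\bm{\mu}\cdot u_f=\bm{\mu}'\cdot u_f\Rightarrow U_{\bm{\mu}}(f)=U_{\bm{\mu}'}(f)$; each changes only \emph{one} argument. Well-definedness requires the simultaneous statement $\bm{\mu}\cdot u_f=\bm{\mu}'\cdot u_g\Rightarrow U_{\bm{\mu}}(f)=U_{\bm{\mu}'}(g)$, which does not follow by naive transitivity: one must exhibit intermediate pairs $(\bm{\mu}'',f'')$ realizing the same evaluation vector while altering the profile and the act alternately, and the existence of such a chain is not automatic. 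The paper isolates exactly this as Lemma \ref{A1} and devotes Appendix \ref{appendixc} to it, reducing arbitrary pairs to two-column stochastic matrices and vectors of the form $k e_j+b 1_l$ and then performing explicit column surgery. Without this (or an equivalent construction), the functional $I$ on which your entire necessity argument rests has not been shown to exist.

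The conceptual error is in your ``main obstacle'' paragraph, which identifies a non-problem and proposes a fix that would be a wrong turn. Realizing every $\bm{s}\in\mathbb{R}^n$ as $\left(EU_{\mu_i^E}(f)\right)_{i\in N}$ does \emph{not} require $\mu_1^E,\dots,\mu_n^E$ to be linearly independent (impossible anyway once $n>|E|$): coordinate $i$ of the evaluation vector depends only on expert $i$'s own belief, so the beliefs can be chosen expert by expert. With $E=\{1,2\}$, $u_f=(\max_j s_j,\min_j s_j)$ on $E$, and $\mu_i^E(\omega=1)=(s_i-\min_j s_j)/(\max_j s_j-\min_j s_j)$, any nonconstant $\bm{s}$ is realized, constants being trivial --- this is exactly the paper's construction in Lemma \ref{a3} and in the footnote to Lemma \ref{a2}. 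Hence the theorem needs only $|\Omega|\geq 3$, for any $n$, and no ``affine hull of achievable profiles'' ever arises; had you actually executed that fallback, the proof would have gone astray. Modulo these two points, your core functional-equation step is sound, and is in fact a mild streamlining of the paper's argument: by sweeping the constant $C$ through varying $h$ and comparing against constant acts, you bypass the paper's separate step of pinning down $\succsim_{\bm{\mu}}^{E^c}$ via Pareto before chaining certainty equivalents.
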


\textsc{Sketch of the Proof} The proof of Theorem \ref{thmdual} can be found in Appendix \ref{seca}. Here I sketch the proof for the ``only if'' part of the theorem. The regularity of the aggregation rule guarantees the existence of an \textit{aggregation functional} $I:\mathbb{R}^N\rightarrow \mathbb{R}$ such that for every $\bm{\mu}\in(\Delta\Omega)^n$ and $f\in\mathcal{F}$, $U_{\bm{\mu}}(f)=I(\bm{\mu}\cdot u_f)$. Here, $\bm{\mu}\cdot u_f=(EU_{\mu_1}(f),..., EU_{\mu_n}(f))\in\mathbb{R}^N$ is the \textit{evaluation profile} according to $\bm{\mu}$, which lists the expected utilities of $f$ evaluated according to each expert's suggestion in $\bm{\mu}$.

If the expert disagreement is restricted within event $E$ (as is required by Weak Commutativity), then all the experts agree that $Prob(E)=\alpha$ for some $\alpha$ between zero and one. The \text{martingale} property of Bayesian update then implies
$
\bm{\mu}=\alpha\bm{\mu}^E+(1-\alpha)\bm{\mu}^{E^c}.
$
By the linearity of inner products, the evaluation profile according to $\bm{\mu}$ is an $\alpha$-mixture between the evaluation profile according to $\bm{\mu}^E$ and $\bm{\mu}^{E^c}$: 
\begin{align}
\bm{\mu}\cdot u_f=\alpha \left(\bm{\mu}^E \cdot u_f\right)+(1-\alpha)\left(\bm{\mu}^{E^c}\cdot u_f\right). 
\end{align}
Notice that $\bm{\mu}\cdot u_f$, $\bm{\mu}^E \cdot u_f$ and $\bm{\mu}^{E^c}\cdot u_f$ are all evaluation profiles, which belong to the domain of the aggregation functional $I(\cdot):\mathbb{R}^N\rightarrow\mathbb{R}$. The key step of the proof is to prove by Weak Commutativity that the aggregation of the evaluation profile according to $\bm{\mu}$ is an $\alpha$-mixture of the aggregation of the counterparts according to $\bm{\mu}^E$ and $\bm{\mu}^{E^c}$:
\begin{align}
\label{linear1}
I(\bm{\mu}\cdot u_f)=\alpha I \left(\bm{\mu}^E \cdot u_f\right)+(1-\alpha)I\left(\bm{\mu}^{E^c}\cdot u_f\right). 
\end{align}
To complete the proof sketch, notice that if the expert disagreement is restricted within event $E$ (as is required by Weak Commutativity), then all the expert posteriors conditional on ${E^c}$ are identical, and the evaluation profile $\bm{\mu}^{E^c}\cdot u_f$ is constant over $N$. Since $\bm{\mu}$ and $f$ can be arbitrability chosen, (\ref{linear1}) implies the aggregation functional $I$ is a constant linear functional. An application of \cite{chandrasekher2022dual} then leads to the characterization of dual-self aggregation.

\subsubsection{Examples for Dual-self Aggregation} \label{sec324}

It is well-known that a DM may appear ambiguity-averse in one decision problem while being ambiguity-seeking in another; for a survey, see \cite{trautmann2015ambiguity}. Another fact known in the literature is that disagreement among experts leads the DM to perceive ambiguity \citep{cabantous2007ambiguity,baillon2012aggregating}. However, how different forms of disagreement may lead to different ambiguity attitudes of the DM remains less studied in both the theoretical and empirical literature. The dual-self aggregation rule provides a useful tool for studying the relationship between disagreement and ambiguity attitudes. To illustrate this point, I consider the two examples below. 

\textsc{Expert Credibility and Ambiguity Attitude}  Dual-self aggregation allows the DM's ambiguity attitude to be dependent on the \textit{credibility} of the experts who disagree with each other. For example, with $N=\{1,2,3,4\}$, consider the following special case of dual-self aggregation:
\begin{align} \label{identityamb}
    U_{\bm{\mu}}(f)=.8\min_{\lambda\in [.25,.75]} EU_{\lambda\mu_1+(1-\lambda)\mu_2}[f]+.2\max_{\lambda'\in [.25,.75]} EU_{\lambda'\mu_3+(1-\lambda')\mu_4}[f].
\end{align}

In this example, the DM divides the experts $N=\{1,2,3,4\}$ into two groups: $\{1,2\}$ and $\{3,4\}$. She attaches a total weight of $.8$ to experts 1 and 2 and a total weight of $.2$ to experts 3 and 4. That is, the DM believes that experts 1 and 2 are much more reliable than experts 3 and 4. If she forms her preference according to (\ref{identityamb}), then her ambiguity attitude is dependent on whether the experts in $\{1,2\}$ or $\{3,4\}$ disagree with each other. To illustrate the intuition behind this point, I assume that the state space is binary, $\Omega=\{H,L\}$, and consider the two cases below.

\textit{Case 1: Disagreement between 1 and 2} $\ $ Suppose that experts 1 and 2 disagree with each other, say, $\mu_1(H)=.2$ and $\mu_2(H)=.8$, while $\mu_3(H)=\mu_4(H)=.5$. In this case, the DM feels anxious because the two experts whose opinions are the most important to her significantly disagree with each other. This feeling of anxiety leads to ambiguity aversion. According to (\ref{identityamb}), the preference of the DM will be represented by the MEU function $U(f)=\min_{p\in [.38,.62]}pu[f(H)]+(1-p)u[f(L)]$.

\textit{Case 2: Disagreement between 3 and 4} $\ $ Suppose that experts 3 and 4 disagree with each other, who play less important roles in the formation of the DM's set of beliefs. Suppose $\mu_3(H)=.2$ and $\mu_4(H)=.8$ and $\mu_3(H)=\mu_4(H)=.5$. In this situation, it can be attractive to learn from the more optimistic expert in $\{3,4\}$ because, even if that expert is wrong, the DM's belief cannot be distorted too much since the total weight assigned to $\{3,4\}$ is only .2. Moreover, learning more from the more optimistic expert is enjoyable since it induces belief-related felicities. As a result, the DM exhibits ambiguity-seeking behaviors. According to (\ref{identityamb}), the preference of the DM will be represented by the maxmax expected utility $U(f)=\max_{p\in [.47,.53]}pu[f(H)]+(1-p)u[f(L)]$.

\textsc{Expert Optimism and Ambiguity Attitude} Dual-self aggregation also allows the DM's ambiguity attitude to be dependent on whether the disagreement is among the more optimistic or the more pessimistic experts. For example, with $N=\{1,2,3\}$, consider the following aggregation rule: 
 \begin{align}\label{median}
      U_{\bm{\mu}}(f)=\text{median}[EU_{\mu_1}(f),EU_{\mu_2}(f),EU_{\mu_3}(f)].
 \end{align}
The right-hand side of (\ref{median}) is the median of the profile of utility evaluations offered by the three experts. It can be verified that $\text{median}[\cdot,\cdot,\cdot]$ is a constant linear functional so (\ref{median}) belongs to the family of dual-self aggregation rules.\footnote{The utility function in (\ref{median}) can be rewritten in the dual-self form. Specifically, $U_{\bm{\mu}}(f)=\max\{\min_{\mu\in\{\mu_1,\mu_2\}}EU_{\mu}[f],\min_{\mu\in\{\mu_2,\mu_3\}}EU_{\mu}[f],\min_{\mu\in\{\mu_1,\mu_3\}}EU_{\mu}[f]\}$.} In the representation, the DM ignores the most optimistic expert due to the concern that the expert can be overoptimistic, and \textit{vice versa}. The following two cases show that the DM can exhibit different preferences for hedging in different circumstances. Similar to the previous section, I use $\bm{\mu}\cdot u_f$ to denote $[EU_{\mu_1}(f),EU_{\mu_2}(f), EU_{\mu_3}(f)]$, which is the profile of the expected utilities of $f$ evaluated according to each expert's suggestion in $\bm{\mu}$. 

\textit{Case 1: Optimism-led Disagreement} $\ $ Consider acts $f$ and $g$ such that 
$\bm{\mu}\cdot u_f=(0,0,2)$ and $\bm{\mu}\cdot u_g=(0,2,0)$, and it is easy to verify that $U_{\bm{\mu}}(f)=U_{\bm{\mu}}(g)=0$ according to (\ref{median}). For both acts, the expert who disagrees with the majority of $\{1,2,3\}$ is the most \textit{optimistic} one. If the DM is concerned that the most optimistic expert can be overoptimistic, then she should prefer hedging between $f$ and $g$, which can be interpreted as ambiguity aversion. Indeed, if we take the $\frac{1}{2}$-mixture of $f$ and $g$, then $\bm{\mu}\cdot u_{\frac{1}{2}f+\frac{1}{2}g}=(0,1,1)$, which implies $U_{\bm{\mu}}\left(\frac{1}{2}f+\frac{1}{2}g\right)=1>0$. 

\textit{Case 2: Pessimism-led Disagreement} $\ $ Symmetrically, consider acts $f'$ and $g'$ such that 
$\bm{\mu}\cdot u_{f'}=(0,0,-2)$ and $\bm{\mu}\cdot u_{g'}=(0,-2,0)$. For both acts, the expert who disagrees with the majority of $\{1,2,3\}$ is the most \textit{pessimistic} one. If the DM worries that the most pessimistic expert can be overpessimistic, then she should prefer not to hedge between $f'$ and $g'$, which can be interpreted as ambiguity seeking. In this case, take the $\frac{1}{2}$-mixture of $f'$ and $g'$ and it is immediate to verify that $\bm{\mu}\cdot u_{\frac{1}{2}f'+\frac{1}{2}g'}=(0,-1,-1)$, which implies $U_{\bm{\mu}}\left(\frac{1}{2}f'+\frac{1}{2}g'\right)=-1<U_{\bm{\mu}}(f')=0$.

\section{(Non-)Commutativity and Decision-Making under Uncertainty}
\label{sec5}

This section introduces a series of propositions that link various aggregation axioms (including Weak Commutativity) to classic axioms on decision-making under uncertainty, including C-Independence, Uncertainty Aversion \citep{gilboa1989maxmin}, and Independence \citep{anscombe1963definition}. These results provide a new perspective for the foundation of various models in the setting of belief aggregation, including DSEU (dual-self expected utility, \citealt{chandrasekher2022dual}), MEU, and SEU. 

Recall from Theorem \ref{thmdual} that, for a regular aggregation rule $\sigma$, Weak Commutativity is equivalent to $\sigma$ being a dual-self aggregation rule. I start this section with a result closely related to Theorem \ref{thmdual}, which states the equivalence between Weak Commutativity and that the DM's preference satisfies the \text{C-Independence} axiom. A preference $\succsim$ satisfies \textit{C-Independence} if for every $f,g\in \mathcal{F}$, $x\in X$ and $\alpha\in (0,1)$,
$
f\succsim g\iff \alpha f+(1-\alpha)x\succsim \alpha g+(1-\alpha)x.
$

\begin{theorem} \label{thm41}
A regular aggregation rule satisfies Weak Commutativity if and only if $\succsim_{\bm{\mu}}$ satisfies C-Independence for every $\bm{\mu}\in(\Delta\Omega)^n$.
\end{theorem}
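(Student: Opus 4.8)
The plan is to route both implications through the claim that, for a regular aggregation rule, C-Independence of every $\succsim_{\bm{\mu}}$ is equivalent to the aggregation functional $I:\mathbb{R}^N\to\mathbb{R}$ being \emph{constant linear}. Recall from the sketch of Theorem \ref{thmdual} that Regularity guarantees a functional $I$ with $U_{\bm{\mu}}(f)=I(\bm{\mu}\cdot u_f)$; by constant linear I mean $I$ is monotone, normalized ($I(c\mathbf{1})=c$), positively homogeneous ($I(ta)=tI(a)$ for $t>0$), and constant-additive ($I(a+c\mathbf{1})=I(a)+c$). Since Theorem \ref{thmdual} already gives that Weak Commutativity is equivalent to $\sigma$ being a dual-self rule, and the max-min form $I(a)=\max_{\Lambda\in\bm{\Lambda}}\min_{\lambda\in\Lambda}\langle\lambda,a\rangle$ is manifestly constant linear (while \cite{chandrasekher2022dual} supplies the converse, that a constant-linear $I$ admits this representation), it suffices to prove the biconditional: C-Independence for all $\bm{\mu}$ holds if and only if $I$ is constant linear. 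The monotonicity and normalization of $I$ I would read off directly from Pareto and from $U_{\bm{\mu}}(x)=u(x)$ (a constant act induces the constant profile $u(x)\mathbf{1}$, and $u(X)=\mathbb{R}$ by unboundedness).

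For the easy direction, constant-linear $I$ $\Rightarrow$ C-Independence, I would use that $u$ is affine, so $\bm{\mu}\cdot u_{\alpha f+(1-\alpha)x}=\alpha(\bm{\mu}\cdot u_f)+(1-\alpha)u(x)\mathbf{1}$. Applying constant-additivity and positive homogeneity gives $U_{\bm{\mu}}(\alpha f+(1-\alpha)x)=\alpha U_{\bm{\mu}}(f)+(1-\alpha)u(x)$, and cancelling the common affine terms yields the C-Independence biconditional at once. Combined with Theorem \ref{thmdual} this settles the ``only if'' part of the theorem: Weak Commutativity $\Rightarrow$ dual-self rule $\Rightarrow$ $I$ constant linear $\Rightarrow$ C-Independence.

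The substantive direction is C-Independence $\Rightarrow$ $I$ constant linear. First I would fix $\bm{\mu}$: then $\succsim_{\bm{\mu}}\in\mathcal{R}$ is a nondegenerate (as $u$ is nonconstant) continuous, monotone weak order satisfying C-Independence, so by the characterization of invariant biseparable preferences in \cite{ghirardato2004differentiating} it admits a representation $U_{\bm{\mu}}(f)=J_{\bm{\mu}}(u\circ f)$ with $J_{\bm{\mu}}:\mathbb{R}^\Omega\to\mathbb{R}$ monotone, normalized, positively homogeneous, and constant-additive. Since $u$ is unbounded and affine, $u(X)=\mathbb{R}$, so $u\circ f$ ranges over all of $\mathbb{R}^\Omega$; writing $M_{\bm{\mu}}:\mathbb{R}^\Omega\to\mathbb{R}^N$ for the linear map $(M_{\bm{\mu}}v)_i=\langle\mu_i,v\rangle$, comparing the two representations gives $J_{\bm{\mu}}=I\circ M_{\bm{\mu}}$ on all of $\mathbb{R}^\Omega$. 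The transfer to $I$ is then algebraic: because $M_{\bm{\mu}}$ is linear and $M_{\bm{\mu}}\mathbf{1}_\Omega=\mathbf{1}_N$ (each row is a probability vector), for any $a=M_{\bm{\mu}}v$ I get $I(ta)=J_{\bm{\mu}}(tv)=tJ_{\bm{\mu}}(v)=tI(a)$ and $I(a+c\mathbf{1}_N)=J_{\bm{\mu}}(v+c\mathbf{1}_\Omega)=J_{\bm{\mu}}(v)+c=I(a)+c$. To see these hold for \emph{every} $a\in\mathbb{R}^N$, I would prove a coverage lemma: given $a$, pick $V_1>\max_i a_i$ and $V_2<\min_i a_i$, let $v$ take values $V_1,V_2$ on two distinct states (available since $|\Omega|\ge 3$ and $u(X)=\mathbb{R}$), and for each $i$ place $\mu_i$ on those two states with weight $(a_i-V_2)/(V_1-V_2)\in(0,1)$, so that $\langle\mu_i,v\rangle=a_i$ and hence $M_{\bm{\mu}}v=a$. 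Thus $I$ is constant linear; by \cite{chandrasekher2022dual} $\sigma$ is a dual-self rule, and Theorem \ref{thmdual} then delivers Weak Commutativity.

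I expect the main obstacle to be exactly this transfer from the per-profile functionals $J_{\bm{\mu}}$ to the single aggregation functional $I$ on all of $\mathbb{R}^N$: C-Independence is a statement about each preference $\succsim_{\bm{\mu}}$ in isolation (equivalently about $J_{\bm{\mu}}$ on $\mathbb{R}^\Omega$), whereas constant-linearity must be verified for $I$ globally. The two ingredients that make this work are the identity $M_{\bm{\mu}}\mathbf{1}_\Omega=\mathbf{1}_N$ together with the linearity of $M_{\bm{\mu}}$ — which guarantee that the two operations isolated by \cite{ghirardato2004differentiating}, namely positive scaling and addition of a constant, are precisely the ones preserved under $M_{\bm{\mu}}$ — and the coverage lemma, which supplies, for each target vector, a profile $\bm{\mu}$ realizing it. A point to handle carefully is confirming that C-Independence (plus the standing properties of $\mathcal{R}$ and nondegeneracy from unboundedness of $u$) really does meet the hypotheses of the invariant biseparable characterization, so that $J_{\bm{\mu}}$ is genuinely positively homogeneous and constant-additive rather than merely satisfying the weaker mixed ``scale-and-shift'' relation that C-Independence states verbatim.
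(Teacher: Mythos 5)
Your proposal is correct, and its skeleton matches the paper's: both directions run through Theorem \ref{thmdual} and the \cite{chandrasekher2022dual} equivalence between monotone constant-linear functionals and max-min collections, so everything reduces to the claim that C-Independence for every $\bm{\mu}$ holds iff the aggregation functional $I$ is constant linear. Where you genuinely diverge is the hard direction. The paper argues by contraposition: from a single failure $I(\beta a+(1-\beta)c1_n)\neq\beta I(a)+(1-\beta)c$ it builds one two-state profile $\mu_i^*=\bigl(\tfrac{a_i-\ub{a}}{\bar{a}-\ub{a}},\tfrac{\bar{a}-a_i}{\bar{a}-\ub{a}},0,\dots\bigr)$ and acts $f^*,x^*,y^*$ realizing $\bm{\mu}^*\cdot u_{f^*}=a$, and reads off a verbatim violation of C-Independence --- elementary, self-contained, and with no need to upgrade the mixture-form axiom to full homogeneity and constant additivity. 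You argue directly: per profile, invoke \cite{ghirardato2004differentiating} to get a constant-linear $J_{\bm{\mu}}$, identify $J_{\bm{\mu}}=I\circ M_{\bm{\mu}}$ (legitimate, since both are certainty-equivalent functionals agreeing on constants and $u(X)=\mathbb{R}$ makes $u$-acts exhaust $\mathbb{R}^\Omega$), then globalize with your coverage lemma --- whose construction is exactly the paper's two-state profile, so the combinatorial content coincides. Your route buys a transparent structural explanation ($M_{\bm{\mu}}1_\Omega=1_n$ plus linearity of $M_{\bm{\mu}}$ is why scaling and constant shifts survive the passage from $\mathbb{R}^\Omega$ to $\mathbb{R}^N$), at the cost of importing GMM machinery the paper avoids; the step you rightly flag --- upgrading the verbatim ``scale-and-shift'' relation to genuine positive homogeneity and constant additivity --- is the one nontrivial piece you must supply, and it goes through by the standard normalization-and-rescaling argument as in \cite{gilboa1989maxmin}.
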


 C-Independence is the key axiom for invariant biseparable preferences \citep{ghirardato2004differentiating},\footnote{ Notable special cases of invariant biseparable preferences include $\alpha$-MEU and Choquet expected utility.} and Theorem \ref{thm41} characterizes this family of preferences in the setting of belief aggregation. The ``only if'' side of the theorem is a corollary of Theorem \ref{thmdual}: if an aggregation rule satisfies Weak Commutativity, then it is a dual-self aggregation rule in (\ref{dual}). This expression can be rewritten as a dual-self expected utility representation in \cite{chandrasekher2022dual}, which satisfies C-Independence. For the ``if'' side, we need to show C-Independence, a property for the DM's preference, implies the constant linearity of the aggregation functional $I$. Appendix \ref{secb} gives the detail of the proof.

 Theorem \ref{thm41} shows that in belief aggregation problems, if one finds Weak Commutativity appealing due to the reasons explained in Section \ref{sec31}, then the result of aggregating expert beliefs must satisfy C-Independence. This result can be viewed as providing a justification for C-Independence in the setting of belief aggregation, complementing the justification for the axiom in \cite{gilboa1989maxmin} that the mixture between $f$ and a constant $x$ is more easily visualized than the mixture between $f$ and an arbitrary act. 

\subsection{Non-Commutativity and Ambiguity Aversion}
\label{sec51}
In this section, I consider a DM whose preference is sensitive to the order of updating and aggregation. Recall that the update-then-aggregate approach is subject to the ``missing information'' problem since the expert priors are unobservable. In this section, I introduce Thought Experiment \ref{exp2} to illustrate how this missing information problem can give rise to uncertainty about the reason for the \textit{ex-post} disagreement among the experts. To capture a DM who is averse to this form of uncertainty, I introduce the non-commutativity condition, \textit{Pessimism to Update-then-Aggregate}, which states update-then-aggregate yields a more pessimistic outcome -- in terms of assigning (weakly) lower certainty equivalents to each act -- compared to aggregate-then-update. 
\begin{experiment}\label{exp2} 
 (Missing Information about Disagreement) Consider a state space $\Omega=\{hH,lH, hL, lL\}$, where $h,l$ (\textit{resp.} $H,L$) represents the rainfall levels in 2024 (\textit{resp.} 2025). The DM aggregates the beliefs of two experts, Alice and Bob, with priors $\mu^*_A=(.4,.1,.1,.4)$ and $\mu^*_B=(.1,.4,.4,.1)$ respectively. That is, the expert priors agree that the probability of $H$ is .5, however, conditional on $h$, their posteriors assign different probabilities to $H$, which are respectively $.8$ and $.2$. For $\mu_A^*$ and $\mu_B^*$, the expert posteriors are different because they interpret signal $h$ differently. In general, there are \textit{infinitely many} other profiles of joint priors $\{\mu_i'\}_{i=A,B}$ according to which the posterior belief for $H$ conditional on $h$ is $.8$ for Alice and $.2$ for Bob,\footnote{To guarantee that $[\mu_A'(H\mid h),\mu_B'(H\mid h)]=(.2,.8)$, it suffices to assume $\mu_A'(hH)=\mu_A'(hL)/4$ and $\mu_B'(hH)=4\mu_B'(hL)$.} each providing a \textit{reason} explaining why the experts arrive at different beliefs about $H$ \textit{ex-post}. 

Consider the DM's belief for $H$ (high rainfall for 2025) given that $h$ is realized for 2024. In the \textit{update-then-aggregate} approach, the DM only observes the experts' posteriors are .8 and .2. The DM has no idea about which joint priors generate these posteriors, in other words, she perceives uncertainty about the reason for the \textit{ex-post} disagreement among the experts. On the other hand, in the \textit{aggregate-then-update} approach, since the DM observes the experts' joint priors, the reason why the expert posteriors are different is clearly that the experts disagree about the interpretation of signal $h$. This resolves the uncertainty described above. 
\end{experiment}


Recall that Thought Experiment \ref{exp1} illustrates how Commutativity can be violated due to the missing information about expert credibility. Thought Experiment \ref{exp2} provides another reason why Commutativity can be violated: a DM who is averse to the uncertainty about the reason for \textit{ex-post} disagreement should be more pessimistic in update-then-aggregate than in the other case. Next, I formally define the pessimism to update-then-aggregate. First, a binary relation $\succsim$ is \textit{more pessimistic than} $\succsim'$ if for every $f\in\mathcal{F}$ and $x\in X$,
$
f\succsim x \implies f\succsim' x. 
$
That is, whenever $\succsim$ deems $f$ more valuable than $x$, the binary relation $\succsim'$ must reach the same conclusion.

\begin{axiom} (Pessimism to Update-then-Aggregate) \label{axiom4} 
For every $\bm{\mu}\in (\Delta\Omega)^n$ and $E\in\mathcal{E}(\bm{\mu})$ such that $\mu_1(E)=...=\mu_n(E)$, it holds that
$
\succsim_{\bm{\mu}^E} \text{ is more pessimistic than }\succsim_{\bm{\mu}}^E.
$
\end{axiom}

 Axiom \ref{axiom4} states the result of update-then-aggregate is more pessimistic than the result of aggregate-then-update, which is a non-commutativity assumption about the ``aversion'' to update-then-aggregate over the other case. This is because the DM is uncertain about the reason for the \textit{ex-post} disagreement among the experts in the former case. In the axiom, I restrict attention to the situtation where
\begin{align}\label{pessimismassumption}
    \mu_1(E)=...=\mu_n(E),
\end{align}
that is, all the experts agree on the probability of the $E$, so the DM does not infer about the credibility of the expert from the realization of event $E$. 

 \textsc{Remark.} Axiom \ref{axiom4} does not require that expert disagreement is restricted within $E$ (Definition \ref{def41}). As a result, the axiom puts a restriction on conditional preferences in situations where rectangularity can be violated. According to Definition \ref{condpref}, the DM's conditional preference $\succsim_{\bm{\mu}}^E$ can be \textit{incomplete}. Since $\succsim_{\bm{\mu}}^E$ reflects the strongest possible requirement for dynamic consistency, a weak order that models decision-making conditional on $E$ should be an \textit{extension} or \textit{completion} of the order $\succsim_{\bm{\mu}}^E$. In this case, the interpretation of Axiom \ref{axiom4} remains unchanged because it can be shown that $\succsim_{\bm{\mu}^E}$ is more pessimistic than any weak order in $\mathcal{R}$ that extends $\succsim_{\bm{\mu}}^E$. Finally, in the definition that $\succsim$ is more pessimistic than $\succsim'$, the two binary relations are not required to be complete, so the expression in Axiom \ref{axiom4} is well-defined.

\begin{proposition}
\label{prop41}
Suppose $|\Omega|\geq 4$. If a regular aggregation rule satisfies Pessimism to Update-then-Aggregate, then $\succsim_{\bm{\mu}}$ is ambiguity averse for every $\bm{\mu}\in(\Delta\Omega)^n$.
\end{proposition}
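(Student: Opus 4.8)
The plan is to pass to the aggregation functional $I:\mathbb{R}^N\to\mathbb{R}$ supplied by regularity (constructed in the proof sketch of Theorem~\ref{thmdual}), for which $U_{\bm{\mu}}(f)=I(\bm{\mu}\cdot u_f)$ for all $\bm{\mu}$ and $f$. Pareto and Monotonicity make $I$ nondecreasing coordinatewise, and $U_{\bm{\mu}}(x)=u(x)$ for constant acts gives the normalization $I(c\bm{1})=c$ for every $c\in\mathbb{R}$, where $\bm{1}=(1,\dots,1)$. Because $u$ is affine, $\bm{\mu}\cdot u_{\alpha f+(1-\alpha)g}=\alpha(\bm{\mu}\cdot u_f)+(1-\alpha)(\bm{\mu}\cdot u_g)$, so ambiguity aversion of $\succsim_{\bm{\mu}}$ for every $\bm{\mu}$ follows once $I$ is shown to be quasiconcave on $\mathbb{R}^N$. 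Hence the whole proof reduces to establishing quasiconcavity of $I$.

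First I would translate Axiom~\ref{axiom4} into an inequality on $I$. Fix $\alpha\in(0,1)$ and a profile $\bm{\mu}$ with $\mu_i(E)=\alpha$ for all $i$; the martingale identity $\bm{\mu}=\alpha\bm{\mu}^E+(1-\alpha)\bm{\mu}^{E^c}$ together with bilinearity of the inner product gives $\bm{\mu}\cdot u_{fEh}=\alpha(\bm{\mu}^E\cdot u_f)+(1-\alpha)(\bm{\mu}^{E^c}\cdot u_h)$. Writing $a=\bm{\mu}^E\cdot u_f$, $b=\bm{\mu}^{E^c}\cdot u_h$ and $c=u(x)$, and recalling that ``more pessimistic'' only compares acts to constants (so the possible incompleteness of $\succsim_{\bm{\mu}}^E$ is harmless), the statement that $\succsim_{\bm{\mu}^E}$ is more pessimistic than $\succsim_{\bm{\mu}}^E$ becomes: whenever $I(a)\geq c$, then $I(\alpha a+(1-\alpha)b)\geq I(\alpha c\bm{1}+(1-\alpha)b)$ for every achievable $b$.

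The crux of the argument, and the step I expect to be the main obstacle, is to show that in this inequality $a$ and $b$ may be taken to be arbitrary vectors of $\mathbb{R}^N$ and $\alpha$ arbitrary in $(0,1)$ --- which is exactly where $|\Omega|\geq4$ enters. Splitting $\Omega=E\cup E^c$ with $|E|,|E^c|\geq2$ and using that $u$ is unbounded (so $u(X)=\mathbb{R}$), any target $a^*\in\mathbb{R}^N$ is realizable by exploiting the freedom in the conditionals: fix $\omega_1,\omega_2\in E$ and an act $f$ with $u(f(\omega_1))=M$, $u(f(\omega_2))=m$ where $m<\min_i a_i^*$ and $\max_i a_i^*<M$, and set $\mu_i^E$ to put mass $(a_i^*-m)/(M-m)\in(0,1)$ on $\omega_1$; then $\bm{\mu}^E\cdot u_f=a^*$. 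The analogous construction on $E^c$ realizes any $b^*$, and setting $\mu_i=\alpha\mu_i^E+(1-\alpha)\mu_i^{E^c}$ leaves $\alpha$ free in $(0,1)$. (Were either cell a singleton, the corresponding evaluation profile would be forced constant across experts, which is precisely why $|\Omega|=3$ is excluded.) Specializing to $c=I(a)$ then yields, for all $a,b\in\mathbb{R}^N$ and $\alpha\in(0,1)$,
\begin{equation*}
I(\alpha a+(1-\alpha)b)\geq I\big(\alpha I(a)\bm{1}+(1-\alpha)b\big).\tag{$\star$}
\end{equation*}

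Finally I would deduce quasiconcavity. Suppose $I(v),I(w)\geq s$ and $\beta\in(0,1)$. Applying $(\star)$ with $(a,b,\alpha)=(v,w,\beta)$ and then monotonicity (using $I(v)\geq s$) gives $I(\beta v+(1-\beta)w)\geq I(\beta I(v)\bm{1}+(1-\beta)w)\geq I(\beta s\bm{1}+(1-\beta)w)$. Applying $(\star)$ once more with $(a,b,\alpha)=(w,s\bm{1},1-\beta)$, then monotonicity (using $I(w)\geq s$) and the normalization $I(t\bm{1})=t$, gives $I(\beta s\bm{1}+(1-\beta)w)\geq I\big((1-\beta)I(w)\bm{1}+\beta s\bm{1}\big)\geq s$. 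Chaining the two bounds yields $I(\beta v+(1-\beta)w)\geq s$, so the superlevel sets of $I$ are convex; thus $I$ is quasiconcave and $\succsim_{\bm{\mu}}$ is ambiguity averse for every $\bm{\mu}$. The only delicate part is the realizability construction; everything after $(\star)$ is routine manipulation with monotonicity and normalization.
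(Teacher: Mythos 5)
Your proposal is correct, and it reaches the conclusion by a genuinely different key step than the paper. The paper also passes to the aggregation functional $I$ and exploits $|\Omega|\geq 4$ via an explicit realizability construction, but it applies Axiom \ref{axiom4} \emph{twice in tandem} on both cells of a partition $\{G,G^c\}$ with $\{1,2\}\subset G$ and $\{3,4\}\subset G^c$: a single act $f^*$ and profile $\bm{\mu}^*$ are built so that $\bm{\mu}^*\cdot u_{f^*}=\beta a+(1-\beta)b$, ${\bm{\mu}^*}^G\cdot u_{f^*}=a$ and ${\bm{\mu}^*}^{G^c}\cdot u_{f^*}=b$; chaining the two conditional comparisons through the act $x^*Gy^*$ (whose evaluation profile is the constant vector $[\beta I(a)+(1-\beta)I(b)]1_n$) yields full \emph{concavity}, $I(\beta a+(1-\beta)b)\geq \beta I(a)+(1-\beta)I(b)$, from which ambiguity aversion is immediate. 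You instead invoke the axiom on a single event, obtaining the one-sided inequality $(\star)$, and recover only \emph{quasiconcavity} by applying $(\star)$ twice together with monotonicity and the normalization $I(s1_n)=s$; since quasiconcavity of $I$ suffices for ambiguity aversion (mixtures of acts translate into mixtures of evaluation profiles by affinity of $u$), this closes the proof. Each route buys something: the paper's two-cell construction delivers the stronger structural fact that $I$ is concave --- a genuine strengthening here, because without Weak Commutativity $I$ need not be constant linear, so quasiconcavity and concavity are not interchangeable --- while your argument is more economical in its use of the axiom and, thanks to the free two-parameter $(M,m)$ normalization in your realizability step, handles constant vectors $a$ or $b$ uniformly, avoiding the paper's separate Case 2 for $\ub{a}=\bar{a}$. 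Your remaining bookkeeping is sound: the translation of ``more pessimistic'' (which only compares acts to constants, so the possible incompleteness of $\succsim_{\bm{\mu}}^E$ is indeed harmless), the verification that $\mu_i(E)=\alpha>0$ for all $i$ so that $E\in\mathcal{E}(\bm{\mu})$ and the axiom applies, and the existence of $x$ with $u(x)=I(a)$ via unboundedness of $u$ all match what the paper needs at the corresponding points.
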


Proposition \ref{prop41} shows that Axiom \ref{axiom4}, which reflects the aversion to uncertainty regarding the reason for disagreement, implies the DM exhibits ambiguity aversion. The definition of ambiguity aversion is usually justified by the ``hedging'' argument that mixing two Anscombe-Aumann acts $f$ and $g$ hedges the uncertainty about payoffs across the states. Proposition \ref{prop41} provides a novel way to understand ambiguity aversion in the setting of belief aggregation. 

In Proposition \ref{prop41}, I assume that the cardinality of the state space is weakly greater than 4. This assumption is necessary because when $|\Omega|\leq 3$, Axiom \ref{axiom4} (Pessimism to Update-then-Aggregate) is weaker than Axiom \ref{weak} (Weak Commutativity). In that case,  Axiom \ref{axiom4} would have no implication in the DM's ambiguity attitude. To illustrate this point, first notice that when $|\Omega|\leq 2$, both axioms are innocuous because the experts' posterior beliefs are either degenerate (when the realized event $E$ is a singleton) or equal to the prior beliefs (when $E=\Omega$).  Now suppose $|\Omega|=3$, $E=\{1,2\}$ and $E^c=\{3\}$. In this case, Pessimism to Update-then-Aggregate requires $\mu_1(E)=...=\mu_n(E)$, which implies $\mu_1(\omega=3)=...=\mu_n(\omega=3)$, that is, expert disagreement is restricted within $E$ (Definition \ref{def41}). Under this constraint, Weak Commutativity would imply that $\succsim_{\bm{\mu}^E}$ is identical to $\succsim_{\bm{\mu}}^E$. In this case, $\succsim_{\bm{\mu}^E}$ is (weakly) more pessimistic than $\succsim_{\bm{\mu}}^E$, as is required in Pessimism to Update-then-Aggregate. On the other hand, when $|\Omega|\geq 4$, each of Axiom \ref{weak} and \ref{axiom4} neither implies nor is implied by the other.

The next result shows that Axiom \ref{weak} (Weak Commutativity) and Axiom \ref{axiom4} (Pessimism to Update-then-Aggregate) jointly characterize the multiple-weight aggregation rule, which also provides a characterization for MEU in the setting of belief aggregation. 

\begin{corollary}\label{coro42}
Suppose $|\Omega|\geq 4$. For a regular aggregation rule $\sigma$, the following statements are equivalent:
\begin{enumerate}
    \item Weak Commutativity and Pessimism to Update-then-Aggregate hold;
    \item Weak Commutativity holds and $\succsim_{\bm{\mu}}$ is ambiguity-averse for every $\bm{\mu}\in (\Delta\Omega)^n$;
    \item $\sigma$ is a multiple-weight aggregation rule;
    \item $\succsim_{\bm{\mu}}$ has an MEU representation for every $\bm{\mu}\in (\Delta\Omega)^n$.
\end{enumerate}
\end{corollary}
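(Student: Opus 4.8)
The plan is to prove the four statements equivalent by establishing the cycle $(2)\Leftrightarrow(3)\Leftrightarrow(4)$ together with $(1)\Leftrightarrow(2)$. All of the links except $(2)\Rightarrow(1)$ are short consequences of results already available; the one requiring genuine computation is $(2)\Rightarrow(1)$, which is where I expect the real work to sit.

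First, $(2)\Leftrightarrow(3)$ is precisely Theorem \ref{ambaverse}. For $(3)\Rightarrow(4)$, I fix $\bm\mu$ and set $P_{\bm\mu}=\{p_\lambda(\bm\mu):\lambda\in\Lambda\}$; since $\Lambda$ is compact and convex and $\lambda\mapsto p_\lambda(\bm\mu)=\sum_i\lambda_i\mu_i$ is continuous and affine, $P_{\bm\mu}\subset\Delta\Omega$ is compact and convex, and $U_{\bm\mu}(f)=\min_{p\in P_{\bm\mu}}\int_\Omega u\circ f\,dp$ is an MEU representation. For $(4)\Rightarrow(2)$, recall that every MEU preference satisfies both Uncertainty Aversion (hence each $\succsim_{\bm\mu}$ is ambiguity averse) and C-Independence; since every $\succsim_{\bm\mu}$ then satisfies C-Independence, Theorem \ref{thm41} delivers Weak Commutativity, giving $(2)$. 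This closes $(2)\Leftrightarrow(3)\Leftrightarrow(4)$. For $(1)\Rightarrow(2)$, the hypothesis $|\Omega|\geq4$ lets Proposition \ref{prop41} convert Pessimism to Update-then-Aggregate into ambiguity aversion of every $\succsim_{\bm\mu}$, while Weak Commutativity is shared by $(1)$ and $(2)$; this is the only place the cardinality condition is used.

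The substantive step is $(2)\Rightarrow(1)$, and by $(2)\Rightarrow(3)$ I may work with the multiple-weight form and priors $P_{\bm\mu}=\{p_\lambda(\bm\mu):\lambda\in\Lambda\}$. Fix $\bm\mu$ and $E\in\mathcal{E}(\bm\mu)$ with $\mu_1(E)=\cdots=\mu_n(E)=\alpha$. The first observation is that every $q=p_\lambda(\bm\mu)\in P_{\bm\mu}$ then has $q(E)=\sum_i\lambda_i\mu_i(E)=\alpha$, and a direct computation using $\mu_i^E(F)=\mu_i(F\mid E)$ gives $q(\cdot\mid E)=p_\lambda(\bm\mu^E)$, so the prior-by-prior update of $P_{\bm\mu}$ on $E$ is exactly $P_{\bm\mu^E}$. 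Now suppose $f\succsim_{\bm\mu^E}x$, i.e. $\min_{q\in P_{\bm\mu}}\tfrac1\alpha\int_E u\circ f\,dq\geq u(x)$, equivalently $\int_E u\circ f\,dq\geq\alpha\,u(x)$ for every $q\in P_{\bm\mu}$. Then for any $h\in\mathcal{F}$ and any $q\in P_{\bm\mu}$,
\[
\int_E u\circ f\,dq+\int_{E^c}u\circ h\,dq\ \geq\ \alpha\,u(x)+\int_{E^c}u\circ h\,dq .
\]
Because the inequality holds pointwise in $q$, taking the minimum over $q\in P_{\bm\mu}$ on each side yields $U_{\bm\mu}(fEh)\geq U_{\bm\mu}(xEh)$, that is $fEh\succsim_{\bm\mu}xEh$; as $h$ was arbitrary, Definition \ref{condpref} gives $f\succsim_{\bm\mu}^E x$. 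Hence $\succsim_{\bm\mu^E}$ is more pessimistic than $\succsim_{\bm\mu}^E$, which is exactly Pessimism to Update-then-Aggregate, establishing $(2)\Rightarrow(1)$.

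The main obstacle is this final step, and within it the crucial device is the identity $q(E)=\alpha$ for all $q\in P_{\bm\mu}$: the restriction $\mu_1(E)=\cdots=\mu_n(E)$ forces the \emph{entire} aggregated set of priors to agree on the probability of $E$, which is precisely what makes the prior-by-prior update of $P_{\bm\mu}$ equal to $P_{\bm\mu^E}$ and legitimizes passing the pointwise inequality through the minimum. Note that this direction does not use $|\Omega|\geq4$, consistent with the earlier observation that the cardinality assumption is only needed to run Proposition \ref{prop41} in the reverse direction.
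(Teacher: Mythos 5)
Your proof is correct, and its skeleton matches the paper's: both use Theorem \ref{ambaverse} for $(2)\Leftrightarrow(3)$ and Proposition \ref{prop41} for $(1)\Rightarrow(2)$, and both locate the substantive work in showing that a multiple-weight rule satisfies Pessimism to Update-then-Aggregate. You diverge in two places, both to good effect. First, on statement (4): the paper proves $(2)\Leftrightarrow(4)$ axiomatically, combining Theorem \ref{thm41} (Weak Commutativity $\Leftrightarrow$ C-Independence) with the Gilboa--Schmeidler characterization that C-Independence plus ambiguity aversion, for a continuous monotone weak order, is equivalent to an MEU representation. You instead prove $(3)\Rightarrow(4)$ by simply reading off the MEU representation --- $P_{\bm{\mu}}=\{p_\lambda(\bm{\mu}):\lambda\in\Lambda\}$ is compact and convex as the continuous affine image of $\Lambda$ --- and close the loop with $(4)\Rightarrow(2)$ via the standard MEU axioms and Theorem \ref{thm41}; this avoids invoking the full representation theorem in one direction and is arguably more elementary. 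Second, in the key step: the paper establishes the containment $P_{\Lambda}(\bm{\mu})\subset \alpha P_{\Lambda}(\bm{\mu}^E)+(1-\alpha)P_{\Lambda}(\bm{\mu}^{E^c})$ and splits the minimum over the Minkowski sum, whereas you observe that $q(E)=\alpha$ for \emph{every} $q\in P_{\bm{\mu}}$ and that Bayesian updating commutes exactly with linear pooling when all experts agree on $\mu_i(E)$, so that the prior-by-prior update of $P_{\bm{\mu}}$ equals $P_{\bm{\mu}^E}$ precisely; the desired inequality then holds pointwise in $q$ and survives the minimum. Your device is more direct and in fact yields an identity where the paper only needs (and only gets) an inclusion, while the paper's Minkowski-sum formulation makes the rectangularity structure of the aggregated priors more visible. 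One small bonus of your pointwise argument: it handles the boundary case $\mu_i(E)=1$ seamlessly, since the $\int_{E^c}$ terms simply vanish, whereas the paper's write-up tacitly takes $\alpha\in(0,1)$. Your remark that $|\Omega|\geq 4$ enters only through Proposition \ref{prop41} agrees with the paper.
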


\subsection{Moderate Commutativity and Independence}

In this section, I introduce Axiom \ref{moderatecom}, a commutativity condition that jointly strengthens Axiom \ref{weak} (Weak Commutativity) and Axiom \ref{axiom4} (Pessimism to Update-then-Aggregate). I show that Axiom \ref{moderatecom} characterizes the classic Independence axiom in \cite{anscombe1963definition}.

\begin{axiom} \label{moderatecom} (\textit{Moderate Commutativity})
\label{MC}
For every $\bm{\mu}\in(\Delta\Omega)^n$  and $E\in\mathcal{E}(\bm{\mu})$ such that $\mu_1(E)= ...=\mu_n(E)$,
$$
f \succsim_{\bm{\mu}}^E g \iff f \succsim_{\bm{\mu}^E} g.
$$
\end{axiom}

Axiom \ref{moderatecom} (Moderate Commutativity) states that the DM's preference is uninfluenced by the order of updating and aggregation if all the experts agree about the probability of the realized event $E$. The axiom strengthens both Axiom \ref{weak} (Weak Commutativity) and Axiom \ref{axiom4} (Pessimism to Update-then-Aggregate). First, Axiom \ref{moderatecom} strengthens Axiom \ref{weak} because it drops the constraint on $\bm{\mu}$ that the experts must agree on the probabilities of \textit{every} state in $E^c$. Second, Axiom \ref{moderatecom} requires $\succsim_{\bm{\mu}}^E$ to be identical to $\succsim_{\bm{\mu}^E}$, which implies the latter being (weakly) more pessimistic than the former, as is required by Axiom \ref{axiom4}. Since Axiom \ref{moderatecom} rules out the possibility that $\succsim_{\bm{\mu}^E}$ is strictly more pessimistic than $\succsim_{\bm{\mu}}^E$, it captures the neutral attitude to the uncertainty about the reason for \textit{ex-post} disagreement. 

The next result shows that Moderate Commutativity is equivalent to the standard Independence axiom when $|\Omega|\geq 4$. A preference relation $\succsim$ satisfies \textit{Independence} if for every $f,g,h\in\mathcal{F}$ and $\alpha\in (0,1)$, $f\succsim g \iff \alpha f+(1-\alpha)h\succsim \alpha g+(1-\alpha)h$.

\begin{theorem} \label{thm42}
Suppose $|\Omega|\geq 4$. A regular aggregation rule satisfies Moderate Commutativity if and only if $\succsim_{\bm{\mu}}$ satisfies Independence for every $\bm{\mu}\in(\Delta\Omega)^n$.
\end{theorem}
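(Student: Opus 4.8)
The plan is to reduce the whole statement to a single object and a single property: the aggregation functional $I:\mathbb{R}^N\to\mathbb{R}$. By the representation underlying Theorem \ref{thmdual}, regularity produces $I$ with $U_{\bm{\mu}}(f)=I(\bm{\mu}\cdot u_f)$, and since act mixtures are carried to profile mixtures, $\bm{\mu}\cdot u_{\alpha f+(1-\alpha)g}=\alpha(\bm{\mu}\cdot u_f)+(1-\alpha)(\bm{\mu}\cdot u_g)$, the normalization $I(c\mathbf{1})=c$ (consistency with $\succsim_0$) and monotonicity of $I$ (Monotonicity) hold throughout. The following correspondences then organize the argument: $\succsim_{\bm{\mu}}$ satisfies Independence for all $\bm{\mu}$ iff $I$ is linear, $I(\phi)=\sum_i\lambda_i\phi_i$ with $\lambda\in\Delta N$; $\succsim_{\bm{\mu}}$ satisfies C-Independence for all $\bm{\mu}$ iff $I$ is constant-linear; and ambiguity aversion of all $\succsim_{\bm{\mu}}$ corresponds to $I$ being concave (equivalently superadditive, given constant-linearity).

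For the ``if'' direction (Independence $\Rightarrow$ Moderate Commutativity), which needs no cardinality restriction, Independence makes each $U_{\bm{\mu}}$ mixture-affine, so $I$ is affine on the mixture-closed set of achievable profiles; an affine, normalized, monotone functional is linear, giving $U_{\bm{\mu}}(f)=EU_{p_{\lambda}(\bm{\mu})}(f)$, a linear aggregation rule (consistent with Theorem \ref{ambfree}, since SEU satisfies P2). Now fix $E$ with $\mu_1(E)=\dots=\mu_n(E)=\alpha$. As $\succsim_{\bm{\mu}}$ is SEU with belief $p_{\lambda}(\bm{\mu})$, its conditional $\succsim_{\bm{\mu}}^E$ from Definition \ref{condpref} is SEU with the Bayesian update $[p_{\lambda}(\bm{\mu})]^E$, while $\succsim_{\bm{\mu}^E}$ is SEU with belief $p_{\lambda}(\bm{\mu}^E)$. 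The common denominator $\sum_i\lambda_i\mu_i(E)=\alpha$ makes these beliefs coincide, since $[p_{\lambda}(\bm{\mu})]^E(F)=\tfrac{1}{\alpha}\sum_i\lambda_i\mu_i(E\cap F)=\sum_i\lambda_i\mu_i^E(F)=p_{\lambda}(\bm{\mu}^E)(F)$, so $\succsim_{\bm{\mu}}^E=\succsim_{\bm{\mu}^E}$ and Moderate Commutativity holds.

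For the ``only if'' direction (Moderate Commutativity $\Rightarrow$ Independence) I would exploit that Moderate Commutativity is a \emph{two-sided} strengthening. It strengthens Weak Commutativity (its hypothesis $\mu_1(E)=\dots=\mu_n(E)$ is implied by disagreement being restricted within $E$), so Theorem \ref{thm41} yields C-Independence and hence $I$ constant-linear. It also implies Pessimism to Update-then-Aggregate, since the equality $\succsim_{\bm{\mu}^E}=\succsim_{\bm{\mu}}^E$ certainly makes $\succsim_{\bm{\mu}^E}$ more pessimistic than $\succsim_{\bm{\mu}}^E$; thus Proposition \ref{prop41} (here $|\Omega|\ge 4$ enters) gives ambiguity aversion, i.e. $I$ concave. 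Crucially, the same equality makes $\succsim_{\bm{\mu}}^E$ more pessimistic than $\succsim_{\bm{\mu}^E}$ — the mirror image of Axiom \ref{axiom4} — and the mirror of Proposition \ref{prop41}, proved by the symmetric construction (replacing $u$ by $-u$), yields ambiguity seeking, i.e. $I$ convex. A constant-linear functional that is both concave and convex is linear, so $I(\phi)=\sum_i\lambda_i\phi_i$, every $\succsim_{\bm{\mu}}$ is SEU, and Independence follows.

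The main obstacle is manufacturing the ambiguity-seeking half of the ``only if'' direction. I expect the cleanest route is to establish the exact mirror of Proposition \ref{prop41} and check its proof is genuinely symmetric; this is precisely where $|\Omega|\ge 4$ is indispensable, just as in Proposition \ref{prop41} (for $|\Omega|\le 3$ the hypothesis degenerates into Weak Commutativity and carries no attitudinal content). An alternative that sidesteps a mirror proposition is to first invoke Corollary \ref{coro42} to represent each $\succsim_{\bm{\mu}}$ as MEU with prior set $\{p_{\lambda}(\bm{\mu}):\lambda\in\Lambda\}$, note that its prior-by-prior update along any $E$ with $\mu_1(E)=\dots=\mu_n(E)$ equals $\{p_{\lambda}(\bm{\mu}^E):\lambda\in\Lambda\}$, and use Moderate Commutativity to force rectangularity with respect to every such $\{E,E^c\}$; the delicate combinatorial step is then to show that rectangularity across this rich family of partitions (available only when $|\Omega|\ge 4$) collapses $\Lambda$ to a singleton. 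Either way, the heart of the matter is converting the two-sided equality in Moderate Commutativity into the additivity of $I$.
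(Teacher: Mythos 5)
Your proposal is correct, but your ``only if'' argument is organized differently from the paper's, so a comparison is worth recording. The paper proves mixture linearity of the aggregation functional $I$ in a single pass: it builds exactly the $f^*$ and $\bm{\mu}^*$ used in the proof of Proposition \ref{prop41} (with $\{1,2\}\subset G$, $\{3,4\}\subset G^c$ and every expert assigning probability $\beta$ to $G$ --- this is where $|\Omega|\geq 4$ enters), applies Moderate Commutativity to obtain the two \emph{indifferences} $f^*\sim_{\bm{\mu}^*}^{G}x^*$ and $f^*\sim_{\bm{\mu}^*}^{H}y^*$ with $u(x^*)=I(a)$ and $u(y^*)=I(b)$, and reads off $I(\beta a+(1-\beta)b)=\beta I(a)+(1-\beta)I(b)$ directly. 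You instead split the equality in Moderate Commutativity into two one-sided conditions, getting concavity of $I$ from Proposition \ref{prop41} and convexity from a mirrored version; your mirror does go through exactly by the sign-flip you propose: define the dual rule by $\tilde{U}_{\bm{\mu}}(f)=-I\left(\bm{\mu}\cdot(-u_f)\right)$ (well-defined since $u$ is unbounded in both directions), check that it is a regular aggregation rule whose Definition \ref{condpref} conditionals correspond to the originals with $u_f$ replaced by $-u_f$, observe that Moderate Commutativity for $\sigma$ delivers Axiom \ref{axiom4} for the dual rule, and apply Proposition \ref{prop41} to the dual to get convexity of $I$. Since the construction inside Proposition \ref{prop41} is the same one the paper deploys in its direct proof, the mathematical core is shared; what your decomposition buys is reuse of already-proved results and an explicit ``two-sided pessimism/optimism'' reading of Moderate Commutativity, at the cost of having to state and verify the mirror, which the paper's one-shot argument avoids. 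Two minor points: your invocation of Theorem \ref{thm41} for constant linearity is redundant --- concave plus convex gives affine, and $I(c1_n)=c$ (immediate from $U_{\bm{\mu}}(x)=u(x)$) together with monotonicity then forces $I(a)=\sum_i\lambda_i a_i$ with $\lambda\in\Delta N$; and your ``if'' direction coincides with the paper's (linearity of $I$, then the Bayes computation $[p_{\lambda}(\bm{\mu})]^E=p_{\lambda}(\bm{\mu}^E)$ using the common value $\mu_i(E)=\alpha$), your side remark that this direction needs no cardinality restriction beyond the ambient $|\Omega|\geq 3$ being accurate. Your sketched alternative via MEU and rectangularity is left genuinely incomplete (the collapse of $\Lambda$ to a singleton is not carried out), so the dual-rule route should be regarded as your actual proof; as such, the proposal is sound.
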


Taking Theorem \ref{ambfree} and Theorem \ref{thm42} together leads to the following corollary.

\begin{corollary} \label{coro43}
Suppose $|\Omega|\geq 4$. For a regular aggregation rule $\sigma$, the following statements are equivalent:
\begin{enumerate}
    \item Moderate Commutativity holds;
    \item Weak Commutativity holds and $\succsim_{\bm{\mu}}$ satisfies P2 for every $\bm{\mu}$;
    \item $\succsim_{\bm{\mu}}$ satisfies Independence for every $\bm{\mu}$;
    \item The aggregation rule is a linear aggregation rule. 
\end{enumerate}
\end{corollary}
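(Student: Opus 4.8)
The plan is to prove the four-way equivalence by citing Theorem \ref{thm42} and Theorem \ref{ambfree} as the two substantive inputs, and then closing a short cycle of implications among the remaining statements using only elementary facts about the Anscombe--Aumann axioms. Concretely, I would establish $(1)\iff(3)$ directly and then run the cycle $(3)\Rightarrow(2)\Rightarrow(4)\Rightarrow(3)$, which together pins all four statements to one another.

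The first step is immediate: because $|\Omega|\geq 4$ is assumed, Theorem \ref{thm42} states precisely that a regular aggregation rule satisfies Moderate Commutativity if and only if $\succsim_{\bm{\mu}}$ satisfies Independence for every $\bm{\mu}$, which is the equivalence $(1)\iff(3)$. For $(3)\Rightarrow(2)$ I would invoke two standard facts. Independence implies C-Independence (take the mixing act $h$ in the definition of Independence to be a constant $x\in X$), and Independence implies P2, since the sure-thing principle is a weakening of Independence. Given C-Independence for every $\bm{\mu}$, Theorem \ref{thm41} yields Weak Commutativity; combined with P2 this is exactly statement (2).

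For $(2)\Rightarrow(4)$, observe that statement (2) supplies P2 for every $\bm{\mu}$ together with Weak Commutativity, which is exactly the hypothesis-and-conclusion configuration of Theorem \ref{ambfree}; that theorem then delivers that $\sigma$ is a linear aggregation rule, i.e.\ statement (4). For $(4)\Rightarrow(3)$ I would verify directly that a linear aggregation rule induces, for each $\bm{\mu}$, the SEU preference represented by $U_{\bm{\mu}}(f)=\int_\Omega u\circ f\, dp_\lambda(\bm{\mu})$; this functional is affine in lottery mixtures, so $\succsim_{\bm{\mu}}$ satisfies Independence. Chaining $(3)\Rightarrow(2)\Rightarrow(4)\Rightarrow(3)$ with the biconditional $(1)\iff(3)$ shows all four statements coincide.

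I do not expect a substantive obstacle here, since the real content is carried by the two cited theorems and the corollary is essentially logical bookkeeping. The one point deserving care is that only the step $(1)\iff(3)$ actually uses the cardinality hypothesis $|\Omega|\geq 4$ (through Theorem \ref{thm42}); the remaining implications $(3)\Rightarrow(2)$, $(2)\Rightarrow(4)$, and $(4)\Rightarrow(3)$ go through on the base setup $|\Omega|\geq 3$. The main risk is a reversed or missing link in the cycle, so I would double-check that Theorem \ref{ambfree} is indeed stated under a blanket P2 assumption (so that statement (2) matches its hypotheses exactly) and that Theorem \ref{thm41} converts C-Independence into Weak Commutativity using no regularity beyond what $\sigma$ is already assumed to possess.
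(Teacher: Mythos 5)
Your proposal is correct and takes essentially the same approach as the paper: both rest on Theorem \ref{thm42} for the equivalence $(1)\iff(3)$ and Theorem \ref{ambfree} for the equivalence $(2)\iff(4)$, and both close the cycle with the routine observation that a linear aggregation rule yields SEU preferences satisfying Independence. The only cosmetic difference is your bridge $(3)\Rightarrow(2)$ via C-Independence and Theorem \ref{thm41} together with the standard fact that Independence implies P2, where the paper instead reuses the mixture-linearity argument from the proof of Theorem \ref{thm42} to obtain $(3)\Rightarrow(4)$ directly --- both links are sound, and your remark that $|\Omega|\geq 4$ is needed only for the step through Theorem \ref{thm42} is accurate.
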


\subsection{An Impossibility Result}

I show that further strengthening of the commutativity condition leads to an impossibility result. An aggregation rule satisfies \textit{Commutativity} if for every $\bm{\mu}\in(\Delta\Omega)^n$ and $E\in\mathcal{E}(\bm{\mu})$,
$$
f \succsim_{\bm{\mu}}^E g \iff f \succsim_{\bm{\mu}^E} g.
$$
An aggregation rule is a {dictatorship} if there exists an expert $i\in N$ such that $\succsim_{\bm{\mu}}$ is represented by $EU_{\mu_i}$ for all $\bm{\mu}\in (\Delta\Omega)^n$.

\begin{theorem}
\label{thmimposs}
Suppose $|\Omega|\geq 4$. A regular aggregation rule satisfies \textit{Commutativity} if and only if it is a dictatorship.
\end{theorem}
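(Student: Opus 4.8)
The plan is to prove both implications, with essentially all the work concentrated in the ``only if'' direction.

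For the ``if'' direction I would argue directly. Suppose $\sigma$ is a dictatorship with dictator $i$, so $\succsim_{\bm{\mu}}$ is SEU with belief $\mu_i$ for every $\bm{\mu}$. Since $u$ is unbounded, Definition \ref{condpref} yields Bayesian updating for SEU preferences \citep{ghirardato2002revisiting}, so $\succsim_{\bm{\mu}}^E$ is the SEU preference with belief $\mu_i(\cdot\mid E)=\mu_i^E$. On the other side, $\succsim_{\bm{\mu}^E}$ is the dictatorship applied to the posterior profile $\bm{\mu}^E$, hence SEU with belief exactly $\mu_i^E$. The two preferences coincide, so Commutativity holds (and a dictatorship is clearly regular).

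For the ``only if'' direction I would first reduce to linearity using the paper's earlier results. Note that Commutativity is strictly stronger than Moderate Commutativity (Axiom \ref{moderatecom}): the latter imposes the commuting condition only on events $E$ with $\mu_1(E)=\dots=\mu_n(E)$, while Commutativity imposes it on every $E\in\mathcal{E}(\bm{\mu})$. Hence a regular rule satisfying Commutativity satisfies Moderate Commutativity, and by Corollary \ref{coro43} (which uses $|\Omega|\geq 4$) it is a linear aggregation rule: there is a fixed weight $\lambda\in\Delta N$ with $U_{\bm{\mu}}(f)=\int_\Omega u\circ f\,dp_\lambda(\bm{\mu})$. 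It then remains to show that a linear rule can satisfy the \emph{full}, unrestricted Commutativity only if $\lambda$ is degenerate. For this I would translate Commutativity into a statement about beliefs: $\succsim_{\bm{\mu}}$ is SEU with belief $p_\lambda(\bm{\mu})$, so $\succsim_{\bm{\mu}}^E$ is SEU with the Bayesian posterior $[p_\lambda(\bm{\mu})](\cdot\mid E)$, whereas $\succsim_{\bm{\mu}^E}$ is SEU with $p_\lambda(\bm{\mu}^E)$. Both are probability measures supported on $E$, and (with $u$ unbounded, so the induced utility vectors span $\mathbb{R}^E$) two SEU preferences that agree must share their belief; thus Commutativity is equivalent to
\[
\frac{\sum_i \lambda_i \mu_i(\omega)}{\sum_i \lambda_i \mu_i(E)}=\sum_i \lambda_i\,\frac{\mu_i(\omega)}{\mu_i(E)}\qquad\text{for all }\bm{\mu},\ E\in\mathcal{E}(\bm{\mu}),\ \omega\in E.
\]

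The final and decisive step is to extract a dictatorship from this identity. I would specialize to a two-element event $E=\{a,b\}$, which is admissible since $|\Omega|\geq 4$ leaves $|E^{c}|\geq 2$ states on which to place the remaining mass, so that the parameters $s_i=\mu_i(E)\in(0,1]$ and $t_i=\mu_i(a\mid E)\in[0,1]$ may be chosen freely and independently across experts. Clearing denominators gives the bilinear identity $\sum_i \lambda_i s_i t_i=\big(\sum_i \lambda_i t_i\big)\big(\sum_j \lambda_j s_j\big)$, which, using $\sum_i\lambda_i=1$, rearranges to $\sum_{i<j}\lambda_i\lambda_j(s_i-s_j)(t_i-t_j)=0$ for all admissible $s,t$. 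Setting $t=s$ turns this into the sum of nonnegative terms $\sum_{i<j}\lambda_i\lambda_j(s_i-s_j)^2=0$, forcing each term to vanish; if two experts had $\lambda_i,\lambda_j>0$, choosing $s$ with $s_i\neq s_j$ makes the corresponding term strictly positive, a contradiction. Hence at most one weight is positive, and since the weights sum to one, exactly one equals $1$ — that is, $\sigma$ is a dictatorship. I expect the main obstacle to be the reduction to the belief identity: one must verify carefully that full Commutativity (not just its restricted forms) pins down the posterior of the linearly pooled belief and that the free, independent variation of $(s_i,t_i)$ is genuinely available; once these are in place the concluding algebra is routine.
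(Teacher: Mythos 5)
Your proposal is correct and follows essentially the same route as the paper: both deduce Moderate Commutativity from Commutativity, invoke the $|\Omega|\geq 4$ linearity result (Theorem \ref{thm42}, via Corollary \ref{coro43}) to reduce to a linear rule with weight $\lambda$, and then show that unrestricted Commutativity forces $\lambda$ to be degenerate. The only difference is cosmetic: where the paper refutes non-dictatorship with a single explicit profile ($\mu^*_1=(.1,0,.9,0,\dots)$ and $\mu^*_i=(0,1,0,\dots)$ for $i\neq 1$), you derive the general bilinear identity $\sum_{i<j}\lambda_i\lambda_j(s_i-s_j)(t_i-t_j)=0$ and close with a sum-of-squares argument, while also spelling out the ``if'' direction that the paper dismisses as trivial.
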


Given Thought Experiment \ref{exp1} and \ref{exp2}, it should not be surprising that Commutativity is a strong assumption that can be inconsistent with many belief aggregation rules. The literature has also documented impossibility results implied by Commutativity. For example, \cite{genest1984conflict} showed that, for a family of pooling operators that maps suggestion profiles to the DM's beliefs, a version of Commutativity leads to dictatorship. Relatedly, \cite{chambers2010bayesian} shows that, for the problem of selecting beliefs from a set of possible priors, there does not exist a belief selection rule that satisfies an assumption analogous to Commutativity. Unlike the literature, Theorem \ref{thmimposs} takes a primitive where the DM forms arbitrary continuous and monotone preferences by aggregating expert suggestions. 

\section{Discussion}
\label{sec6}

This section contains discussions of the fundamental assumptions of my model. It evaluates the two components of the regularity assumption, \textit{Pareto} and \textit{Monotonicity}.

The validity of \text{Pareto} has been questioned by the literature. \cite{gilboa2004utilitarian} presented a duel example where disagreement  over  beliefs and tastes ``cancel out'', generating spurious consensuses. In this paper, the DM interprets the belief $\mu_i$ suggested by each expert $i$ as the SEU function $\int_{\Omega} u\circ f d\mu_i$, where $u:X\rightarrow \mathbb{R}$ is the DM's utility index. As a result, there is no disagreement over tastes. \cite{mongin2016spurious} (pp. 524-525) presented an example where two experts agree that an event is more likely than another but the DM violates this ``consensus''. Mongin's example features unanimous priors and private signals, while the current paper focuses on heterogeneous priors and publicly observed event realizations.


For models of the aggregation of heterogeneous priors, \text{Pareto} can be violated if the DM is pessimistic about the \textit{overall} expertise of the experts, that is, if she thinks that she should be more pessimistic than all the experts $1,2,...,N$. Besides \text{Pareto}, the pessimism for the \text{overall} expertise of the experts can also lead to violations of \text{Monotonicity}. To illustrate this point more concretely, consider the state space $\Omega=\{G_1,G_2, B\}$, which represents the quality of an antique. The DM chooses between $f=(100, 100, -200)$ (to purchase the antique) and $x=(0,0,0)$. Consider the following two cases. In \textit{Case 1}, the experts unanimously agree that $Prob(G_1)=1$, the DM trusts the experts, and she prefers $f$ over $x$. In \textit{Case 2}, some experts believe $Prob(G_1)=1$ while others believe $Prob(G_2)=1$. While the experts still agree the value of $f$ is 100, the divergence in expert beliefs makes the DM concerned about the credibility of the consensus in the value of $f$. Such a concern drives the DM to choose the safe option $x$ instead of $f$. 

This example entails violations for both \text{Pareto} and \text{Monotonicity}. First, in \textit{Case 2}, all the experts suggest either $G_1$ or $G_2$, according to which the DM should choose $f$. However, the DM chooses the safe option $x$, which violates \text{Pareto}. Second, to see why \text{Monotonicity} is violated, notice that in both cases, all the experts unanimously agree that the value of $f$ is 100. If \text{Monotonicity} is true, this would imply that the DM attaches identical values to $f$ in both cases. This is violated in the example above because the DM prefers $f$ over $x$ in Case 1 but prefers $x$ over $f$ in Case 2. To my knowledge, the only paper in the literature on belief aggregation that focuses on related issues is \cite{bommier2021disagreement}. 

While \text{Pareto} and \text{Monotonicity} can be violated when the DM questions the overall expertise of the experts, these unanimity conditions seem more plausible in situations where the DM is sufficiently confident about the credibility of the panel of experts. When the experts $N$ are sufficiently knowledgeable about the state space $\Omega$, or when the experts adequately represent different opinions in the population, it becomes less attractive for the DM to form a preference that is more pessimistic than all knowledgeable experts or most members of society. In this case, the DM prefer option $f$ over option $x$ if all the experts believe that $f$ is more valuable than $x$. This alleviates the concerns for \text{Pareto} and \text{Monotoncity} raised above. 

\newpage

\bibliographystyle{te} 
\bibliography{myreferences}  

\newpage
\begin{appendix}
\counterwithin{figure}{section}
\counterwithin{table}{section}
\section{Proof of Theorem \ref{thmdual}}\label{seca}

The central mathematical object discussed in this paper is an aggregation rule $\sigma:(\Delta\Omega)^n\rightarrow\mathcal{R}$, which assigns a continuous and monotone weak order $\succsim_{\bm{\mu}}$ to every suggestion profile $\bm{\mu}$. 
This section presents the proof for Theorem 3.3, the main representation theorem in this paper. I start by proving the ``only if'' side. That is, I consider a fixed regular aggregation rule $\sigma$ that satisfies Weak Commutativity. I want to show the aggregation rule is a dual-self rule.

\subsection*{Step 1: Constructing the Aggregation Functional}

In the proofs of this paper, there are two important domains of functions: the set of experts $N$ and the set of states $\Omega$. To avoid confusion, I make the following distinctions in the notation system:
\begin{itemize}
    \item The set of experts is $N=\{1,2,,...,i,...,n\}$ with cardinality $|N|=n$.  $\mathbb{R}^N$ is the set of all $n$-dimentional real-valued vectors, with generic elements $a,b$. The set $\mathbb{R}^N$ is isomorphic to the set of real-valued functions with domain $N$, and I use these two terms interchangeably in the proof. For every $a\in\mathbb{R}^N$, define $\ub{a}=\min_{i\in N}a_i$ and $\bar{a}=\max_{i\in N}a_i$. The vector $a\in\mathbb{R}^N$ is constant if $\bar{a}=\ub{a}=c$ for some real number $c$. Define the constant vector as $c1_n$. When $c=1$, I define the constant vector as $1_n$. 
    \item The set of states is $\Omega=\{1,2,...,\omega,...\}$ with cardinality $|\Omega|\equiv l\geq 3$. For act $f$ and state $\omega$, I use $f_\omega$ to denote the payoff of $f$ given state $\omega$. For example, $f_1$ is the payoff in state $1$. $\mathbb{R}^\Omega$ is the set of all real-valued functions with domain $\Omega$, with generic elements $v,w$. I use $J$ to denote a generic functional that maps $\mathbb{R}^\Omega$ to $\mathbb{R}$. For every $v\in\mathbb{R}^\Omega$, define $\ub{v}=\min_{\omega\in \Omega}v_\omega$ and $\bar{v}=\max_{\omega\in \Omega}v_\omega$. The vector $v\in\mathbb{R}^\Omega$ is constant if $\bar{v}=\ub{v}=c$ for some real number $c$. Define the constant vector as $c1_\Omega$.  When $c=1$, I define the constant vector as $1_\Omega$.
\end{itemize}
In the first step, I establish the existence of an aggregation functional $I:\mathbb{R}^N\rightarrow \mathbb{R}$, such that for all $\bm{\mu}\in\Delta\Omega$, the corresponding weak order $\succsim_{\bm{\mu}}$ is represented by 
$$
I[EU_{\mu_1}(f),EU_{\mu_2}(f), \\...,EU_{\mu_n}(f)].
$$
I use $T_{n\times l}$ to define a (real-valued) matrix with $n$ rows and $l$ columns. Next, I state the definition for right stochastic matrices. 

\begin{definition}
The set of $n\times l$ right stochastic matrix by 
$$
\mathcal{T}=\{T_{n\times l}\geq 0\mid T \cdot 1_{l}=1_n\}.
$$
\end{definition}

Lemma \ref{A1} is a technical result that is useful for the construction of the aggregation functional $I:\mathbb{R}^N\rightarrow \mathbb{R}$.

\begin{lemma}
\label{A1}
For every function $\varsigma:\mathcal{T}\times\mathbb{R}^l\rightarrow \mathbb{R}$, if for every $T,T'\in\mathcal{T}$ and $v,w\in\mathbb{R}^l$,
    \begin{align}
    \label{condition1}
    \text{ if } T  v=T  w, \text{ then } \varsigma(T,v)=\varsigma(T,w),
    \end{align}
    and
    \begin{align}
    \label{condition2}
    \text{ if } T  v=T'  v, \text{ then } \varsigma(T,v)=\varsigma(T',v),
    \end{align}
then $\varsigma$ satisfies that, for every $T,T'\in \mathcal{T}$ and $v,w\in\mathbb{R}^l$,
\begin{align*}
T  v=T'  w\implies \varsigma(T,v)=\varsigma(T',w).\tag{\textasteriskcentered}
\label{main}
\end{align*}
\end{lemma}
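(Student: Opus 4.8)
The plan is to show that hypotheses (\ref{condition1}) and (\ref{condition2}) force $\varsigma(T,v)$ to depend only on the product $Tv$, which is exactly (\ref{main}). First I would use (\ref{condition1}) to define, for each fixed $T\in\mathcal T$, a function $g_T$ on the image $T\mathbb R^l\subseteq\mathbb R^n$ by $g_T(Tv):=\varsigma(T,v)$; condition (\ref{condition1}) is precisely what makes this assignment well-defined. In this language the conclusion (\ref{main}) becomes the assertion that $g_T(y)=g_{T'}(y)$ for every $T,T'\in\mathcal T$ and every $y$ lying in both $T\mathbb R^l$ and $T'\mathbb R^l$. Since both hypotheses relate pairs with the same product, the product is an invariant of the equivalence relation they generate, and the whole content of the lemma is that it is the \emph{only} invariant.

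The engine of the argument is a ``bridge'': if there is a single $u\in\mathbb R^l$ with $Tu=T'u=y$, then
\[
g_T(y)\overset{(\ref{condition1})}{=}\varsigma(T,u)\overset{(\ref{condition2})}{=}\varsigma(T',u)\overset{(\ref{condition1})}{=}g_{T'}(y),
\]
the middle equality being a legitimate application of (\ref{condition2}) because $Tu=T'u$. Such a common preimage exists if and only if $v-w$ is orthogonal to the intersection of the row spaces of $T$ and $T'$; in particular it always exists when those row spaces meet only at the origin, and hence whenever $n=1$ (two distributions that differ are linearly independent, while if they agree the condition collapses to $y=y$). So the lemma is immediate for a single expert, and more generally whenever a direct bridge is available.

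When no direct bridge exists I would connect $(T,v)$ and $(T',w)$ through a finite chain $T=T_0,\dots,T_m=T'$ of matrices, each having $y$ in its image and each consecutive pair admitting a direct bridge, so that $g_{T_0}(y)=\cdots=g_{T_m}(y)$. The chain is assembled from the two elementary moves, and the step where $l\ge 3$ is genuinely used is the following: from any pair $(S,u)$ with $u$ non-constant, a single move of type (\ref{condition2}) replaces every row $\mu_i$ by the two-point distribution supported on a minimizer and a maximizer of $u$ with the same mean $y_i=\mu_i\cdot u$ (admissible because each $y_i\in[\min_\omega u_\omega,\max_\omega u_\omega]$). The resulting matrix is supported on just two columns, so every other coordinate direction lies in its kernel, and a subsequent move of type (\ref{condition1}) then slides $u$ freely in those directions. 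Alternating these operations, and migrating the two ``pivot'' states through a third column, lets me steer any pair with product $y$ to a single canonical representative depending only on $y$; since $(T,v)$ and $(T',w)$ reduce to the same representative, (\ref{main}) follows.

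I expect the main obstacle to be exactly this connectivity step: verifying that the alternating moves can always be completed while every intermediate matrix stays in $\mathcal T$, and in particular handling the high-rank case $n\ge l$, where the starting matrix may be injective so that the two-point-support move must first be used to \emph{create} a kernel before the vector can be moved at all (together with the degenerate case of a constant $u$, i.e.\ constant $y$, which is trivial). It is worth recording that $l\ge 3$ cannot be dropped: when $l=2$ the pair $(I_2,v)$ with $v_1\ne v_2$ admits no nontrivial move of either type, so it is isolated and (\ref{main}) fails.
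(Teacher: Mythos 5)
Your proposal is correct and takes essentially the same route as the paper: your condition-(\ref{condition2}) move replacing each row by the two-point distribution on a maximizer and minimizer of $u$ with the same mean is exactly the paper's construction of $T^a$ (column $j_1$ equal to $(Tv-\ub{v}1_n)/(\bar{v}-\ub{v})$, column $j_2$ its complement), and the pivot-migration step you flag as the main obstacle is precisely what the paper's Lemma \ref{lemmac2} verifies, by splitting the pivot column in half and routing mass through a third, spare column while alternating (\ref{condition1}) and (\ref{condition2}) moves, so every intermediate matrix stays in $\mathcal{T}$. Your side observations are also sound, including the isolated pair $(I_2,v)$ showing $l=2$ fails, which is consistent with the paper's standing assumption $|\Omega|\geq 3$.
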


The proof of Lemma \ref{A1} is less related to the central topic of this paper, and I delegate the proof to Appendix \ref{appendixc}. In Appendix \ref{appendixc}, I start with arbitrary stochastic matrices $T,T'\in \mathcal{T}$ and real vectors $v,w\in\mathbb{R}^l$, construct a sequence of stochastic matrices and vectors, apply (\ref{condition1}) and (\ref{condition2}) alternatively, and conclude that $\varsigma(T,v)=\varsigma(T',w)$, which is the desired result.

Here, I introduce some relevant notations for the next lemma. Recall that $u(\cdot): X\rightarrow \mathbb{R}$ is the linear utility function that represents $\succsim_0$, the DM's preference over the payoffs. For every act $f:\Omega\rightarrow \mathcal{F}$, I define $u_f\in\mathbb{R}^\Omega$ such that $u_f(\omega)=u(f_\omega)$ for every $\omega\in\Omega$. The vector $u_f$ is the ``utility act'' corresponding to $f$. Since I assume that $u(X)$ is unbounded, for every $v\in\mathbb{R}^\Omega$, there exists an act $f$ such that $u_f=v$. For every $\bm{\mu}\in (\Delta\Omega)^n$ and $f\in\mathcal{F}$, the inner product $\bm{\mu}\cdot u_f\in \mathbb{R}^N$ is defined by 
$$
\bm{\mu}\cdot u_f=(\mu_1\cdot u_f, \mu_2\cdot u_f,...,\mu_n\cdot u_f).
$$

\begin{lemma}\label{a2}
If the aggregation rule $\sigma$ is regular, then there exists a functional $I:\mathbb{R}^N\rightarrow \mathbb{R}$, such that for all $\bm{\mu}\in\Delta\Omega$, the function $U_{\bm{\mu}}:\mathcal{F}\rightarrow \mathbb{R}$ that represents $\succsim_{\bm{\mu}}$ is equal to 
$$
U_{\bm{\mu}}(f)=I(\bm{\mu}\cdot u_f).
$$
\end{lemma}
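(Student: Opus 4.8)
The plan is to push the problem onto the domain $\mathcal{T}\times\mathbb{R}^l$ of Lemma \ref{A1} and let that lemma perform the gluing. The starting observation is that a suggestion profile $\bm{\mu}=(\mu_1,\dots,\mu_n)\in(\Delta\Omega)^n$ is literally a right stochastic matrix in $\mathcal{T}$ (its $i$-th row is the distribution $\mu_i$ over the $l$ states), and, as already noted in the text, every $v\in\mathbb{R}^l$ arises as $u_f$ for some act $f$ because $u(X)$ is unbounded. Thus a pair $(\bm{\mu},f)$ is recorded as $(T,v)=(\bm{\mu},u_f)$, with $Tv=\bm{\mu}\cdot u_f$. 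Before defining anything I would confirm that $U_{\bm{\mu}}(f)$ depends on $f$ only through $u_f$: if $u_f=u_g$ then $u(f(\omega))=u(g(\omega))$, i.e. $f(\omega)\sim_0 g(\omega)$, for every $\omega$, so Dominance applied in both directions gives $f\sim_{\bm{\mu}}g$ and hence $U_{\bm{\mu}}(f)=U_{\bm{\mu}}(g)$. This licenses the definition $\varsigma(T,v):=U_{\bm{\mu}}(f)$ for $T=\bm{\mu}$ and any $f$ with $u_f=v$.

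Next I would verify that $\varsigma$ meets the hypotheses of Lemma \ref{A1}. Condition (\ref{condition1}) -- with $T=\bm{\mu}$ fixed, $Tv=Tw$ forces $\varsigma(T,v)=\varsigma(T,w)$ -- is exactly Pareto: $Tv=Tw$ says $EU_{\mu_i}(f)=EU_{\mu_i}(g)$ for all $i$, so Pareto applied to the two weak inequalities yields $U_{\bm{\mu}}(f)=U_{\bm{\mu}}(g)$. Condition (\ref{condition2}) -- with $v=u_f$ fixed, $Tv=T'v$ forces $\varsigma(T,v)=\varsigma(T',v)$ -- is exactly Monotonicity: $Tv=T'v$ says $EU_{\mu_i}(f)=EU_{\mu'_i}(f)$ for all $i$, so Monotonicity in both directions yields $U_{\bm{\mu}}(f)=U_{\bm{\mu}'}(f)$. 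Lemma \ref{A1} then delivers (\ref{main}): whenever $Tv=T'w$ we have $\varsigma(T,v)=\varsigma(T',w)$.

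With (\ref{main}) in hand I would define $I:\mathbb{R}^N\to\mathbb{R}$ by $I(a)=\varsigma(T,v)$ for any $(T,v)$ with $Tv=a$, where (\ref{main}) guarantees independence of the representative chosen. The remaining point is surjectivity of $(T,v)\mapsto Tv$ onto $\mathbb{R}^N$, so that $I$ is defined everywhere: given $a\in\mathbb{R}^N$, pick reals $m\le\min_i a_i$ and $M\ge\max_i a_i$ with $m<M$, set $v$ to take the values $M$ and $m$ on two distinct states (possible since $l\ge 3$), and let row $i$ of $T$ place weight $(M-a_i)/(M-m)\in[0,1]$ on the state valued $m$ and the remainder on the state valued $M$, so that $Tv=a$. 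Taking the representative $(T,v)=(\bm{\mu},u_f)$ then gives $U_{\bm{\mu}}(f)=\varsigma(\bm{\mu},u_f)=I(\bm{\mu}\cdot u_f)$, which is the claim. (As a sanity check, for a constant act $x$ one has $\bm{\mu}\cdot u_x=u(x)1_n$, so $I(u(x)1_n)=u(x)$, consistent with the normalization $U_{\bm{\mu}}(x)=u(x)$.)

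I expect the genuinely delicate points to be bookkeeping rather than mathematics: the well-definedness of $\varsigma$ via Dominance and the surjectivity of $(T,v)\mapsto Tv$. The real engine -- chaining the two ``one-variable-fixed'' invariances (\ref{condition1}) and (\ref{condition2}) into the full invariance (\ref{main}) -- is outsourced to Lemma \ref{A1}. The structural content of the argument is simply that Pareto and Monotonicity supply precisely those two invariances, which is why the representation collapses onto the evaluation profile $\bm{\mu}\cdot u_f$.
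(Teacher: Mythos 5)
Your proposal is correct and follows essentially the same route as the paper's proof: both reduce the statement to Lemma \ref{A1} by defining $\varsigma(\bm{\mu},u_f)=U_{\bm{\mu}}(f)$, checking that Pareto and Monotonicity (applied in both directions) deliver conditions (\ref{condition1}) and (\ref{condition2}), and then defining $I$ on representatives with a two-point-support construction establishing surjectivity of $(T,v)\mapsto Tv$ onto $\mathbb{R}^N$. Your explicit Dominance argument for well-definedness and your $(m,M)$-construction are just careful spellings-out of steps the paper handles in passing (the latter, if anything, slightly cleaner than the paper's footnote).
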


\begin{proof}
Recall from Section \ref{sec2} that, for every $\bm{\mu}\in (\Delta\Omega)^n$, the utility representation $U_{\bm{\mu}}(\cdot)$ of $\succsim_{\bm{\mu}}$ satisfies $U_{\bm{\mu}}(x)=u(x)$ for every $x\in X$. It is standard to verify that there exists a monotonic and continuous functional $\Phi_{\bm{\mu}}(\cdot):\mathbb{R}^\Omega\rightarrow \mathbb{R}$ such that $U_{\bm{\mu}}(f)=\Phi_{\bm{\mu}}(u_f)$ for every $f\in\mathcal{F}$. Define $\hat{I}(\cdot): (\Delta\Omega)^n\times\mathbb{R}^\Omega\rightarrow \mathbb{R}$ by
$
\hat{I}(\bm{\mu},v)=\Phi_{\bm{\mu}}(v)
$  for every $\bm{\mu}\in (\Delta\Omega)^n$ and $v\in \mathbb{R}^\Omega$.
Suppose $\sigma$ is regular. For every $\bm{\mu},\bm{\mu}'\in\Delta(\Omega)$ and $a,b\in\mathbb{R}^N$, \textit{Pareto} and \textit{Monotonicity} imply the following two statements respectively:
\begin{enumerate}
    \item If
        $
        \bm{\mu}\cdot v\geq  \bm{\mu}\cdot w$, then $\hat{I}(\bm{\mu}, v)\geq \hat{I}(\bm{\mu}, w),
        $ and
    \item if
        $\bm{\mu}\cdot v\geq  \bm{\mu}'\cdot w$, then $\hat{I}(\bm{\mu}, v)\geq \hat{I}(\bm{\mu}', w).
        $
\end{enumerate}
The above two statements still hold if we alternate all the ``$\geq$'' above to ``$=$''. A direct application of Lemma \ref{A1} leads to the following result: for every $\bm{\mu},\bm{\mu}'\in(\Delta\Omega)^n$ and $v,w\in\mathbb{R}^\Omega$,
\begin{align} \label{a2eq1}
        \bm{\mu}\cdot v=  \bm{\mu}'\cdot w\implies \hat{I}(\bm{\mu}, v)= \hat{I}(\bm{\mu}', w).
\end{align}
Now, for every $a\in \mathbb{R}^N$, define $I(a)=\hat{I}(\bm{\mu},v)$ for $\bm{\mu}\in (\Delta\Omega)^n$ and $v\in\mathbb{R}^\Omega$ such that $\bm{\mu}\cdot v=a$. First, such $\bm{\mu}$ and $v$ must exist.\footnote{Fix an arbitrary $a\in \mathbb{R}^N$. If $a$ is a constant vector, that is, if $a=c1_\Omega$ for some $c$, then any belief profile $\bm{\mu}$ and $v=c1_n$ will satisfy $a=\bm{\mu}\cdot v$. If $a$ is not a constant vector, then construct $\bm{\mu}\in(\Delta\Omega)^n$ and $v\in\mathbb{R}^\Omega$ such that (1) $\mu^*_i(\omega=1)=\frac{a_i}{\bar{a}-\ub{a}}$ for every $i\in N$, (2) $v^*_1=\bar{a}-\ub{a}$, and (3) $v^*_\omega=0$ for every $\omega\neq 1$, then it holds that $a=\bm{\mu}^*\cdot v^*$.} Second, (\ref{a2eq1}) guarantees the well-definedness of $I:\mathbb{R}^N\rightarrow \mathbb{R}$. Finally, the definitions of $I$ and $\hat{I}$ imply $U_{\bm{\mu}}(f)=I(\bm{\mu}\cdot u_f)$, the desired result. 
\end{proof}

I call $I:\mathbb{R}^N\rightarrow \mathbb{R}$ the \textit{aggregation functional}.

\subsection*{Step 2: Monotonicity and Constant Linearity}

The next lemma establishes that under Axiom \ref{weak}, the aggregation functional $I$ must be monotonic and constant linear.

\begin{lemma} \label{a3}
    For every $a,b\in \mathbb{R}^N$, $\beta\in (0,1)$ and $c\in\mathbb{R}$, 
    \begin{enumerate}
        \item (Monotonic Functional) if $a\leq b$, then $I(a)\leq I(b)$; 
        \item (Constant Linear Functional) $I(\beta a+(1-\beta)c1_n)=\beta I(a)+(1-\beta) c$.
    \end{enumerate}
\end{lemma}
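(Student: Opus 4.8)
The plan is to treat the two clauses separately: clause (1) (monotonicity of $I$) follows immediately from the regularity of $\sigma$, while clause (2) (constant linearity) is where Weak Commutativity does the real work.

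For clause (1), I would use the representation $U_{\bm{\mu}}(f)=I(\bm{\mu}\cdot u_f)$ from Lemma \ref{a2} together with the \emph{Monotonicity} half of regularity. Given $a\le b$ in $\mathbb{R}^N$, I invoke the construction in the proof of Lemma \ref{a2} to obtain profiles and utility vectors with $\bm{\mu}\cdot v=a$ and $\bm{\mu}'\cdot w=b$. Since $\bm{\mu}'\cdot w=b\ge a=\bm{\mu}\cdot v$, the \emph{Monotonicity} implication yields $\hat I(\bm{\mu}',w)\ge\hat I(\bm{\mu},v)$, i.e. $I(b)\ge I(a)$. Note this clause uses no commutativity at all.

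For clause (2), fix $a\in\mathbb{R}^N$, $\beta\in(0,1)$, $c\in\mathbb{R}$; the target identity is $I(\beta a+(1-\beta)c1_n)=\beta I(a)+(1-\beta)c$. I would realize this data by a profile in which disagreement is confined to a small event, so that Weak Commutativity applies. Concretely, take $E=\{1,2\}$ and $E^c=\Omega\setminus E$ (nonempty since $|\Omega|\ge 3$); put the common uniform belief $(1-\beta)/|E^c|$ on each state of $E^c$ for every expert, so disagreement is restricted within $E$ in the sense of Definition \ref{def41} and $\mu_i(E)=\beta$ for all $i$. On the two states of $E$ I would choose conditionals $\mu_i^E\in\Delta E$ and an act $f$ with $u_f$ supported on $E$ so that $\mu_i^E\cdot u_f=a_i$ for each $i$; because $|E|=2$ this is a two-state interpolation (choosing $u_f$ on the two states to bracket all entries of $a$) and reproduces an \emph{arbitrary} $a$ regardless of the number of experts $n$. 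I then pick a constant act $x$ with $u(x)=I(a)$, so that $f\sim_{\bm{\mu}^E}x$.

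The key step applies Weak Commutativity: since disagreement is restricted within $E$, the indifference $f\sim_{\bm{\mu}^E}x$ forces $f\sim_{\bm{\mu}}^E x$, i.e. $fEh\sim_{\bm{\mu}}xEh$ for every $h$. Using the decomposition $\mu_i\cdot u_{fEh}=\beta(\mu_i^E\cdot u_f)+(1-\beta)(\mu_i^{E^c}\cdot u_h)$ and the fact that the posteriors $\mu_i^{E^c}$ all coincide with a common $\nu$, I choose $h$ with $\nu\cdot u_h=c$ (possible since $u$ is unbounded), obtaining $\bm{\mu}\cdot u_{fEh}=\beta a+(1-\beta)c1_n$ and $\bm{\mu}\cdot u_{xEh}=(\beta I(a)+(1-\beta)c)1_n$. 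The indifference $fEh\sim_{\bm{\mu}}xEh$ then reads $I(\beta a+(1-\beta)c1_n)=I((\beta I(a)+(1-\beta)c)1_n)=\beta I(a)+(1-\beta)c$, where the last equality is the normalization $I(d1_n)=d$ inherited from $U_{\bm{\mu}}(x)=u(x)$. I expect the main obstacle to be precisely this realizability construction: one must exhibit a single profile that simultaneously meets the rigid constraints of Weak Commutativity (common $\mu_i(E)$, agreement throughout $E^c$) and reproduces an arbitrary $a\in\mathbb{R}^N$ and arbitrary $\beta\in(0,1)$. Concentrating all disagreement on a two-state subevent $E$ and loading the common residual mass uniformly on $E^c$ is what makes this work for every $n$ and every $|\Omega|\ge 3$.
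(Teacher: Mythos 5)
Your proposal is correct, and while it follows the same master plan as the paper's proof, the endgame for clause (2) is genuinely leaner. On clause (1): the idea matches the paper's, but note that \emph{Monotonicity} in Assumption \ref{assumption1} compares two profiles at a \emph{common} act, so the clean execution realizes $a$ and $b$ with the same utility vector --- the paper takes $u_{f^*}=(\ub{a},\bar{b},0,\dots,0)$ and two profiles $\bm{\mu}^*,\bm{\mu}^{**}$ supported on states $1,2$; with distinct $v,w$ you are implicitly invoking the cross-act inequality that the paper only asserts inside the proof of Lemma \ref{a2}, and the common-act construction avoids that reliance. On clause (2): the outline coincides --- confine disagreement to a two-state event, give all experts the same residual mass $1-\beta$ outside it, realize $a$ by two-state interpolation, apply Weak Commutativity once, and close with the normalization $I(d1_n)=d$ --- but where the paper replaces $\hat{f}$ off $G$ by a conditional certainty equivalent $z$, it must first establish that the conditional preference $\succsim^{H}_{\hat{\bm{\mu}}}$ on $H=G^c$ is represented by $u(f_3)$ (the two enumerated sub-steps in its proof, using Pareto and the nullity of the other states of $H$) and then chain $\hat{f}\sim_{\hat{\bm{\mu}}}\hat{f}Gz\sim_{\hat{\bm{\mu}}}yGz$ by transitivity. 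You bypass the entire $H$-side analysis: from $f\sim_{\bm{\mu}^E}x$, Weak Commutativity yields $fEh\sim_{\bm{\mu}}xEh$ for \emph{every} $h$, and choosing a constant $h$ with $\nu\cdot u_h=c$ lets you compute both evaluation profiles outright, $\bm{\mu}\cdot u_{fEh}=\beta a+(1-\beta)c1_n$ and $\bm{\mu}\cdot u_{xEh}=\bigl(\beta I(a)+(1-\beta)c\bigr)1_n$, so only the normalization on constant profiles is needed. This is a strict economization: one application of the axiom, no conditional reasoning on $E^c$, and spreading the common mass uniformly keeps the construction uniform in $n$ and $|\Omega|\geq 3$ (the paper instead loads $1-\beta$ on a single state and injects $c$ through the act itself --- either works). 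One small loose end to state explicitly: when $a$ is constant, choose the two bracketing utility values strictly apart so the interpolation weights are well defined (or note the claim is then immediate from $I(d1_n)=d$), the degenerate case the paper disposes of in a footnote.
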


\begin{proof}
First, suppose $a\leq b$ and I want to show $I(a)\leq I(b)$.  The case where $a=b$ is trivial. For the case where $a\neq b$, it must hold that $\bar{b}-\ub{a}\neq 0$ or $\bar{a}-\ub{b}\neq 0$, otherwise, $a$ and $b$ are two constant vectors that are equal to each other. Without loss of generality, I assume $\bar{b}-\ub{a}\neq 0$. Construct act $f^*$ such that that $u_{f^*}=(\ub{a},\bar{b},0,0,...,0)$. Construct $\bm{\mu}^*, \bm{\mu}^{**}$ such that for every $i=1,...,n$, 
$
\mu_i^*=(\frac{\bar{b}-a_i}{\bar{b}-\ub{a}},1-\frac{\bar{b}-a_i}{\bar{b}-\ub{a}},0,...,0)
$
and
$
\mu_i^{**}=(\frac{\bar{b}-b_i}{\bar{b}-\ub{a}},1-\frac{\bar{b}-b_i}{\bar{b}-\ub{a}},0,...,0)
$. 
Since I assume $\bar{b}-\ub{a}\neq 0$, the utility act $u_{f^*}$ and the suggestion profiles $\bm{\mu}^*$ and $\bm{\mu}^{**}$ are well-defined. By calculating the inner product, we may conclude that $\bm{\mu}^* \cdot u_{f^*}=a$ and $\bm{\mu}^{**} \cdot u_{f^*}=b$. By \textit{Monotonicity} in Assumption \ref{assumption1}, $\hat{I}(\bm{\mu}^*,f)\leq \hat{I}(\bm{\mu}^{**},f)$, which implies $
I(a)\leq I(b) 
$.

Second, in order to prove the second part of Lemma \ref{a3}, I want to first show $I(c1_n)=c$ for all $c\in\mathbb{R}$. Let $x\in X$ be a payoff such that $u(x)=c$. Such an $x$ exists due to the unboundedness of $u$. For every $\bm{\mu}\in (\Delta\Omega)^n$, by the definition of $U_{\bm{\mu}}:\mathcal{F}\rightarrow \mathbb{R}$, it holds that $U_{\bm{\mu}}(x)=u(x)=c$. As a result, for every $\bm{\mu}\in(\Delta\Omega)^n$, $I(\bm{\mu}\cdot u_x)=c$. Notice that $u_x=c1_\Omega$ and $\bm{\mu}\cdot u_x=c1_n$. As a result, it holds that $I(c1_\Omega)=c$.

Now, I want to show $I(\beta a+(1-\beta)c1_n)=\beta I(a)+(1-\beta) c$. Without loss of generality, assume $\ub{a}<\bar{a}$.\footnote{If $\ub{a}=\bar{a}$, then $a=\ub{a}1_n$ and $I(\beta \ub{a}1_n+(1-\beta)c1_n)=\beta\ub{a}+(1-\beta) c$ must hold, which implies the desired result.} Construct $\hat{f}$ such that $u_{\hat{f}}=(\bar{a},\ub{a},c,0,...,0)$. Construct $\hat{\bm{\mu}}$ such that
\begin{align} \label{construct}
\hat{\mu}_{i}=\left(\beta\left(\frac{a_i-\ub{a}}{\bar{a}-\ub{a}}\right), \beta\left(1-\frac{a_i-\ub{a}}{\bar{a}-\ub{a}}\right), 1-\beta,  0,...,0\right)
\end{align} 
for every expert $i\in N$. It can be verified by calculating the inner product that $\hat{\bm{\mu}}\cdot u_{\hat{f}}=\beta a+(1-\beta)c1_n$. 

Next, I apply Axiom \ref{weak} (Weak Commutativity) to conclude that $I(\cdot)$ is constant linear. To apply Axiom \ref{weak}, consider an event $G$ such that $\{1,2\}\subset G$ and $\{3\}\not\subset G$. Consider a DM who updates on $G$. Define the complement of $G$ by $H$. By construction, $\{3\}\subset H$. By (\ref{construct}), $\hat{\bm{\mu}}$ is a suggestion profile for which disagreement is restricted within $G$, so Axiom 1 (Weak Commutativity) can be applied. Consider the conditional preferences $\succsim_{\hat{\bm{\mu}}}^G$ and $\succsim_{\hat{\bm{\mu}}}^{H}$. By Weak Commutativity, $\succsim_{\hat{\bm{\mu}}}^G$ is equal to $\succsim_{\hat{\bm{\mu}}^G}$. Now I argue that $\succsim_{\hat{\bm{\mu}}}^{H}$ is equal to $\succsim_{\hat{\bm{\mu}}^H}$. First, notice from (\ref{construct}) that $\hat{\mu_i}^H=(0,0,1,0,...,0)$ for all $i\in N$. By Pareto in Assumption \ref{assumption1}, $\succsim_{\hat{\bm{\mu}}^H}$ is represented by $U(f)=u(f_3)$, where $f_3$ is the payoff of $f$ given $\omega=3$. Now, it suffices to show that $\succsim_{\hat{\bm{\mu}}}^H$ is represented by $U(f)=u(f_3)$. 
\begin{enumerate}
    \item First, I want to show that $u(f_3)\geq u(g_3)$ implies $f\succsim_{\hat{\bm{\mu}}}^Hg$. Notice that, since $\hat{\mu}_i^H=(0,0,1,0,...)$ for every $i$, $\omega=3$ is the only non-null state in $H$. Therefore, for every $i\in N$,  $u(f_3)\geq u(g_3)$ if and only if $EU_{\hat{\mu}_i}[fHh]\geq EU_{\hat{\mu}_i}[gHh]$. As a result, by Pareto, if $u(f_3)\geq u(g_3)$, then $fHh\succsim_{\hat{\bm{\mu}}}gHh$. This holds for all $h$, which implies $f\succsim_{\hat{\bm{\mu}}}^Hg$.
    \item Second, I want to show that $u(f_3)> u(g_3)$ implies $g\not{\succsim}_{\hat{\bm{\mu}}}^H f$. Notice that, by the construction in (\ref{construct}), for every $x\in X$ and $i\in N$, it holds that $\hat{\mu}_i\cdot u_{fHx}=\beta u(x)+(1-\beta)u(f_3)$. As a result, $\hat{\bm{{\mu}}}\cdot u_{fHx}=[\beta u(x)+(1-\beta)u(f_3)]1_{n}$. In the proof for Lemma \ref{a3}, I show that $I(c1_n)=c$ for any $c$. As a result, $I(\hat{\bm{\mu}}\cdot u_{fHx})=I\{[\beta u(x)+(1-\beta)u(f_3)]1_{n}\}=\beta u(x)+(1-\beta)u(f_3)$. Similarly, $I(\hat{\bm{\mu}}\cdot u_{gHx})=\beta u(x)+(1-\beta)u(g_3)$. Since $u(f_3)>u(g_3)$, it holds that $I(\hat{\bm{\mu}}\cdot u_{fHx})>I(\hat{\bm{\mu}}\cdot u_{gHx})$, which implies $fHx\succ_{\hat{\bm{\mu}}} gHx$, i.e. $gHx\not{\succsim}_{\hat{\bm{\mu}}}fHx$. By the definition of conditional preference, it holds that $g\not{\succsim}_{\hat{\bm{\mu}}}^H f$. 
\end{enumerate}

As a result, $\succsim_{\hat{\bm{\mu}}}^H$ is also represented by $U(f)=u(f_3)$, which implies $\succsim_{\hat{\bm{\mu}}}^H$ is identical to $\succsim_{\hat{\bm{\mu}}^H}$.

The analysis above establishes that $\succsim_{\hat{\bm{\mu}}}^G$ is equal to $\succsim_{\hat{\bm{\mu}}^G}$ and $\succsim_{\hat{\bm{\mu}}}^H$ is identical to $\succsim_{\hat{\bm{\mu}}^H}$. Next, I want to show that the aggregation functional $I$  is constant linear. Let $y,z\in X$ be two payoffs such that $\hat{f}\sim_{\hat{\bm{\mu}}}^G y$ and $\hat{f}\sim_{\hat{\bm{\mu}}}^H z$.\footnote{There must exist $y$ such that $\hat{f}\sim_{\hat{\bm{\mu}}}^G y$. This is because Weak Commutativity implies $\succsim_{\hat{\bm{\mu}}}^G$ is equal to $\succsim_{\hat{\bm{\mu}}^G}$, where the latter admits a continuous utility representation $U_{{\hat{\bm{\mu}}^G}}$. Therefore, there exists a $y$ such that $u(y)=U_{{\hat{\bm{\mu}}^G}}(\hat{f})$. The same argument works if $G$ is altered by $H$ and $y$ is altered by $z$.} Since $H=G^c$, the definition of conditional preference implies $\hat{f}Gz\sim_{\hat{\bm{\mu}}} yGz$ and $\hat{f}\sim_{\hat{\bm{\mu}}}\hat{f}Gz$. Then, transitivity implies $\hat{f}\sim_{\hat{\bm{\mu}}} yGz$, that is, $I(\hat{\bm{\mu}}\cdot u_{\hat{f}})=I(\hat{\bm{\mu}}\cdot u_{yGz})$. Notice that $\hat{\mu}_i(G)=\beta$ for all $i\in N$ by how $\hat{\bm{\mu}}$ is constructed. As a result, $\hat{\bm{\mu}}\cdot u_{yGz}=[\beta u(y)+(1-\beta)u(z)]\cdot 1_n$. In the proof for Lemma \ref{a3}, I show that $I(c1_n)=c$ for any $c$, therefore, $I(\hat{\bm{\mu}}\cdot u_{yGz})=\beta u(y)+(1-\beta)u(z)$. Since $\hat{f}\sim_{\hat{\bm{\mu}}} yGz$, it holds that $I(\hat{\bm{\mu}}\cdot u_{\hat{f}})=\beta u(y)+(1-\beta)u(z)$. Recall that $\hat{\bm{\mu}}\cdot u_{\hat{f}}=\beta a+(1-\beta)c1_n$ by how $\hat{\bm{\mu}}$ and $\hat{f}$ are constructed. Therefore, $I[\beta a+(1-\beta)c1_n]=\beta u(y)+(1-\beta)u(z)$.


Finally, since $\hat{f}\sim_{\hat{\bm{\mu}}}^G y$ and $\hat{f}\sim_{\hat{\bm{\mu}}}^H z$, it holds that $\hat{f}\sim_{\hat{\bm{\mu}}^G} y$ and $\hat{f}\sim_{\hat{\bm{\mu}}^H} z$. That is, $I(\hat{\bm{\mu}}^G\cdot u_{\hat{f}})=u(y)$ and $I(\hat{\bm{\mu}}^H\cdot u_{\hat{f}})=u(z)$. Applying Bayes' rule to the $\hat{\mu}_i$ defined in (\ref{construct}), we can conclude by algebraic calculation that $\hat{\bm{\mu}}^G\cdot u_{\hat{f}}=a$ and $\hat{\bm{\mu}}^{H}\cdot u_{\hat{f}}=c1_n$. Therefore, $I(a)=u(y)$ and $I(c1_n)=u(z)$. It follows that $I[\beta a+(1-\beta)c1_n]=\beta I(a)+(1-\beta)I(c1_n)=\beta I(a)+(1-\beta)c$, which is the desired result.
\end{proof} 

The lemmas above establish that if a regular aggregation rule satisfies Weak Commutativity, the corresponding aggregation functional $I:\mathbb{R}^N\rightarrow\mathbb{R}$ must be monotonic and constant linear. Applying the main result in \cite{chandrasekher2022dual} implies the existence of a weight-set collection $\bm{\Lambda}\in \mathcal{K}$ such that, for all $a\in \mathbb{R}^N$, 
$$
I(a)=\max_{\Lambda\in\bm{\Lambda}} \min_{\lambda\in\Lambda} \lambda\cdot a. 
$$
Substitute this to $U_{\bm{\mu}}(f)=I(\bm{\mu}\cdot u_f)$, then 
\begin{align}
U_{\bm{\mu}}(f)&=\max_{\Lambda\in\bm{\Lambda}} \min_{\lambda\in\Lambda} \lambda\cdot \bm{\mu}\cdot u_f\\    
&= \max_{\Lambda\in\bm{\Lambda}} \min_{\lambda\in\Lambda}\int_{\Omega} u\circ f d p_{\lambda}(\bm{\mu}),\label{dself}
\end{align}
where $p_{\lambda}(\bm{\mu})=\sum_{i}\lambda_i \mu_i$. This completes the proof for the ``only if'' side of Theorem 3.3. 

\subsection*{Step 3: the ``if'' Side}

For the ``if'' side, consider a dual-self aggregation rule that takes the analytic form (\ref{dself}). For a suggestion profile $\bm{\mu}$ such that expert disagreement is restricted within $E$, we want to show that $\succsim_{\bm{\mu}}^E$ and $\succsim_{\bm{\mu}^E}$ are identical. 

Recall that disagreement is restricted within $E$ implies $\mu_1(\omega)=...=\mu_n(\omega)$ for every $\omega\in E^c$. In this case, all the experts agree about the belief conditional on $E^c$, that is, $\mu_1^{E^c}=...\mu_n^{E^c}$; denote this conditional distribution by $\mu_0\in\Delta(E^c)$. Also, all the experts agree about the probability of event $E$, a real number denoted as $\alpha\in (0,1)$. It follow that $\mu_i=\alpha\mu_i^E+(1-\alpha)\mu_0$ for all $i$. Substitute this to expression (\ref{dself}), it follows that for every $f\in\mathcal{F}$,
\begin{align*}
U_{\bm{\mu}}(f)
&=
\alpha \max_{\Lambda\in\bm{\Lambda}} \min_{\lambda\in\Lambda}\int_{\Omega} u\circ f d p_{\lambda}(\bm{\mu}^E)+(1-\alpha)\int_{\Omega} u\circ fd \mu_0\\
&=
\alpha U_{\bm{\mu}^E}(f)+(1-\alpha)\int_{\Omega} u\circ fd \mu_0.
\end{align*}
Since $\mu_0=\mu_i^{E^c}$, it holds that $\mu_0(\omega)=0 \ \forall \omega\in E$. As a result, for every $g,h\in\mathcal{F}$, 
$$
U_{\bm{\mu}}(fEh)=\alpha U_{\bm{\mu}^E}(f)+(1-\alpha) \int_{\Omega} u\circ hd \mu_0, \text{ and }
$$
$$
U_{\bm{\mu}}(gEh)=\alpha U_{\bm{\mu}^E}(g)+(1-\alpha) \int_{\Omega} u\circ h d \mu_0.
$$
Since $\alpha>0$, it holds that $fEh\succsim_{\bm{\mu}}gEh \iff U_{\bm{\mu}}(fEh)\geq U_{\bm{\mu}}(gEh) \iff U_{\bm{\mu}^E}(f)\geq U_{\bm{\mu}^E}(g)$, where the last inequality is equivalent to $f\succsim_{\bm{\mu}^E}g$. By definition of conditional preference, it holds that $f\succsim_{\bm{\mu}}^E g \iff f\succsim_{\bm{\mu}^E}g$, the desired result.

\section{Proofs for Sections 3, 4, and 5}
\label{secb}
\subsection{Proof of Theorem \ref{ambfree}}\label{secb1}

The ``if'' side of the theorem is standard and I omit it. For the ``only if'' side, I consider a regular aggregation rule that satisfies Weak Commutativity and I assume $\succsim_{\bm{\mu}}$ satisfies STP for every $\bm{\mu}\in(\Delta\Omega)^n$. I restrict attention to the situation where $|\Omega|=\{1,2,3\}$ and prove that the aggregation rule $\sigma$ is linear for this case. For state spaces with greater cardinalities, I may restrict attention to belief profiles with support over the first three states, and the proof below can be replicated to establish that the aggregation functional $I$ must be linear, which implies that $\sigma$ is a linear aggregation rule.

\begin{lemma}\label{b1}
    Suppose $\Omega=\{1,2,3\}$, then for every $\bm{\mu}\in(\Delta\Omega)^n$ such that $\mu_i(\omega)>0$ for every $\omega\in\Omega$ and $i\in N$, the utility representation $U_{\bm{\mu}}(\cdot)$ must admit a SEU representation, that is, there exists $\mu_0\in \Delta\Omega$ such that, for every $f\in\mathcal{F}$,
    $$
    U_{\bm{\mu}}(f)=EU_{\mu_0}(f). 
    $$
\end{lemma}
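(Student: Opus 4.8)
The plan is to combine the functional representation already in hand with the sure-thing principle to force additive separability across states, and then to use constant linearity to upgrade the state-wise utilities to a common linear scale, yielding SEU.

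First I would record what is already available. By Lemma \ref{a2} there is an aggregation functional $I:\mathbb{R}^N\to\mathbb{R}$ with $U_{\bm{\mu}}(f)=I(\bm{\mu}\cdot u_f)$, and by Lemma \ref{a3} (which invokes Weak Commutativity) $I$ is monotone and constant linear with $I(c1_n)=c$. Writing $\Phi_{\bm{\mu}}(v):=I(\bm{\mu}\cdot v)$, the identity $\bm{\mu}\cdot 1_\Omega=1_n$ together with the constant linearity of $I$ gives $\Phi_{\bm{\mu}}(\beta v+(1-\beta)c1_\Omega)=\beta\Phi_{\bm{\mu}}(v)+(1-\beta)c$; that is, $\Phi_{\bm{\mu}}$ is itself a monotone, continuous, constant-linear functional on $\mathbb{R}^\Omega$, normalized so that $\Phi_{\bm{\mu}}(c1_\Omega)=c$, and $U_{\bm{\mu}}(f)=\Phi_{\bm{\mu}}(u_f)$. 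The whole problem thus reduces to showing that such a $\Phi_{\bm{\mu}}$ is linear whenever $\succsim_{\bm{\mu}}$ additionally satisfies P2.

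Next I would dispatch the structural preliminaries. Because $\mu_i(\omega)>0$ for every $i$ and $\omega$, Pareto makes every state non-null: improving an act on a single state $\omega$ strictly raises $EU_{\mu_i}$ for all experts, so by Pareto it raises $U_{\bm{\mu}}$. Since $u$ is linear and $X$ convex, $u(X)$ is an interval, and unboundedness in both directions forces $u(X)=\mathbb{R}$, so the utility acts $u_f$ range over all of $\mathbb{R}^\Omega=\mathbb{R}^3$, a product of three connected, separable factors. The axiom P2 is exactly coordinate (preferential) independence of $\succsim_{\bm{\mu}}$ for every event. With three essential coordinates, continuity, and the weak-order/monotonicity properties, Debreu's additive representation theorem then yields continuous, strictly increasing $\phi_1,\phi_2,\phi_3$ with $V(v)=\sum_{\omega}\phi_\omega(v_\omega)$ representing $\succsim_{\bm{\mu}}$, unique up to a common positive affine transformation. (That exactly three states suffice here, with independence alone and no Thomsen-type condition, is consistent with the standing hypothesis $|\Omega|\geq 3$.)

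The hard part will be converting additive separability into linearity, and this is where constant linearity does the work. Since $V$ and $\Phi_{\bm{\mu}}$ represent the same continuous order, $\Phi_{\bm{\mu}}=g\circ V$ for a strictly increasing $g$. Applying constant linearity to any indifferent pair $v\sim w$ gives $\beta v+(1-\beta)c1_\Omega\sim\beta w+(1-\beta)c1_\Omega$ for all $\beta\in(0,1)$ and $c\in\mathbb{R}$, hence $\sum_\omega\phi_\omega(\beta v_\omega+(1-\beta)c)=\sum_\omega\phi_\omega(\beta w_\omega+(1-\beta)c)$ whenever $\sum_\omega\phi_\omega(v_\omega)=\sum_\omega\phi_\omega(w_\omega)$. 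Because the additive representation is unique up to a common positive affine transformation, this forces, for each fixed $\beta,c$, an identity $\phi_\omega(\beta t+(1-\beta)c)=a\,\phi_\omega(t)+b_\omega$ for all $\omega$ and $t$ (with $a>0$ and $b_\omega$ depending on $\beta,c$); a standard Pexider/Jensen-type functional-equation argument, using continuity and monotonicity, shows the only solutions are affine or a common exponential. The exponential family is then excluded by positive homogeneity $\Phi_{\bm{\mu}}(\lambda v)=\lambda\Phi_{\bm{\mu}}(v)$ (equivalently, homogeneity makes the level sets of $\Phi_{\bm{\mu}}$ cones, whereas exponential level sets are not). Hence each $\phi_\omega$ is affine, so $\succsim_{\bm{\mu}}$ is represented by $\sum_\omega p_\omega v_\omega$; monotonicity gives $p_\omega\geq 0$ and the normalization $\Phi_{\bm{\mu}}(1_\Omega)=1$ gives $\sum_\omega p_\omega=1$. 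As $\Phi_{\bm{\mu}}$ is constant linear and agrees with this linear functional on the order, $\Phi_{\bm{\mu}}(v)=\sum_\omega p_\omega v_\omega$, and setting $\mu_0=(p_1,p_2,p_3)$ yields $U_{\bm{\mu}}(f)=EU_{\mu_0}(f)$. I expect the functional-equation/affinity step to be the only genuinely delicate point; everything else is bookkeeping around results already proved.
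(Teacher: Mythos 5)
Your proposal is correct in substance, but it takes a genuinely different route from the paper's. The paper stays inside the dual-self machinery it has already built: it invokes Theorem S.2.1 from the supplementary material of \cite{chandrasekher2022dual} to decompose the functional recursively across the partition $\{\{1,2\},\{3\}\}$ as $J_{\bm{\mu}}(v_1,v_2,v_3)=J_0[J_F(v_1,v_2),v_3]$, uses a P2-based indifference pair together with constant linearity to force $J_F$ to be linear ($\ub{t}=\bar{t}$), exploits the fact that a two-dimensional strongly monotone constant-linear functional is a max or a min over an interval of weights to obtain an MEU form, and then relabels the states and uses the uniqueness of the MEU representation to collapse the set of priors to a singleton. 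You instead go through classical additive conjoint measurement: P2 for all events is full coordinatewise preferential independence, so with three essential states Debreu's theorem yields $V(v)=\sum_\omega\phi_\omega(v_\omega)$; constant linearity of $\Phi_{\bm{\mu}}$ plus the joint-cardinal uniqueness of the additive representation then produces the functional equation $\phi_\omega(\beta t+(1-\beta)c)=a(\beta,c)\,\phi_\omega(t)+b_\omega(\beta,c)$, whose continuous strictly increasing solutions are affine. Your route is more self-contained (no reliance on the DSEU decomposition theorem or MEU uniqueness) and extends verbatim to any $|\Omega|\geq 3$ with full-support profiles, whereas the paper handles larger state spaces by restricting to three-state supports; the paper's route, in exchange, avoids Debreu's theorem and the functional-equation analysis, reusing machinery it needs anyway. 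One simplification available to you: the exponential branch need not be excluded by homogeneity, since for $\beta\in(0,1)$ an exponential $\phi(t)=\alpha e^{\gamma t}+\kappa$ cannot satisfy $\phi(\beta t+m)=a\phi(t)+b$ at all (the exponential rates $\gamma\beta\neq\gamma$ mismatch), so only affine solutions survive the scaling equation directly.

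One step needs patching. You write that, because $\mu_i(\omega)>0$ for all $i,\omega$, ``by Pareto'' improving an act on a single state strictly raises $U_{\bm{\mu}}$. Pareto as stated in Assumption \ref{assumption1} delivers only the weak inequality, so essentiality of each coordinate (and strict increasingness of each $\phi_\omega$) does not follow from Pareto alone. The fix is easy with what you already have: if $f$ improves $g$ only at $\omega$, full support gives $\epsilon:=\min_i \mu_i(\omega)[u(f_\omega)-u(g_\omega)]>0$, so $\bm{\mu}\cdot u_f\geq \bm{\mu}\cdot u_g+\epsilon 1_n$; monotonicity of $I$ and the constant additivity implied by constant linearity then give $I(\bm{\mu}\cdot u_f)\geq I(\bm{\mu}\cdot u_g)+\epsilon>I(\bm{\mu}\cdot u_g)$. (This is the same role played in the paper's proof by the observation that all beliefs in $\mathbb{P}$ lie in the interior of $\Delta\Omega$, making $J_{\bm{\mu}}$ strongly monotone.) With that repair, your argument goes through.
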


\begin{proof}
    First, since the aggregation rule is regular and satisfies Weak Commutativity, for every $\bm{\mu}\in (\Delta\Omega)^n$, there exists a unique weight-set collection $\bm{\Lambda}$ such that the utility function equals 
    \begin{align} \label{b1dual}
        U_{\bm{\mu}}(f)=\max_{\Lambda\in\bm{\Lambda}} \min_{\lambda\in\Lambda}\int_{\Omega} u\circ f d p_{\lambda}(\bm{\mu}).
    \end{align}
    
    For every $\Lambda\in\bm{\Lambda}$, define $P_{{\Lambda}}(\bm{\mu})=\{p_{\lambda}(\bm{\mu})\mid \lambda\in \Lambda\}\subset\Delta\Omega$ and $\mathbb{P}_{\bm{\Lambda}}(\bm{\mu})=\{P_{\Lambda}(\bm{\mu})\mid \Lambda\in\bm{\Lambda}\}$. When $\bm{\Lambda}$ and $\bm{\mu}$ are specified in the context, I simply denote $\mathbb{P}_{\bm{\Lambda}}(\bm{\mu})$ as $\mathbb{P}$. Based on these definitions, (\ref{b1dual}) can be rewritten as 
    \begin{align}\label{dualutil}
    U_{\bm{\mu}}(f)=\max_{P\in \mathbb{P}}\min_{p\in P} \int_{\Omega} u\circ f dp, 
    \end{align}
    which is a DSEU representation as in \cite{chandrasekher2022dual}. Define the functional $J_{\bm{\mu}}:\mathbb{R}^\Omega\rightarrow \mathbb{R}$ by $J_{\bm{\mu}}(v)=\max_{P\in \mathbb{P}}\min_{p\in P}  p\cdot v$. Since $\mu_i(\omega)>0$ for every $\omega\in\Omega$ and $i\in N$, by \textit{Pareto}, every set of beliefs in $\mathbb{P}$ must lie in the interior of $\Delta\Omega$. As a result, $J_{\bm{\mu}}$ must be strongly monotonic, that is, for every $w\gneqq w'$, $J_{\bm{\mu}}(w)>J_{\bm{\mu}}(w')$.
    
    Denote $F=\{1,2\}$ and $F'=\{3\}$, and $\{F,F'\}$ forms a partition for the state space. Since the preference $\succsim_{\bm{\mu}}$ is assumed to satisfy $STP$, the conditional preferences $\succsim_{\bm{\mu}}^F$ and $\succsim_{\bm{\mu}}^{F'}$ are both well-defined. According to Theorem S.2.1 in the supplementary material of \cite{chandrasekher2022dual}, there exists monotonic and constant linear functionals $J_0:\mathbb{R}^\Pi\rightarrow\mathbb{R}$ and $J_F:\mathbb{R}^{F}\rightarrow\mathbb{R}$, such that for every $f=(f_1,f_2,f_3)\in\mathcal{F}$, 
\begin{enumerate}
    \item $\hat{U}(f)=J_F[u(f_1),u(f_2)]$ represents  $\succsim_{\bm{\mu}}^F$;
    \item $J_{\bm{\mu}}(v_1,v_2,v_3)=J_0[J_{F}(v_1,v_2),v_3]$.\footnote{Specifically, in the supplementary material of \cite{chandrasekher2022dual}, the existence of $J_0$ and that $J_0$ is monotonic and linear are established in the paragraph under expression (29) on page 6.}
\end{enumerate}
Moreover, since $J_{\bm{\mu}}$ is strongly monotonic, $J_0$ and $J_F$ are both strongly monotonic. Now let $c\in (0,1)$ be such that 
\begin{align} \label{B1eq1}
    J_{\bm{\mu}}(1,1,0)=J_{\bm{\mu}}(1,c,c),
\end{align} and construct $f^*,x^*\in\mathcal{F}$ such that $u_{f^*}=(1,1,0)$ and $u(x^*)=c$.\footnote{The existence of $c\in [0,1]$ is guaranteed by the continuity of $U_{\bm{\mu}}$ (and hence $J_{\bm{\mu}}$) and the intermediate value theorem. The strong monotonicity of $J_{\bm{\mu}}$ then implies $c\in (0,1)$.} Let $H=\{1\}$. Then by how $f^*$ and $x^*$ are constructed, (\ref{B1eq1}) implies that $f^*\sim_{\bm{\mu}} f^*Hx^*$. Therefore, by $STP$, $x^* H f^*\sim_{\bm{\mu}}x^*$, i.e. 
\begin{align} \label{B1eq2}
J_{\bm{\mu}}(c,1,0)=c.
\end{align}

To continue the proof, notice the fact that for any strongly monotonic and constant linear function $J^*:\mathbb{R}^2\rightarrow \mathbb{R}$, either $J^*(v_1,v_2)=\max_{t\in T}tv_1+(1-t)v_2$  for a closed set $T\subset (0,1)$, or
$J^*(v_1,v_2)=\min_{t\in T'}tv_1+(1-t)v_2$ for a closed set $T'\subset (0,1)$.  To avoid digression, I state the proof of this fact in footnote \ref{footnote17} below.\footnote{\label{footnote17}This fact is known in the literature, for example, see footnote 19 of \cite{chandrasekher2022dual}. Here, since $J^*(\cdot)$ is strongly monotone and constant linear, there exists $t,t'\in (0,1)$ such that 
$
J^*(v_1,v_2)=\begin{cases}
    t v_1+(1-t)v_2 \text{ if } v_1\geq v_2;\\
    t'v_1+ (1-t')v_2. \text{ if } v_1< v_2.
\end{cases}
$
Let $T^*=\{t,t'\}$.  If $t\geq t'$, then  $J^*(v_1,v_2)=\max_{t''\in T^*}t''v_1+(1-t'')v_2$ for every $v\in\mathbb{R}^2$; otherwise,  $J^*(v_1,v_2)=\min_{t''\in T^*}t''v_1+(1-t'')v_2$ for every $v\in\mathbb{R}^2$.}  Then, without loss of generality, assume $J_F(v_1,v_2)=\max_{t\in [\ub{t},\bar{t}]}tv_1+(1-t)v_2$, where $0<\ub{t}\leq \bar{t}<1$. Substitute this into (\ref{B1eq1}) and (\ref{B1eq2}), then we have (a) $J_0(1,0)=J_0(\bar{t}+(1-\bar{t})c,c)$ and (b) $J_0(\ub{t}c+(1-\ub{t}),0)=c$. Subtract (b) from (a) on both sides of the equation, we have $J_0(1,0)-J_0(\ub{t}c+(1-\ub{t}),0)=J_0(\bar{t}+(1-\bar{t})c,c)-c$. By constant linearity of $J_0$, the left-hand side of the equation is equal to $J_0[(1,0)-(\ub{t}c+(1-\ub{t}),0)]=J_0(\ub{t}(1-c),0)$, while the right-hand side of the equation is equal to $J_0(\bar{t}(1-c),0)$. As a result, $J_0(\ub{t}(1-c),0)=J_0(\bar{t}(1-c),0)$, which, by strong monotonicity of $J_0$, implies $\ub{t}=\bar{t}\equiv t^*$. Therefore, $J_F(v_1,v_2)=t^* v_1+(1-t^*)v_2$. 

Finally, notice that either of the following two cases is true: (i) there exists a closed set $S\subset (0,1)$, such that $J_0(w_1,w_2)=\min_{s\in S} sw_1+(1-s)w_2$, (ii) there exists a closed set $S'\subset (0,1)$, such that $J_0(w_1,w_2)=\max_{s\in S'} sw_1+(1-s)w_2$. Suppose without loss of generality that case (i) is true, which implies $\succsim_{\bm{\mu}}$ admits a maxmin expected utility representation. In this case, since $J_{\bm{\mu}}(v_1,v_2,v_3)=J_0[J_{F}(v_1,v_2),v_3]$, it holds that $J_{\bm{\mu}}(v)=\min_{p\in P^*} p\cdot v$, where $P^*=\left\{p=(st^*, s(1-t^*),1-s)\mid s\in S\right\}$. Now, relabel the states so that state $1$ in the analysis above is labeled as state $2$, state $2$ is labeled as state $3$, and state $3$ is labeled as state 1. Repeat the analysis above for the relabeled state space, and we can conclude that $J_{\bm{\mu}}(v)=\min_{p\in P^{**}}p\cdot v$, where $P^{**}=\{(1-s,st^{**},s(1-t^{**}))\mid s\in S'\}$ for some closed set $S'\subset (0,1)$ and $t^{**}\in (0,1)$. The uniqueness of maxmin expected utility representation implies $P^*=P^{**}$, which implies $p(\omega=1)/p(\omega=2)={t^*}/(1-t^*)$ and $p(\omega=2)/p(\omega=3)=t^{**}/(1-t^{**})$ for all $p\in P^*$. As a result, $P^*$ must be a singleton, and denote $P^*$ by $\{\mu_0\}$. It follows that $J_{\bm{\mu}}(v)=\mu_0 v$ and     $
    U_{\bm{\mu}}(f)=EU_{\mu_0}(f),
    $
which is the desired result.
\end{proof}

With the preparation of Lemma \ref{b1}, we are ready to prove Theorem 3.1. Suppose towards a contradiction that $\sigma$ is not a linear aggregation rule. Then the corresponding aggregation functional $I:\mathbb{R}^N\rightarrow \mathbb{R}$ is also non-linear. That is, there exists $a,b\in\mathbb{R}^N$ and $\beta\in (0,1)$ such that $I(\beta a+(1-\beta) b)\neq \beta I(a)+(1-\beta)I(b)$. Since $I$ is constant linear, without loss of generality we can assume $0< a,b< \frac{1}{2}1_{n}$.\footnote{Specifically, take a constant $c>\max_{i}\{|a_i|,|b_i|\}\in \mathbb{R}$. Then take $a'=(a+c1_n)/4c\in\mathbb{R}^N$ and $b'=(b+c1_n)/4c\in\mathbb{R}^N$; it is easy to verify that $0< a,b< \frac{1}{2}1_{n}$. By constant linearity we can conclude $I(\beta a'+(1-\beta)b')\neq \beta I(a')+(1-\beta)I(b')$.} Then we construct the belief profile $\bm{\mu}^*$ such that $\mu_J^*(\omega=1)=a_i>0$, $\mu_J^*(\omega=2)=b_i>0$ and $\mu_J^*(\omega=3)=1-a_i-b_i>0$ for every $i\in N$. Consider acts $f^*,g^*\in \mathcal{F}$ such that $u_{f^*}=(1,0,0)$ and $u_{g^*}=(0,1,0)$. By how $\bm{\mu}^*\in(\Delta\Omega)^n$ and $f^*,g^*\in\mathcal{F}$ are constructed, it holds that $\bm{\mu}^*\cdot u_{f^*}=a$ and $\bm{\mu}^*\cdot u_{g^*}=b$. Then 
\begin{align*}
U_{\bm{\mu}^*}(\beta f^*+(1-\beta)g^*)
&=
I(\beta \bm{\mu}^*\cdot u_{f^*}+(1-\beta)\bm{\mu}^*\cdot u_{g}^*)\\
&=I(\beta a+(1-\beta)b)\\
&\neq
\beta I(a)+(1-\beta)I(b)\\
&=\beta U_{\bm{\mu}^*}(f^*)+(1-\beta)U_{\bm{\mu}^*}(g^*).
\end{align*}
That is, the function $U_{\bm{\mu}^*}:\mathcal{F}\rightarrow\mathbb{R}$ is \textit{not} mixture linear. Therefore, it cannot be the case that $U_{\bm{\mu}^*}(f)=EU_{\mu_0}(f)$ for some $\mu_0\in\Delta\Omega$, which is contradictory to Lemma \ref{b1}.

\subsection{Proof for Theorem \ref{ambaverse}}\label{secb2}

The ``if'' side of Theorem \ref{ambaverse} is standard and I omit it. For the ``only if'' side, consider a regular aggregation rule such that $\succsim_{\bm{\mu}}$ is ambiguity averse for every $\bm{\mu}\in (\Delta\Omega)^n$. Suppose the aggregation rule satisfies Weak Commutativity, then according to Lemma \ref{a2} in Appendix \ref{seca}, there exists a monotonic and constant linear aggregation functional $I:\mathbb{R}^N\rightarrow \mathbb{R}$ such that for all $\bm{\mu}\in\Delta\Omega$, 
$
U_{\bm{\mu}}(f)=I(\bm{\mu}\cdot u_f).
$

\begin{lemma} \label{b2}
    The aggregation functional $I:\mathbb{R}^N\rightarrow \mathbb{R}$ corresponding to the aggregation rule is quasi-concave, that is, for every $a,b\in\mathbb{R}^N$ and $\alpha\in (0,1)$, if $I(a)=I(b)$, then $I(\alpha a+(1-\alpha)b)\geq I(a)$.
\end{lemma}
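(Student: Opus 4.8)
The plan is to translate the ambiguity aversion of each $\succsim_{\bm{\mu}}$ into quasi-concavity of $I$ through the representation $U_{\bm{\mu}}(f)=I(\bm{\mu}\cdot u_f)$ established in Lemma \ref{a2}. Fix $a,b\in\mathbb{R}^N$ with $I(a)=I(b)$ and $\alpha\in(0,1)$. The heart of the argument is to exhibit a \emph{single} suggestion profile $\bm{\mu}$ together with acts $f,g$ whose evaluation profiles are exactly $a$ and $b$, i.e. $\bm{\mu}\cdot u_f=a$ and $\bm{\mu}\cdot u_g=b$. Granting such a construction, $U_{\bm{\mu}}(f)=I(a)=I(b)=U_{\bm{\mu}}(g)$, so $f\sim_{\bm{\mu}}g$; ambiguity aversion then yields $\alpha f+(1-\alpha)g\succsim_{\bm{\mu}}g$, i.e. $U_{\bm{\mu}}(\alpha f+(1-\alpha)g)\ge U_{\bm{\mu}}(g)=I(a)$. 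Because $u$ is linear, $u_{\alpha f+(1-\alpha)g}=\alpha u_f+(1-\alpha)u_g$, and linearity of the inner product gives $\bm{\mu}\cdot u_{\alpha f+(1-\alpha)g}=\alpha a+(1-\alpha)b$; hence $U_{\bm{\mu}}(\alpha f+(1-\alpha)g)=I(\alpha a+(1-\alpha)b)$, and the displayed inequality becomes $I(\alpha a+(1-\alpha)b)\ge I(a)$, which is exactly the claim.

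The construction of $\bm{\mu},f,g$ is where $|\Omega|\ge 3$ is used, and I would first reduce to the case $a_i,b_i>0$ for all $i$ by exploiting the constant linearity from Lemma \ref{a3}. Concretely, for $c>0$ and $\beta\in(0,1)$ small the vectors $a'=\beta a+(1-\beta)c1_n$ and $b'=\beta b+(1-\beta)c1_n$ are coordinatewise positive, and constant linearity shows $I(a)=I(b)\iff I(a')=I(b')$ while $\alpha a'+(1-\alpha)b'=\beta\bigl(\alpha a+(1-\alpha)b\bigr)+(1-\beta)c1_n$; thus the quasi-concavity inequality for $a',b'$ is equivalent, after cancelling $(1-\beta)c$ and dividing by $\beta>0$, to the one for $a,b$. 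For strictly positive $a,b$, choose $K>\max_i(a_i+b_i)$ and set $\mu_i=\bigl(a_i/K,\,b_i/K,\,1-(a_i+b_i)/K,\,0,\dots,0\bigr)\in\Delta\Omega$, which is a valid distribution by the choice of $K$. Taking the utility acts $u_f=(K,0,0,\dots)$ and $u_g=(0,K,0,\dots)$, which are realizable since $u(X)$ is unbounded, a direct inner-product computation gives $\mu_i\cdot u_f=a_i$ and $\mu_i\cdot u_g=b_i$ for every $i$, as required.

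I expect the only genuine obstacle to be this joint realizability step: I must encode the two real vectors $a$ and $b$ as expected-utility profiles of two acts under one \emph{common} profile $\bm{\mu}$, rather than realizing a single vector as in Lemma \ref{a2}. This is precisely what forces the use of two ``independent'' states to carry $a$ and $b$ separately, plus a third state to absorb the probability slack, and what the constant-linearity normalization into the positive orthant makes feasible. Everything else, namely the passage $f\sim_{\bm{\mu}}g\Rightarrow\alpha f+(1-\alpha)g\succsim_{\bm{\mu}}g$ from ambiguity aversion and the linearity of $u$, is routine.
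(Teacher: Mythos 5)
Your proposal is correct and follows essentially the same route as the paper's proof: both reduce to the positive orthant via the constant linearity of $I$ (Lemma \ref{a3}), encode $a$ and $b$ as the evaluation profiles of two acts under a single suggestion profile using two states plus a third to absorb the probability slack, and then invoke ambiguity aversion at the indifference $f\sim_{\bm{\mu}}g$; your choice to scale the utility acts by $K$ rather than normalizing $a,b<\tfrac{1}{2}1_n$, and to argue directly rather than by contraposition, are only cosmetic differences.
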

\begin{proof}
    Suppose $I$ is not quasi-concave, that is, there exists $a^*,b^*\in\mathbb{R}^N$ and $\alpha\in (0,1)$ such that $I(a^*)=I(b^*)$ and $I(\alpha a^*+(1-\alpha)b^*)< I(a^*)$. Similar to the proof in Section \ref{secb1}, since $I$ is constant linear, without loss of generality we can assume $0<a^*, b^*<\frac{1}{2}1_n$. Then I construct the belief profile $\bm{\mu}^*$ such that $\mu_i^*(\omega=1)=a^*_i$, $\mu_i^*(\omega=2)=b^*_i$ and $\mu_i^*(\omega=3)=1-a^*_i-b^*_i$ for every $i\in N$. Consider acts $f^*,g^*\in \mathcal{F}$ such that $u_{f^*}=(1,0,0,...)$ and $u_{g^*}=(0,1,0,...)$. It holds that $\bm{\mu}^*\cdot u_{f^*}=a^*$ and $\bm{\mu}^*\cdot u_{g^*}=b^*$. Then 
    \begin{align*}
U_{\bm{\mu}^*}(\alpha f^*+(1-\alpha)g^*)
&=
I(\alpha \bm{\mu}^*\cdot u_{f^*}+(1-\alpha)\bm{\mu}^*\cdot u_{g^*})\\
&=I(\alpha a^*+(1-\alpha)b^*)\\
&<
I(a^*)=U_{\bm{\mu}}(f^*).
\end{align*}
Also, $U_{\bm{\mu}}(f^*)=I(a^*)=I(b^*)=U_{\bm{\mu}}(g^*)$.  As a result, $f^*\sim_{\bm{\mu}}g^*$ and $\alpha f^*+(1-\alpha)g^*\prec_{\bm{\mu}}f^*$. This is contradictory to the fact that $\succsim_{\bm{\mu}}$ satisfies ambiguity aversion. 
\end{proof}

As a result of Lemma \ref{b2}, $I:\mathbb{R}^N\rightarrow\mathbb{R}$ is a monotonic, constant linear and quasi-concave. The main result of \cite{gilboa1989maxmin} then implies the existence of a closed and convex set of weights $\Lambda$ such that for every $a\in \mathbb{R}^N$, $I(a)=\min_{\lambda\in \Lambda}\lambda\cdot a$. It follows that the aggregation rule is a multiple-weight rule. 

\subsection{Proof of Theorem \ref{thm41}}

The ``only if'' side is a corollary of Theorem \ref{thmdual}. Consider a regular aggregation rule that satisfies Weak Commutativity, then Theorem \ref{thmdual} implies that, for every $\bm{\mu}\in(\Delta\Omega)^n$, the utility function $U_{\bm{\mu}}(f)=\max_{\Lambda\in\bm{\Lambda}} \min_{\lambda\in\Lambda}\int_{\Omega} u\circ f d p_{\lambda}(\bm{\mu})$ for some weight-set collection $\bm{\Lambda}$. It is established by (\ref{dualutil}) that for every $\bm{\mu}\in (\Delta\Omega)^n$, the utility representation $U_{\bm{\mu}}$ can be rewritten as
$
U_{\bm{\mu}}(f)=\max_{P\in \mathbb{P}}\min_{p\in P} \int_{\Omega} u\circ f dp, 
$ where $\mathbb{P}=\{P_{\Lambda}(\bm{\mu})\mid \Lambda\in\bm{\Lambda}\}$, which is a DSEU representation. It follows from the main representation theorem of \cite{chandrasekher2022dual} that $\succsim_{\bm{\mu}}$ must satisfy C-Independence.

For the ``if'' side, assume that Weak Commutativity is violated, so the aggregation rule is not a dual-self aggregation rule. As a result, the aggregation functional $I:\mathbb{R}^N\rightarrow\mathbb{R}$ is not constant linear. Therefore, there exists $a\in\mathbb{R}^N$, $c\in\mathbb{R}$ and $\beta\in (0,1)$ such that $I(\beta a+(1-\beta) c1_{n})\neq \beta I(a)+(1-\beta)c$. Assume $\ub{a}<\bar{a}$.\footnote{If $\ub{a}=\bar{a}$, \textit{i.e.} $a=\ub{a}1_n$, then by Lemma \ref{a3}, $I(\ub{a} 1_n)=\ub{a}, I(c1_{n})=c$ and $I(\beta \ub{a}1_{n}+(1-\beta) c1_{n})=\beta \ub{a}+(1-\beta)c$, which contradicts to the assumption that  $I(\beta a+(1-\beta) c1_{n})\neq \beta I(a)+(1-\beta)c$.} I construct $\bm{\mu}^*\in (\Delta\Omega)^n$ such that $\mu^*_i=\left(\frac{a_i-\ub{a}}{\bar{a}-\ub{a}}, \frac{\bar{a}-a_i}{\bar{a}-\ub{a}}, 0,0,...\right)$ for all $i\in N$. I construct $f^*\in\mathcal{F}$ be such that $u_{f^*}=(\bar{a},\ub{a},0,0,...)$ and $x^*\in X$ such that $u(x^*)=c$. In this case, by how $\bm{\mu}^*$, $f^*$ and $x^*$ are constructed, $\bm{\mu}^*\cdot u_{f^*}=a$ and $\bm{\mu}^*\cdot u_{x^*}=c1_n$, so that $U_{\bm{\mu}^*}(f^*)=a$ and $U_{\bm{\mu}^*}(x^*)=c$. This leads to $U_{\bm{\mu}}(\beta f^*+(1-\beta)x^*)=I(\beta a+(1-\alpha)c1_{n})\neq \beta I(a)+(1-\beta)c=\beta U_{\bm{\mu}^*}(f^*)+(1-\beta)U_{\bm{\mu}^*}(x^*)$. Finally, let $y^*\in X$ be such that $U_{\bm{\mu}^*}(f^*)=u(y^*)$, then it follows $f^*\sim_{\bm{\mu}^*} y^*$ but $\beta f^*+(1-\beta)x^*\not{\sim}_{\bm{\mu}^*} \beta y^*+(1-\beta)x^*$, which implies $\succsim_{\bm{\mu}^*}$ violates C-Independence.

\subsection{Proof of Proposition \ref{prop41}}

Consider a regular aggregation rule that satisfies Pessimism to Update-then-Aggregate. Regularity implies the existence of an aggregation functional $I:\mathbb{R}^N\rightarrow \mathbb{R}$ such that $U_{\bm{\mu}}(f)=I(\bm{\mu}\cdot u_f)$ for every $\bm{\mu}\in (\Delta\Omega)^n$ and $f\in\mathcal{F}$. Now I want to show $I(\cdot)$ is concave, that is, for every $a,b\in\mathbb{R}^N$ and $\beta\in (0,1)$, $I(\beta a+(1-\beta) b)\geq \beta I(a)+(1-\beta) I(b)$. Recall that Proposition \ref{prop41} requires $|\Omega|\geq 4$. Take $G\subset \Omega$ and $H=G^c$ such that $\{1,2\}\subset G$ and $\{3,4\}\subset H$. 

\textbf{Case 1}: $\ub{a}<\bar{a}$ and $\ub{b}<\bar{b}$. In this case, I construct the acts $f^*$ and suggestion profile $\bm{\mu}^*$ such that $u_{f^*}=(\bar{a},\ub{a},\bar{b},\ub{b},0,...,0)$ and $\mu_i^*=\left(\beta\frac{a_i-\ub{a}}{\bar{a}-\ub{a}}, \beta\frac{\bar{a}-a_i}{\bar{a}-\ub{a}}, (1-\beta)\frac{b_i-\ub{b}}{\bar{b}-\ub{b}}, (1-\beta)\frac{\bar{b}-b_i}{\bar{b}-\ub{b}},0,...,0\right)$ for every $i\in N$. By how $\bm{\mu}^*$ and $f^*$ are constructed, $\bm{\mu}^*\cdot u_{f^*}=\beta a+(1-\beta)b$, ${\bm{\mu}^*}^G\cdot u_{f^*}=a$ and ${\bm{\mu}^*}^H\cdot u_{f^*}=b$. Now, take $x^*,y^*\in X$ such that $f^*\sim_{{\bm{\mu}^*}^G} x^*$ and $f^*\sim_{{\bm{\mu}^*}^H}y^*$, that is, $U_{{\bm{\mu}^*}^G}(f^*)=u(x^*)$ and $U_{{\bm{\mu}^*}^H}(f^*)=u(y^*)$. The existence of such $x^*$ and $y^*$ are guaranteed by the continuity of the preference relations. In this case, Pessimism to Update-then-Aggregate implies $f^*\succsim_{{\bm{\mu}^*}}^G x^*$ and  $f^*\succsim_{{\bm{\mu}^*}}^H y^*$. In this case, the definition of conditional preferences imply $f^*Gy^*\succsim_{\bm{\mu}^*}x^*Gy^*$ and $f^*\succsim_{\bm{\mu}^*}f^*Gy^*$. By transitivity, it holds that $f^*\succsim_{\bm{\mu}^*} x^* G y^*$, i.e. $U_{{\bm{\mu}}^*}(f^*)\geq \beta u(x^*)+(1-\beta) u(y^*)=\beta U_{{\bm{\mu}^*}^G}(f^*)+(1-\beta)U_{{\bm{\mu}^*}^H}(f^*)$. The last equality holds due to the definition of $x^*$ and $y^*$. By how $\bm{\mu}^*$ and $f^*$ are constructed, $U_{{\bm{\mu}}^*}(f^*)=I(\beta a+(1-\beta)b)$, $U_{{\bm{\mu}^*}^G}(f^*)=I(a)$ and $U_{{\bm{\mu}^*}^H}(f^*)=I(b)$. As a result, $I(\beta a+(1-\beta)b)\geq \beta I(a)+(1-\beta)I(b)$, which is the desired result. 

\textbf{Case 2}: $\ub{a}=\bar{a}$ and $\ub{b}<\bar{b}$. In this case, the $\bm{\mu}^*$ constructed above is not well-defined. Alternatively, I construct $\hat{f}\in\mathcal{F}$ and $\hat{\bm{\mu}}\in (\Delta\Omega)^n$ such that $\hat{f}=f^*$ and $$\hat{\mu}_i=\left(\beta, 0, (1-\beta)\frac{b_i-\ub{b}}{\bar{b}-\ub{b}},(1-\beta)\frac{\bar{b}-b_i}{\bar{b}-\ub{b}},0,...,0\right).$$
By how $\hat{\bm{\mu}}$ and $\hat{f}$ are constructed, $\hat{\bm{\mu}}\cdot u_{\hat{f}}=\beta a+(1-\beta)b$, $\hat{\bm{\mu}}^G\cdot u_{\hat{f}}=a$ and ${\hat{\bm{\mu}}}^H\cdot u_{\hat{f}}=b$. Repeat the analysis in Case 1 and we have $I(\beta a+(1-\beta)b)\geq \beta I(a)+(1-\beta)I(b)$. The situations where $\ub{b}=\bar{b}$ can be proved similarly. 

As a result of the analysis above, the aggregation functional $I:\mathbb{R}^N\rightarrow \mathbb{R}$ is concave. For every $\bm{\mu}\in (\Delta\Omega)^N$, every $f,g\in\mathcal{F}$ and $\alpha\in (0,1)$, if $f\sim_{\bm{\mu}}g$, then $I(\bm{\mu}\cdot u_f)=I(\bm{\mu}\cdot u_g)$, and the concavity of $I(\cdot)$ implies $I(\bm{\mu}\cdot u_{\alpha f+(1-\alpha)g})=I(\alpha \bm{\mu}\cdot u_{f}+(1-\alpha)\bm{\mu}\cdot u_g)\geq \alpha I(\bm{\mu}\cdot u_f)+(1-\alpha)I(\bm{\mu}\cdot u_g)=(\bm{\mu}\cdot u_f)$. Therefore, $\alpha f+(1-\alpha)g\succsim_{\bm{\mu}} f$ and the preference relation $\succsim_{\bm{\mu}}$ satisfies ambiguity aversion. 

\subsection{Proof of Corollary \ref{coro42}}

There are four statements in Corollary \ref{coro42}, and I want to show they are equivalent to each other. First, Theorem \ref{ambaverse} establishes the equivalence between statements 2 (\textit{Weak Commutativity} and \textit{Ambiguity Aversion}) and 3 (\textit{Multiple-weight aggregation rule}). Proposition \ref{prop41} above establishes that statement 1 (\textit{Weak Commutativity} and \textit{Pessimism to Update-then-Aggregate}) implies statement 2. Next, we show that statement 3 implies statement 1. That is, if $\sigma$ is a multiple-weight aggregation rule, then Weak Commutativity and Pessimism to Update-then-Aggregate hold. First, since $\sigma$ is a multiple-weight rule, it must be a dual-self aggregation rule. By Theorem \ref{thmdual}, Weak Commutativity must hold. Now it suffices to show that Pessimism to Update-then-Aggregate holds. Consider an arbitrary suggestion profile $\bm{\mu}\in (\Delta\Omega)^N$ and event $E\in\mathcal{E}(\bm{\mu})$ such that $\mu_1(E)=\mu_2(E)=...=\mu_n(E)=\alpha\in (0,1)$. I want to show that $\succsim_{\bm{\mu}^E}$ is more pessimistic than $\succsim_{\bm{\mu}}^E$.

By statement 3, $\sigma$ is a multiple-weight aggregation rule, therefore, there exists a closed and convex set of weights $\Lambda\subset \Delta N$ such that for every $\bm{\mu}\in (\Delta\Omega)^n$, it holds that $U_{\bm{\mu}}(f)=\min_{\lambda\in \Lambda} u\circ f dp_{\lambda}(\bm{\mu})$. As a result,  $U_{\bm{\mu}}(f)=\min_{p\in P_{\Lambda}(\bm{\mu})} u\circ f dp$ (recall from Section \ref{b1} that $P_{\Lambda}(\bm{\mu}) =\{p_{\lambda}(\bm{\mu}) \mid \lambda\in\Lambda\}$). Since $\mu_i(E)=\alpha$ for every $i\in N$, for every $p\in P_{\Lambda}(\bm{\mu})$, it holds that $p=\alpha p^E+(1-\alpha)p^{E^c}$. By the definition of $P_{\Lambda}(\cdot)$, we have $p^E\in P_{\Lambda}(\bm{\mu}^E)$ and $p^{E^c}\in P_{\Lambda}(\bm{\mu}^{E^c})$, as a result, $p\in \alpha P_{\Lambda}(\bm{\mu}^E)+(1-\alpha)P_{\Lambda}(\bm{\mu}^{E^c})$. Therefore, $P_{\Lambda}(\bm{\mu})\subset \alpha P_{\Lambda}(\bm{\mu}^E)+(1-\alpha)P_{\Lambda}(\bm{\mu}^{E^c})$. 

Now, take $f\in\mathcal{F}$, $x\in X$ and suppose $f\succsim_{\bm{\mu}^E}x$, that is, $\min_{p\in P_{\Lambda}(\bm{\mu}^E)}EU_{p}(f)\geq u(x)$. I want to show $f\succsim_{\bm{\mu}}^E x$. By the definition of conditional preferences, it suffices to show that, for every $g\in \mathcal{F}$, $fEg\succsim_{\bm{\mu}} xEg$, i.e. $\min_{p\in P_{\Lambda}(\bm{\mu})}EU_{p}[fEg]\geq \min_{p\in P_{\Lambda}(\bm{\mu})}EU_{p}[xEg]$. Notice that 
\begin{align*}
\min_{p\in P_{\Lambda}(\bm{\mu})}EU_{p}[fEg] &\geq \min_{p\in\alpha P_{\Lambda}(\bm{\mu}^E)+(1-\alpha)P_{\Lambda}(\bm{\mu}^{E^c})}EU_{p}[fEg]\\
&= \alpha \min_{p\in P_{\Lambda}(\bm{\mu}^E)}EU_{p}[fEg]+(1-\alpha)\min_{p\in P_{\Lambda}(\bm{\mu}^{E^c})}EU_{p}[fEg]\\
&=\alpha \min_{p\in P_{\Lambda}(\bm{\mu}^E)}EU_{p}[f]+(1-\alpha)\min_{p\in P_{\Lambda}(\bm{\mu}^{E^c})}EU_{p}[g]\\
&=\alpha u(x)+(1-\alpha)\min_{p\in P_{\Lambda}(\bm{\mu}^{E^c})}EU_{p}[g]
=\min_{p\in P_{\Lambda}(\bm{\mu})}[xEg].
\end{align*}
The first inequality above holds since $P_{\Lambda}(\bm{\mu})\subset \alpha P_{\Lambda}(\bm{\mu}^E)+(1-\alpha)P_{\Lambda}(\bm{\mu}^{E^c})$. As a result of the (in)equalities above, $fEg\succsim_{\bm{\mu}} xEg$. Since $g$ is arbitrarily chosen, it holds that $f\succsim_{\bm{\mu}}^E x$, the desired result. Since $f$ and $x$ are arbitrarily chosen, $\succsim_{\bm{\mu}^E}$ is more pessimistic than $\succsim_{\bm{\mu}}^E$. 

The analysis above establishes the equivalence between statements 1-3 in Corollary \ref{coro42}. Statement 4 states that $\succsim_{\bm{\mu}}$ has an MEU representation for every $\bm{\mu}\in (\Delta\Omega)^n$. Now, I show that statement 2 is equivalent to statement 4. Statement 2 asserts that Weak Commutativity holds and $\succsim_{\bm{\mu}}$ is ambiguity-averse for every $\bm{\mu}$. First, Weak Commutativity is equivalent to $\succsim_{\bm{\mu}}$ satisfies C-Independence by Theorem \ref{thm41}. As a result, Statement 2 is equivalent to that $\succsim_{\bm{\mu}}$ satisfies both C-Independence and ambiguity aversion for every $\bm{\mu}$. Since $\succsim_{\bm{\mu}}$ is a continuous and monotone weak order, C-Independence and ambiguity aversion are jointly equivalent to $\succsim_{\bm{\mu}}$ having an MEU representation, which is statement 4.

\subsection{Proof of Theorem \ref{thm42}} \label{b4}

For the ``only if'' side, consider a regular aggregation rule that satisfies Moderate Commutativity. I want to show that $\succsim_{\bm{\mu}}$ satisfies Independence for every $\bm{\mu}\in (\Delta\Omega)^n$.  First, I show that for every $a,b\in\mathbb{R}^N$ and $\beta\in (0,1)$, $I(\beta a+(1-\beta)b)=\beta I(a)+(1-\beta) I(b)$. Without loss of generality, assume $\ub{a}<\bar{a}$ and $\ub{b}<\bar{b}$.\footnote{Moderate Commutativity implies Weak Commutativity, which implies that $I:\mathbb{R}^N\rightarrow \mathbb{R}$ is constant linear. Since $I$ is constant linear, if $\ub{a}=\bar{a}$, then $a=\ub{a}1_n$, and $I(\beta a+(1-\beta)b)=\beta I(a)+(1-\beta)I(b)$ since $I:\mathbb{R}^N\rightarrow \mathbb{R}$. This is the desired result. The same analysis works for the case where $\ub{b}=\bar{b}$.} Take events $G,H$ such that $H=G^c$, $\{1,2\}\subset G$ and $\{3,4\}\subset H$. Construct $f^*$ such that $u_{f^*}=(\bar{a},\ub{a},\bar{b},\ub{b},0,...,0)$. I also construct $\bm{\mu}^*\in(\Delta\Omega)^n$ such that $\mu_i^*=\left(\beta\frac{a_i-\ub{a}}{\bar{a}-\ub{a}}, \beta\frac{\bar{a}-a_i}{\bar{a}-\ub{a}}, (1-\beta)\frac{b_i-\ub{b}}{\bar{b}-\ub{b}}, (1-\beta)\frac{\bar{b}-b_i}{\bar{b}-\ub{b}},0,...,0\right)$ for every $i\in N$. By how $\bm{\mu}^*$ and $f^*$ are constructed, $\bm{\mu}^*\cdot u_{f^*}=\beta a+(1-\beta)b$, ${\bm{\mu}^*}^G\cdot u_{f^*}=a$ and ${\bm{\mu}^*}^H\cdot u_{f^*}=b$. 

Construct $x^*,y^*\in X$ such that $u(x^*)=I(a)$ and $u(y^*)=I(b)$, then by the definition of the aggregation functional $I$, it holds that $f^*\sim_{{\bm{\mu}^*}^G}x^*$ and $f^*\sim_{{\bm{\mu}^*}^H}y^*$. By how $\bm{\mu}^*$ is constructed, it can be verified that $\mu_i(G)=\beta$ for every $i\in N$. Axiom \ref{moderatecom} (Moderate Commutativity) can therefore be applied for $\bm{\mu}^*$. As a result, $f^*\sim_{{\bm{\mu}^*}}^G x^*$ and $f^*\sim_{{\bm{\mu}^*}}^H y^*$. The definition of conditional preference implies $f^*\sim_{\bm{\mu}^*}x^* Gf \sim_{\bm{\mu}^*}x^* Gy^*$, i.e. $U_{\bm{\mu}^*}(f^*)=U_{\bm{\mu}^*}(x^*G y^*)=\beta u(x^*)+(1-\beta)u(y^*)=\beta I(a)+(1-\beta)I(b)$. Also, notice that the left-hand side of the equation is $U_{\bm{\mu}^*}(f^*)$, which, by the definition of the aggregation functional $I$, is equal to $I(\bm{\mu}^*)(f^*)=I(\beta a+(1-\beta)b)$. Therefore, $I(\beta a+(1-\beta)b)=\beta I(a)+(1-\beta)I(b)$, i.e. $I(\cdot)$ is a mixture linear functional. Since $I$ is also monotonic, there exists a weight $\lambda\in \Delta N$ such that $I(a)=\sum_{i=1}^n \lambda_i a_i$ for every $a\in \mathbb{R}^N$. It follows that $U_{\bm{\mu}}(f)= \int_{\Omega}u\circ f d p_{\lambda}(\bm{\mu})$ for every $\bm{\mu}\in (\Delta\Omega)^n$, which implies $\succsim_{\bm{\mu}}$ satisfying Independence. This completes the proof for the ``only if'' side.

For the ``if'' side, suppose that $\succsim_{\bm{\mu}}$ satisfies Independence for every $\bm{\mu}\in (\Delta\Omega)^n$. For every $a,b\in\mathbb{R}^N$ and $\beta\in (0,1)$, I construct $\bm{\mu}^{**}\in\Delta\Omega$ such that $\mu_i^{**}=\left(\frac{a_i-\ub{a}}{2(\bar{a}-\ub{a})}, \frac{\bar{a}-a_i}{2(\bar{a}-\ub{a})}, \frac{b_i-\ub{b}}{2(\bar{b}-\ub{b})}, \frac{\bar{b}-b_i}{2(\bar{b}-\ub{b})},0,...,0\right)$ for every $i\in N$. Take $f^{**}$ and $g^{**}$ such that $u_{f^{**}}=(2\bar{a},2\ub{a},0,0,0,...)$ and $u_{g^{**}}=(0,0,2\bar{b},2\ub{b},0,...)$. By the way how $\bm{\mu}^{**}$, $f^{**}$ and $g^{**}$ are constructed, it holds that $\bm{\mu}^{**}\cdot u_{f^{**}}=a$, $\bm{\mu}^{**}\cdot u_{g^{**}}=b$ and $\bm{\mu}^{**}\cdot u_{\beta f^{**}+(1-\beta)g^{**}}=\beta a+(1-\beta)b$. Since $\succsim_{\bm{\mu}^{**}}$ satisfies Independence (and all the other Anscombe-Aumann axioms), $U_{\bm{\mu}^{**}}(\cdot)$ is mixture linear, i.e. $\beta U_{\bm{\mu}^{**}}(f^{**})+(1-\beta) U_{\bm{\mu}^{**}}(g^{**})=U_{\bm{\mu}^{**}}(\beta f^{**}+(1-\beta) g^{**})$. As a result, $\beta I(a)+(1-\beta)I(b)=I(\beta a+(1-\beta)b)$. Since $a,b\in \mathbb{R}^N$ and $\beta\in (0,1)$ are arbitrarily chosen, $I:\mathbb{R}^N\rightarrow \mathbb{R}$ is mixture linear, i.e. there exists a weight $\lambda\in \Delta N$ such that $I(a)=\sum_{i=1}^n \lambda_i a_i$ for every $a\in\mathbb{R}^N$. For every $\bm{\mu}\in \Delta\Omega$ and $E\in\mathcal{E}(\bm{\mu})$ such that $\mu_1(E)=\mu_2(E)=...=\mu_n(E)=\alpha\in (0,1)$, it holds that $U_{\bm{\mu}}(f)=EU_{p_{\lambda}(\bm{\mu})}(f)$, where $p_{\lambda}(\bm{\mu})=\sum_{i=1}^n \lambda_i\mu_i$. This implies $\succsim_{\bm{\mu}}^E$ is represented by $EU_{[p_{\lambda}(\bm{\mu})]^E}(f)$. Since $\mu_i(E)=\alpha$ for every $i\in N$, for every $\omega\in E$, $[p_{\lambda}(\bm{\mu})]^E(\omega)=p_{\lambda}(\bm{\mu})(\omega)/\alpha=\sum_{i=1}^n\lambda_i \mu_i(\omega)/\alpha=\sum_{i=1}^n\lambda_i \mu_i^E(\omega)=p_{\lambda}(\bm{\mu}^E)(\omega)$. In conclusion, $[p_{\lambda}(\bm{\mu})]^E=p_{\lambda}(\bm{\mu}^E)$. herefore, $\succsim_{\bm{\mu}}^E$ and $\succsim_{\bm{\mu}^E}$ are represented by SEU representations with identical beliefs. That is, the two preferences are identical, and Moderate Commutativity holds.

\subsection{Proof of Corollary \ref{coro43}}

 First, the equivalence between statements 1 (\textit{Moderate Commutativity}) and 3 (\textit{Independence}) is established by Theorem \ref{thm42}. Second, the equivalence between statements 2 (\textit{Weak Commutativity} and \textit{STP}) and 4 (\textit{Linear Aggregation}) is established by Theorem \ref{ambfree}. Third, that statement 2 (\textit{Independence}) implies the mixture linearity of the aggregation functional $I:\mathbb{R}^N\rightarrow\mathbb{R}$ is established by the last paragraph in Section \ref{b4}, while the mixture linearity of $I$ implies statement 4 (\textit{Linear Aggregation}). Finally, statement 4 (\textit{Linear Aggregation}) implies the DM's preference $\succsim_{\bm{\mu}}$ always has a SEU representation, which implies statement 2 (\textit{Independence}). As a result, statements 1-4 are equivalent to each other.

\subsection{Proof of Theorem \ref{thmimposs}}

The ``if'' side of Theorem \ref{thmimposs} is trivial. For the ``only if'' side, consider a regular aggregation rule that satisfies Commutativity, which implies Moderate Commutativity. Therefore, by Theorem 4.2, there exists a weight $\lambda\in \Delta N$ such that for every $\bm{\mu}\in (\Delta\Omega)^n$ and $f\in \mathcal{F}$, 
$
U_{\bm{\mu}}(f)=EU_{p_{\lambda}(\bm{\mu})} (f),
$ 
where $p_{\lambda}(\bm{\mu})=\sum_{i=1}^n \lambda_i \mu_i$. Now, suppose toward a contradiction that the aggregation rule satisfies non-dictatorship, then without loss of generality, assume $\lambda_1>0$ and $\lambda_2>0$. Take $G\subset \Omega$ such that $\{1,2\}\subset G$. Construct $\bm{\mu}^*$ such that 
$$
\mu^*_i=\begin{cases}
(.1,0,.9,0,0,...)& \text{ if } i=1\\
 (0,1,0,0,...)& \text{ if } i\neq 1,
\end{cases}
$$
then 
$$
{\mu^*_i}^G=\begin{cases}
(1,0,0,0,0,...) &\text{ if } i=1\\
 (0,1,0,0,...)& \text{ if } i\neq 1,
\end{cases}
$$
By how ${\bm{\mu}}^*$ is constructed, according to the distribution $p_{\lambda}({\bm{\mu}^*}^G)$, the probability of $\omega=1$ is equal to $\lambda_1\in (0,1)$. On the other hand, according to the distribution $[p_{\lambda}(\bm{\mu}^*)]^G$, the probability of $\omega=1$ is equal to $\frac{.1\lambda_1}{1-.9\lambda_1}<\lambda_1$. As a result,  $p_{\lambda}({\bm{\mu}^*}^G)\neq [p_{\lambda}({\bm{\mu}^*})]^G$, which implies that $\succsim^G_{\bm{\mu}^*}$ is not equal to $\succsim_{{\bm{\mu}^*}^G}$, a violation of Commutativity. As a result, the aggregation rule must be a dictatorship, which is the desired result.

\section{Proof of Lemma \ref{A1}}
\label{appendixc}

Lemma \ref{A1} is closely related to the main result in \cite{hill2011unanimity}. Both results state that two unanimity conditions in the form of (\ref{condition1}) and (\ref{condition2}) imply the stronger unanimity condition in the form of (\ref{main}). The key difference is this paper does not put restrictions to the function $\varsigma(T,\cdot):\mathbb{R}^{l}\rightarrow \mathbb{R}$, while \cite{hill2011unanimity} consider the class of constant linear and concave $\varsigma(T,\cdot)$. To start the proof, let $\varsigma:\mathcal{T}\times\mathbb{R}^l\rightarrow \mathbb{R}$ be a function such that (\ref{condition1}) and (\ref{condition2}) are satisfied. That is, $T  v=T  w$ implies $\varsigma(T,v)=\varsigma(T,w)$ and $T  v=T'  v$ implies $\varsigma(T,v)=\varsigma(T',v)$. In order to show (\ref{main}), I want to show that for $T,T'\in\mathcal{T}$ and $v,w\in\mathbb{R}^l$ such that $T  v=T'  w$, it holds that $\varsigma(T,v)=\varsigma(T',w)$.

First, suppose $v= b1_{l}$ where $b$ is a real number. In that case, $Tv=T'v=b 1_n$, which, together with $Tv=T'w$, implies $Tv=T'v=T'w$. Applying (\ref{condition1}) and (\ref{condition2}) implies  $\varsigma(T,v)=\varsigma(T',v)=\varsigma(T',w)$, which is the desired result. In the rest of the proof, I assume $v$ and $w$ are not constant vectors, that is, $v,w\neq b1_l$ for every $b\in\mathbb{R}$. Recall that $\bar{v}\equiv \max_{1\leq j\leq l}v_j$ and $\ub{v}\equiv\min_{1\leq j\leq l}v_j$; since $v$ is non-constant, it must holds that $\bar{v}>\ub{v}$. The next lemma shows (\ref{main}) is satisfied if the vectors $v,w$ satisfy a specific restriction. For any $j=1,...,l$, denote $e_j$ as the $l$-dimensional vector with the $j$-th entry equal to 1 and all other entries equal to 0. 

\begin{lemma}
\label{lemmac2}
If for some $j, j'\in \{1,...,l\}, k,k'>0$ and $b,b'\in\mathbb{R}$, it holds that $v=k e_{j}+b 1_{l}$, $w=k' e_{j'}+b' 1_{l}$ , then 
\begin{align*}
T  v=T'  w\implies \varsigma(T,v)=\varsigma(T',w).
\end{align*}
\end{lemma}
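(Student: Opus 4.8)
The plan is to build an explicit finite chain of moves, alternating (\ref{condition1}) and (\ref{condition2}), that links $(T,v)$ to $(T',w)$. The guiding observation is that by (\ref{condition2}) the value $\varsigma(T,v)$ depends on the matrix only through the product $Tv$, and that \emph{every} application of (\ref{condition1}) or (\ref{condition2}) preserves the product; since by hypothesis $Tv=T'w=:y$, any such chain stays inside the set of pairs whose product equals $y$. Writing $c:=(y-b1_n)/k$ and $c':=(y-b'1_n)/k'$, note $c=T_{\cdot j}\in[0,1]^n$ and $c'=T'_{\cdot j'}\in[0,1]^n$, and that the feasibility of $y$ (i.e. $y_i\in[b,b+k]\cap[b',b'+k']$) forces $\max(b,b')\le\min(b+k,b'+k')$, hence $b\le b'+k'$ and $b'\le b+k$.

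The naive attempt is a single bridge matrix $\hat T\in\mathcal{T}$ with $\hat Tv=\hat Tw=y$, which would give $\varsigma(T,v)=\varsigma(\hat T,v)=\varsigma(\hat T,w)=\varsigma(T',w)$ via (\ref{condition2}), (\ref{condition1}), (\ref{condition2}). But $\hat Tv=y$ forces column $j$ of $\hat T$ to equal $c$ and $\hat Tw=y$ forces column $j'$ to equal $c'$, so when $j\neq j'$ row stochasticity demands $c_i+c'_i\le1$ for every $i$; this can fail (e.g. $v=e_j$, $w=e_{j'}$, $y_i>1/2$). \textbf{This stochasticity obstruction is the crux of the argument}, and overcoming it is what forces the use of an auxiliary vector rather than a one-step bridge.

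The resolution is to route through an intermediate vector $z^{*}\in\mathbb{R}^{l}$, exploiting that $z^{*}$ need not be stochastic or lie in $[0,1]^l$ --- only the matrices must be. Assume first $j\neq j'$; since $l\ge3$ pick $h\notin\{j,j'\}$ and set $z^{*}_j=b+k$, $z^{*}_{j'}=b'+k'$, $z^{*}_h\le\min(b,b')$, remaining coordinates arbitrary. I would then exhibit $S_1,S_2\in\mathcal{T}$ with column $j$ of $S_1$ equal to $c$ and column $j'$ of $S_2$ equal to $c'$, and verify
\begin{align*}
\varsigma(T,v)\overset{(\ref{condition2})}{=}\varsigma(S_1,v)\overset{(\ref{condition1})}{=}\varsigma(S_1,z^{*})\overset{(\ref{condition2})}{=}\varsigma(S_2,z^{*})\overset{(\ref{condition1})}{=}\varsigma(S_2,w)\overset{(\ref{condition2})}{=}\varsigma(T',w).
\end{align*}
The only nontrivial requirement is $S_1z^{*}=y=S_2z^{*}$. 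For $S_1$, row $i$ reads $c_i z^{*}_j+\sum_{m\neq j}(S_1)_{im}z^{*}_m=y_i$; substituting $z^{*}_j=b+k$ and $c_i=(y_i-b)/k$, a short cancellation collapses this to $\sum_{m\neq j}(S_1)_{im}z^{*}_m=b(1-c_i)$, so the off-$j$ entries (which sum to $1-c_i$) must average to the \emph{constant} $b$. Since $z^{*}_h\le b\le b'+k'=z^{*}_{j'}$, the value $b$ is a convex combination of $z^{*}_{j'}$ and $z^{*}_h$, so such nonnegative entries exist; the symmetric computation for $S_2$ needs the off-$j'$ entries to average to $b'$, which lies between $z^{*}_h$ and $z^{*}_j=b+k$. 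Rows with $c_i=1$ (resp. $c'_i=1$) force $y_i=b+k=z^{*}_j$ (resp. $y_i=b'+k'$) and hold automatically.

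Finally, the case $j=j'$ reduces to the previous one: I would pick a two-valued $\tilde v=\tilde k e_{h}+\tilde b1_l$ supported at a fresh coordinate $h\neq j$ with $\tilde b\le\ub{y}$ and $\tilde b+\tilde k\ge\bar{y}$, so that some $\tilde T\in\mathcal{T}$ satisfies $\tilde T\tilde v=y$, and then apply the $j\neq j'$ construction twice, to $(v,\tilde v)$ and to $(\tilde v,w)$, chaining $\varsigma(T,v)=\varsigma(\tilde T,\tilde v)=\varsigma(T',w)$. The degenerate possibility that $y$ is constant needs no separate treatment, since the cancellation above never used non-constancy of $y$.
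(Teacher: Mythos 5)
Your proof is correct, and it shares the skeleton of the paper's argument --- a finite chain alternating (\ref{condition1}) and (\ref{condition2}) that meets at a bridge, reliance on $l\geq 3$ through a third coordinate, and the same reduction of the case $j=j'$ to the case $j\neq j'$ via an auxiliary two-valued vector supported at a fresh coordinate --- but the bridge construction itself is genuinely different. You correctly isolate the stochasticity obstruction (a single bridge matrix would need column $j$ equal to $c$ and column $j'$ equal to $c'$ simultaneously, and $c_i+c'_i\leq 1$ can fail) and resolve it by routing through a three-valued intermediate vector $z^*$ with $z^*_j=b+k$, $z^*_{j'}=b'+k'$, $z^*_h\leq\min(b,b')$, building $S_1,S_2$ whose off-column mass averages to the constants $b$ and $b'$ respectively; this requires the feasibility inequalities $b\leq b'+k'$ and $b'\leq b+k$, which you correctly extract from $y_i\in[b,b+k]\cap[b',b'+k']$, and your handling of the degenerate rows $c_i=1$ and of the case $j=j'$ is sound (note $z^*_h\leq b'<b'+k'=z^*_{j'}$, so the weight $\theta$ is well defined). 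The paper dissolves the same obstruction by scaling instead: it keeps two-valued vectors throughout, temporarily duplicates the active value across coordinates $j$ and $j'$, and replaces the active column by its half while doubling $k$, so that the bridge matrix with columns $T_j/2$, $T'_{j'}/2$, and $1_n-T_j/2-T'_{j'}/2$ is \emph{automatically} row-stochastic (since $T_j,T'_{j'}\leq 1_n$ entrywise), with no feasibility analysis needed. Your route costs somewhat more bookkeeping but makes explicit why a bridge exists and preserves the original columns $c,c'$ unscaled; the paper's halving trick is shorter and constraint-free. Both arguments are complete.
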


\begin{proof}

Before starting the proof, I first introduce some notations. Recall that I use $T_{n\times l}$ to define a (real-valued) matrix with $n$ rows and $l$ columns. For $j=1,2,...,l$, I use $T_j$ to denote the $j$-th column of $T_{n\times l}$, and $T$ can be written as 
$$
T=\begin{pmatrix}
    T_1& T_2&...&T_l
\end{pmatrix}.
$$
I use $\bm{0}_{n'\times l'}$ to denote the $n'\times l'$ matrix with all the entries being zero. Next, I prove Lemma \ref{lemmac2} in the following two cases. 

\textbf{Case 1}: $j\neq j'$: Without loss of generality, assume $j=1$ and $j'=2$. That is $v=ke_1+b1_l$ and $w=k'e_2+b'1_l$. In this case,
\begin{align}
 T  v=   T  (ke_1+b1_l)&= \begin{pmatrix} T_1 &
    0& 1_{n}-T_1 & \bm{0}_{n\times (l-3)}
    \end{pmatrix}   (ke_1+b1_l)\label{aa}\\
    &= \begin{pmatrix} T_1 &
    0& 1_{n}-T_1 & \bm{0}_{n\times (l-3)}
    \end{pmatrix}   (ke_1+ke_2+b1_l)\\
    &= \begin{pmatrix} \frac{T_1}{2} & \frac{T_1}{2} & 1_{n}-T_1 & \bm{0}_{n\times (l-3)}
    \end{pmatrix}   (ke_1+ke_2+b1_l)\\
    &=\begin{pmatrix} \frac{T_1}{2} & \frac{T_1}{2} & 1_{n}-T_1 & \bm{0}_{n\times (l-3)}
    \end{pmatrix}   (2 ke_1+b1_l) \label{aaa}\\
    & \label{aaaa}=\begin{pmatrix} \frac{T_1}{2} & \frac{T'_2}{2} & 1_{n}-\frac{T_1}{2}-\frac{T'_2}{2} & \bm{0}_{n\times (l-3)}
    \end{pmatrix}   (2 ke_1+b1_l). 
\end{align}

Substitute $T, k, b$ by $T',k',b'$, and the above chain of equations can be replicated, which generates equation $(\ref{ab})$ below. The assumption that $Tv=T'w$, (\ref{aaaa}), and (\ref{ab}) jointly imply equation $(\ref{ac})$.
\begin{align}
T'  w=T'  (k'e_2+b'1_l)=
&\begin{pmatrix} \frac{T_1}{2} & \frac{T'_2}{2} & 1_{n}-\frac{T_1}{2}-\frac{T'_2}{2} & \bm{0}_{n\times (l-3)}
    \end{pmatrix}   (2 k'e_2+b'1_l)\label{ab}\\
    =
&\begin{pmatrix} \frac{T_1}{2} & \frac{T'_2}{2} & 1_{n}-\frac{T_1}{2}-\frac{T'_2}{2} & \bm{0}_{n\times (l-3)}
    \end{pmatrix}   (2 ke_1+b 1_l). \label{ac}
\end{align}
Apply conditions (\ref{condition1}), (\ref{condition2}) alternatively to (\ref{aa}) - (\ref{ac}), and it follows that 
\begin{align}
\varsigma(T,v)=&\varsigma
\left(\begin{pmatrix} \frac{T_1}{2} & \frac{T'_2}{2} & 1_{n}-\frac{T_1}{2}-\frac{T'_2}{2} & \bm{0}_{n\times (l-3)}
    \end{pmatrix}, 2ke_1+b1_l\right)=\varsigma(T',w),
\end{align}
which completes the proof for Case 1.

\textbf{Case 2}: $j= j'$. Without loss of generality, assume $j=j'=1$. Let $z=ke_2+b1_l$ and $T''=\begin{pmatrix} 
T_2 & T_1 & T_3 & ... & T_l
\end{pmatrix}$. It follows that $Tv=T'w=T''z$. By \textbf{Case 1}, it holds that
$
\varsigma(T,v)=\varsigma(T'',z) \text{ and } \varsigma(T',w)=\varsigma(T'',z)
$. As a result, $\varsigma(T,v)=\varsigma(T',w)$.
\end{proof}
With the preparation of Lemma \ref{lemmac2}, we are ready to prove Lemma \ref{A1}. Fix $v,w$ that are not constant vectors. Suppose $T  v=T'  w$ and I want to show $\varsigma(T,v)=\varsigma(T',w)$. Assume $v_{j_1}=\bar{v}$ and $v_{j_2}=\ub{v}$. Define $T^a\in\mathcal{T}$ such that the $j_1$-th column is $T_{j_1}^a=\frac{T  v-\ub{v} 1_n}{\bar{v}-\ub{v}}$, the $j_2$-th column is $T_{j_2}^a=1_n-T_1^a$, and the remaining columns are zeros.\footnote{Notice that $T v-\ub{v}1_n=T v-\ub{v} T 1_l=T  (v-\ub{v}1_l)\geq 0$. Also, $1_n-[(Tv-\ub{v}1_n)/(\bar{v}-\ub{v})]\geq 1_n-[(T\bar{v}1_l-\ub{v}1_n)/(\bar{v}-\ub{v})]=0$. As a result, all the entries of $T^a$ are indeed (weakly) positive.} By the definition of $T^a$, both $T^a  v$ and $T^{a}  ((\bar{v}-\ub{v})e_{j_1}+\ub{v}1_l)$ are equal to
\begin{align*}
T^a_{j_1}\cdot \bar{v}+T^a_{j_2}\cdot \ub{v}=\frac{T  v-\ub{v} 1_n}{\bar{v}-\ub{v}}\cdot \bar{v}+\left(1_n-\frac{T  v-\ub{v} 1_n}{\bar{v}-\ub{v}}\right)\cdot \ub{v}=Tv.
\end{align*}
That is, $
T  v=T^a  v=T^{a}  ((\bar{v}-\ub{v})e_{j_1}+\ub{v}1_l)
$. As a result, 
\begin{align}
\label{ba}
\varsigma(T,v)=\varsigma(T^a,v)=\varsigma(T^a,(\bar{v}-\ub{v})e_{j_1}+\ub{v}1_l).
\end{align}

Similarly, assume $w_{j_3}=\bar{w}$ and $w_{j_4}=\ub{w}$. Define $T^b\in\mathcal{T}$ such that the $j_3$-th column is $T_{j_3}^b=\frac{T'w-\ub{w}1_n}{\bar{w}-\ub{w}}$, the $j_4$-th column is $T^b_{j_4}=1_n-T_1^a$, and the remaining columns are all zeros. It holds that 
$
T'w=T^bw=T^b((\bar{w}-\ub{w})e_{j_3}+\ub{w}1_l)
$. As a result, 
\begin{align}
\label{bb}
\varsigma(T',w)=\varsigma(T^b,w)=\varsigma(T^b,(\bar{w}-\ub{w})e_{j_3}+\ub{w}1_l).
\end{align}

Notice that $Tv=T'w$, which implies $T^a\cdot [(\bar{v}-\ub{v})e_{j_1}+\ub{v}1_l]=T^b \cdot [(\bar{w}-\ub{w})e_{j_3}+\ub{w}1_l]$. By Lemma \ref{lemmac2}, it holds that $\varsigma(T^a,(\bar{v}-\ub{v})e_{j_1}+\ub{v}1_l)=\varsigma(T^b,(\bar{w}-\ub{w})e_{j_3}+\ub{w}1_l)$, which, by  (\ref{ba}) and (\ref{bb}), implies $\varsigma(T,v)=\varsigma(T',w)$. This finishes the proof for Lemma \ref{A1}. 

\end{appendix}
\end{document}